 \definecolor{orange}{RGB}{230,170,120}
  \definecolor{green}{RGB}{120,200,120}
\newcommand{\ra}[1]{\renewcommand{\arraystretch}{#1}} 
\def\addlegendimage{\csname pgfplots@addlegendimage\endcsname}
\definecolor {processblue}{cmyk}{0.96,0,0,0}
 \theoremstyle{plain}
 \newtheorem{thm}{Theorem}[section]
 \newtheorem{lem}[thm]{Lemma}
 \newtheorem{prop}[thm]{Proposition}
 \newtheorem{cor}{Corollary}
 \theoremstyle{definition}
 \newtheorem{defn}{Definition}[section]
 \newtheorem{exmp}{Example}[section]
  \newtheorem{ass}{Assumption}[section]
 \theoremstyle{definition}
 \newtheorem{rem}{Remark}
 \def\BState{\State\hskip-\ALG@thistlm}
\def\spacingset#1{\renewcommand{\baselinestretch}%
{#1}\small\normalsize} \spacingset{1}
\definecolor{coquelicot}{rgb}{1.0, 0.22, 0.0}
\title{ Fair Policy Targeting \footnote{We thank Graham Elliott, James Fowler, Ashesh Rambachan, Yixiao Sun and Kaspar W\"{u}thrich for helpful comments. All mistakes are our own.}  }
\author{ Davide Viviano\footnote{Corresponding author. Department of Economics, University of California at San Diego, La Jolla, CA, 92093. Email: dviviano@ucsd.edu.} $\quad$ Jelena Bradic\footnote{Department of  Mathematics and Halicio\u{g}lu Data Science Institute, 
        University of California at San Diego, La Jolla, CA, 92093. Email: jbradic@ucsd.edu.} }
\date{First Version: May, 2020 \\ This Version: June, 2022 }
\begin{document}
\maketitle
\spacingset{1.5}
\begin{abstract}
One of the major concerns of targeting interventions on individuals in social welfare programs is discrimination: individualized treatments may induce disparities across sensitive attributes such as age, gender, or race.
This paper addresses the question of the design of fair and efficient treatment allocation rules. We adopt the non-maleficence perspective of ``first do no harm'':  we select the fairest allocation \textit{within} the Pareto frontier. We cast the optimization into a mixed-integer linear program formulation, which can be solved using off-the-shelf algorithms.  We derive regret bounds on the unfairness of the estimated policy function and small sample guarantees on the Pareto frontier under general notions of fairness. Finally, we illustrate our method using an application from education economics.  
\end{abstract}


\noindent%
{\it Keywords:} Fairness, Causal Inference, Welfare Programs, Pareto Optimal, Treatment Rules. 
\vfill

\newpage

\section{Introduction} 

Heterogeneity in treatment effects, widely documented in social sciences, motivates treatment allocation rules that assign treatments to individuals differently, based on observable characteristics \citep{murphy2003optimal, manski2004}. 
 However, targeting individuals may induce disparities across sensitive attributes, such as age, gender, or race.
 Motivated by evidence for policymakers' preferences towards non-discriminatory actions \citep{cowgill2019economics},
this paper designs fair and efficient targeting rules for applications in welfare and health programs. We construct treatment allocation rules using data from experiments or quasi-experiments, and we develop policies that trade-off efficiency and fairness.

Fair targeting is a controversial task due to the lack of consensus on the formulation of the decision problem.
 Conventional approaches mostly developed in computer science consist in designing algorithmic decisions that maximize the \textit{expected} utility across all individuals by imposing fairness constraints on the decision space of the policymaker \citep{nabi2019learning}.\footnote{For a review, the reader may refer to \cite{corbett2018measure}. Further discussion on the related literature is contained in Section \ref{sec:literature}.} In contrast, the economic literature has outlined the importance of taking into account the welfare-effects of such policies \citep{kleinberg2018algorithmic}. Fairness constraints on the policymaker's decision space may ultimately lead to sub-optimal welfare for \text{both} sensitive groups. 
This is a significant limitation when the policymakers are concerned with the effects of their decisions on each individual's utilities: absent of legal constraints, we may not want to impose unnecessary constraints on the policy if such constraints are \textit{harmful} for some or all individuals.

This paper advocates for fair and Pareto optimal treatment rules. We discuss targeting in a setting where decision-makers prefer allocations for which we cannot find any other policy that strictly improves welfare for one of the two sensitive groups without decreasing welfare on the opposite group. Within such a set, she then chooses the fairest allocation. The decision problem is conceived for applications in social welfare and health programs and motivated by the Hippocratic notion of ``first do no harm'' ( ``\textit{primum non nocere}") \citep{rotblat1999hippocratic}: instead of imposing possibly harmful fairness constraints on the decision space, we restrict the set of admissible solutions to the Pareto optimal set, and among such, we choose the fairest one.
For example, during a health-program campaign, the policy-makers may not be willing to decrease all individuals' health status to gain fairness. Instead, they may be willing to \textit{trade-off} health-status of different groups (e.g., young and old individuals) when considering fairness. 
Our framework has three desirable properties: (i) it applies to general notions of fairness which may reflect different decision makers' preferences;
(ii) it guarantees Pareto efficiency of the policy-function, with the relative importance of each group solely chosen based on the notion of fairness adopted by the decision-maker; (iii) it also allows for arbitrary legal or ethical constraints, incorporating as a special case the presence of fairness constraints whenever such constraints are \textit{binding} due to ethical or legal considerations.\footnote{For binding fairness constraints, our proposed policy achieves a lower unfairness compared to the policy that maximizes welfare under fairness constraints while being Pareto optimal. See Section \ref{sec:decision} for details. }    
We name our method Fair Policy Targeting.


This paper contributes to the statistical treatment choice literature by introducing the notion, estimation procedures, and studying properties of Pareto optimal and fair treatment allocation rules. We allow for general notions of fairness, and as a contribution of independent interest, we define envy-freeness fairness \citep{varian1976two} for policy targeting.

The decision problem consists of lexicographic preferences of the policymaker of the following form: (i) Pareto dominant allocations are preferred over dominated ones; (ii) Pareto optimal allocations are ranked based on fairness considerations. We identify the Pareto frontier as the set of maximizers over any weighted average of each group's welfares.
Therefore, such an approach embeds as a special case maximizing a weighted combination of welfares of each sensitive group such as in \cite{athey2017efficient}, \cite{KitagawaTetenov_EMCA2018}\footnote{Under the utilitarian perspective considered in \cite{KitagawaTetenov_EMCA2018}, \cite{athey2017efficient}, the welfare maximization problem is equivalent to maximizing a weighted combination of the welfare of different groups with weights equal to corresponding probabilities. See Section \ref{sec:2} for more discussion.}, and in \cite{rambachan2020economic}. The above references take a specific weighted combination of welfares with weights as given, while in our case, weights are part of the decision problem and directly selected to maximize fairness. This has important practical implications: our procedure is solely based on the notion of fairness adopted by the social planner, and it does not require specific importance weights assigned to each sensitive group, which would be hard to justify to the general public.

Estimating the set of Pareto optimal allocations represents a fundamental challenge since (i) the set consists of maximizers over a continuum of weights between zero and one; (ii) each maximizer of the welfare (or a weighted combination of welfares) is often not unique \citep{elliott2013predicting}. To overcome these issues, we show that the Pareto frontier can be approximated using simple linear constraints.  We use a discretization argument, and we evaluate weighted combinations of the objective functions separately to construct a polyhedron that contains Pareto allocations. Our approach drastically simplifies the optimization algorithm: instead of estimating the entire set of Pareto allocations, we maximize fairness under easy-to-implement linear constraints. Our theorems show that the distance between the Pareto frontier obtained via linear constraints and its population counterpart converges uniformly to zero at rate $1/\sqrt{n}$.

We study regret guarantees, i.e., the difference between the estimated policy function's expected unfairness against the minimal possible unfairness achieved by Pareto optimal allocations. We characterize the rate under high-level conditions for general notions of unfairness and derive upper bounds that scale at rate $1/\sqrt{n}$, in several examples, and a lower bound that matches the same rate. 
An application and a calibrated numerical study on targeting student awards illustrate the advantages of the proposed method compared to alternatives that ignore Pareto optimality.

The remainder of this paper is organized as follows: we provide a brief overview of the literature in the following section; we introduce the decision problem in Section \ref{sec:2};  Section \ref{sec:3} discusses estimation; Section \ref{sec:4} contains the theoretical analysis; Section \ref{sec:counterfactual} discusses counterfactual notions of fairness; Section \ref{sec:app} discusses an empirical application and numerical studies, and Section \ref{sec:conclusion} concludes. Derivations and extensions are in the Appendix.

\subsection{Related Literature} \label{sec:literature}

This paper relates to a growing literature on statistical treatment rules  \citep{sun2020empirical, manski2004, athey2017efficient, armstrong2015inference, bhattacharya2012inferring, hirano2009asymptotics, KitagawaTetenov_EMCA2018, kitagawa2017equality, mbakop2016model, stoye2012minimax, tetenov2012statistical, viviano2019policy, zhou2018offline}. Further connections are also related to the literature on classification \citep{elliott2013predicting}. However, none of these discuss the design of fair and Pareto optimal decisions.

Fairness is a rising concern in economics, see \cite{cowgill2019economics},
 \cite{kleinberg2018algorithmic}, \cite{rambachan2020economic}. The authors provide economic insights on the characteristics of optimal decision rules when discrimination bias occurs. Here, we answer the different questions of the design and estimation of the optimal targeting rule within a statistical framework and derive the method's properties. A further difference is the decision problem with a multi-objective, instead of a single-objective utility function, as in previous references.
 Additional references include \cite{kasy2020} that provide comparative statics on the impact of fairness on the individuals' welfare, focusing on the analysis of algorithms, and \cite{narita2021incorporating} who motivates fairness based on incentive compatibility in the different context of the design of experiments. 

In computer science, Pareto optimality has been considered in the context of binary predictions by \cite{balashankar2019fair} and \cite{martinez2019fairness}. The authors propose semi-heuristic and computationally intensive procedures for estimating Pareto efficient classifiers. \cite{xiao2017fairness} discuss the different problem of estimation of a Pareto allocation that trade-offs fairness and individual utilities for recommender systems, where the relative importance weights of the different objectives are selected a-priori. These references do not address the treatment choice problem discussed in the current paper.  

References in computer science include  \cite{chouldechova2017fair}, \cite{dwork2012fairness}, \cite{hardt2016equality} among others. \cite{corbett2018measure} contain a review. Additional work also includes \cite{liu2017calibrated} who discuss fair bandits, and \cite{ustun2019fairness} who propose decoupled estimation of tree classifiers without allowing for exogenous (legal or economic) constraints on the policy space. While the above references address the decision problem as a prediction problem, several papers discuss algorithmic fairness within a causal framework \citep{coston2020counterfactual, kilbertus2017avoiding,  nabi2019learning, kusner2019making}. All such papers estimate decision rules under fairness constraints without discussing Pareto optimality. The different decision problem considered here is motivated by applications in social welfare and health programs. When not binding on policy-makers decisions, fairness constraints may lead to Pareto-dominated allocations and possibly harmful policies for advantaged and disadvantaged individuals. When fairness constraints are binding, instead, the decision problem proposed in this paper leads to fairer allocations compared to a constrained welfare maximization problem while not being Pareto dominated.

\section{Decision Making and Fairness} \label{sec:2}

 We start by introducing some notation. 
For each unit, we denote with $S \in \mathcal{S}$ a sensitive or protected attribute.   For expositional convenience, we let $\mathcal{S} = \{0,1\}$, with $S = 1$ denoting the disadvantaged group, and $X \in \mathcal{X} \subseteq \mathbb{R}^p$ individual characteristics. We define the post-treatment outcome with  $Y \in \mathcal{Y} \subseteq \mathbb{R}$ realized only once the sensitive attribute, covariates, and the treatment assignment are realized. We define $Y(d)$, $d \in \{ 0,1\}$ the potential outcomes under treatment $d$. The observed $Y$ satisfies the  
  Single Unit Treatment Value Assumption (SUTVA)  \citep{rubin1990formal}. 
Let
\begin{equation} \label{eqn:propensity} 
e(x, s) = P(D = 1 | X = x, S = s), \quad p_1 = P(S = 1)
\end{equation} 
be the propensity score and the probability of being assigned to the disadvantaged group. Here, treatments are independent of potential outcomes. 
\begin{ass}[Treatment Unconfoundedness] \label{ass:unconf} For $d \in \{0,1\}$,
$Y(d) \perp D | X, S$.  
\end{ass}



\subsection{Social Welfare}

Given observables, $(Y_i,X_i,D_i,S_i)$ we seek to design a treatment assignment rule (i.e. policy function)
$
\pi:  \mathcal{X} \times \mathcal{S} \mapsto \mathcal{T} \subseteq [0,1], \pi \in \Pi
$
that depends on the individual characteristics and protected attributes, and which can be either probabilistic or deterministic.\footnote{It is deterministic if $\mathcal{T} = \{0,1\}$ and probabilistic if $\mathcal{T} = [0,1]$.} Here, $\Pi$ incorporates given and binding legal or economic constraints that restrict the decision space. The welfare generated by a policy $\pi$ on those individuals with sensitive attribute $S = s$ is defined as\footnote{Welfare is interpreted from an \textit{intention-to-treat} perspective similarly to \cite{KitagawaTetenov_EMCA2018}, \cite{athey2017efficient}.} 
\vspace{-4mm}
\begin{equation} \label{eqn:welfare} 
W_s(\pi) = \mathbb{E}\Big[ (Y(1) - Y(0))\pi(X, S)\Big| S = s\Big].  
\end{equation} 

 Under the utilitarian perspective \citep{manski2004}, the welfare maximization problem, i.e., the population counterpart of the empirical welfare maximization (EWM) \citep{KitagawaTetenov_EMCA2018}, solves 
 \vspace{-6mm}
 \[
\max _{\pi \in \Pi} \Big\{ p_1  W_1(\pi) + (1-p_1)  W_0(\pi)\Big \}
 \]
 where $p_1$ is defined as in Equation \eqref{eqn:propensity}. However, whenever the sensitive group is a \textit{minority} group, welfare maximization assigns a small weight to the welfare of the minority, disproportionally favoring the majority group. An alternative approach is to maximize the welfare separately for each possible sensitive group designing different policies for different groups \citep{ustun2019fairness}. 
This approach may violate discriminatory laws, i.e., the resulting policy function violates the constraint in $\Pi$. A simple example is when, due to legal reasons, the policy $\pi(x,s)$ must be constant in the sensitive attribute $s$. Instead, we consider a framework where the policymaker simultaneously maximizes each group's welfare, imposing Pareto efficiency on the estimated policy, under arbitrary legal or economic constraints encoded in $\Pi$. Given the set of efficient policies, the planner then selects the least unfair one.  Our approach is designed for social and welfare programs where legal constraints naturally occur and where, given such constraints, the policymaker’s preferences align with classical notions of ``first do no harm".


\subsection{Pareto Principle for Treatment Rules }

The set of Pareto optimal choices is defined as $\Pi_{\textrm{\tiny  o}}$, and it contains all such allocations $\pi \in \Pi$ for which the welfare for one of the two groups cannot be improved without reducing the welfare for the opposite group. We characterize $\Pi_{\textrm{\tiny  o}}$ in the following lemma.

\begin{lem}[Pareto Frontier] \label{lem:hyperplane}The set {\normalfont$\Pi_{\mbox{\tiny o}} \subseteq \Pi$} is such that 
\normalfont{
\begin{align}  \label{eqn:hg}
\Pi_{\textrm{\tiny  o}} = \Big\{\pi_\alpha: \pi_\alpha \in \mathrm{arg} \sup_{\pi \in \Pi} \alpha W_1(\pi) + (1 - \alpha)W_0(\pi), \quad \alpha \in (0,1)\Big\}. 
\end{align} 
}
\end{lem}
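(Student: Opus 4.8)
The plan is to establish the two set inclusions separately, with the achievable welfare region $\mathcal{W}=\{(W_1(\pi),W_0(\pi)):\pi\in\Pi\}\subseteq\mathbb{R}^2$ as the central object. The inclusion ``$\supseteq$'' requires no convexity. Fix $\alpha\in(0,1)$ and let $\pi_\alpha\in\arg\sup_{\pi\in\Pi}\{\alpha W_1(\pi)+(1-\alpha)W_0(\pi)\}$. If $\pi_\alpha$ were not Pareto optimal, some $\pi'\in\Pi$ would satisfy $W_s(\pi')\ge W_s(\pi_\alpha)$ for both $s\in\{0,1\}$ with at least one strict inequality; since $\alpha>0$ and $1-\alpha>0$, this gives $\alpha W_1(\pi')+(1-\alpha)W_0(\pi')>\alpha W_1(\pi_\alpha)+(1-\alpha)W_0(\pi_\alpha)$, contradicting the optimality of $\pi_\alpha$. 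Hence every interior-weight maximizer belongs to $\Pi_{\textrm{\tiny o}}$.

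The substantive content is the reverse inclusion ``$\subseteq$'', and the key preliminary step is to show that $\mathcal{W}$ is convex. Writing $\tau(x,s)=\mathbb{E}[Y(1)-Y(0)\mid X=x,S=s]$, each welfare $W_s(\pi)=\mathbb{E}[\tau(X,s)\pi(X,S)\mid S=s]$ is affine in $\pi$, so convexity of $\mathcal{W}$ is immediate whenever $\Pi$ is convex, as for probabilistic rules with $\mathcal{T}=[0,1]$ and linear constraints. For deterministic rules ($\mathcal{T}=\{0,1\}$) I would invoke Lyapunov's convexity theorem: provided the covariate distribution is atomless, the range of the vector-valued integral $\pi\mapsto(W_1(\pi),W_0(\pi))$ over $\{0,1\}$-valued $\pi$ is convex. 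Either way $\mathcal{W}$ is a convex planar set.

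Given convexity, I would take an arbitrary $\pi^\ast\in\Pi_{\textrm{\tiny o}}$ with welfare vector $w^\ast=(W_1(\pi^\ast),W_0(\pi^\ast))$ and construct a supporting hyperplane at $w^\ast$. Pareto optimality forces $w^\ast$ onto the upper-right boundary of $\mathcal{W}$, since the open translated quadrant $w^\ast+(0,\infty)^2$ cannot meet $\mathcal{W}$. The supporting hyperplane theorem then yields a nonzero normal $(a,b)$ with $a W_1(\pi)+b W_0(\pi)\le a W_1(\pi^\ast)+b W_0(\pi^\ast)$ for all $\pi\in\Pi$; non-domination of $w^\ast$ forces $a,b\ge 0$, and normalizing by the positive sum $a+b$ gives $\alpha=a/(a+b)\in[0,1]$ with $\pi^\ast\in\arg\sup_\pi\{\alpha W_1(\pi)+(1-\alpha)W_0(\pi)\}$.

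The hard part will be upgrading $\alpha\in[0,1]$ to $\alpha\in(0,1)$, i.e.\ excluding the axis-aligned normals that correspond only to \emph{weak} Pareto optimality. The obstacle is genuine: on a horizontal or vertical face of $\mathcal{W}$ one welfare stays constant while the other can strictly increase, so relative-interior points of such faces are dominated and are never in $\Pi_{\textrm{\tiny o}}$. I would therefore argue that a \emph{strictly} Pareto optimal $w^\ast$ admits, within its normal cone, a direction with strictly positive components: in the polyhedral case the frontier is piecewise linear with negatively sloped faces and every vertex inherits such a direction from an adjacent negatively sloped edge, while in the general convex case the same selection follows from the fact that $w^\ast$ has no dominating feasible point. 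Establishing this separation of strict from weak efficiency at the extreme points of the frontier is the delicate step; combining it with the first inclusion yields the claimed identity.
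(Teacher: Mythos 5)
Your proposal follows essentially the same route as the paper's own proof, which is a Negishi-style supporting-hyperplane argument: the paper forms the free-disposal hull $\mathcal{F}=\{(\tilde W_0,\tilde W_1):\exists\pi\in\Pi,\ \tilde W_0\le W_0(\pi),\ \tilde W_1\le W_1(\pi)\}$, asserts its convexity from linearity of $W_s$ in $\pi$, separates $\mathcal{F}$ from the strict-dominance cone $(W_0(\pi^\ast),W_1(\pi^\ast))+\mathbb{R}^2_{++}$, deduces that the separating normal lies in $\mathbb{R}^2_{+}\setminus\{0\}$ by sending the components of the dominance direction to infinity, and then normalizes the weights. Your two inclusions, your convexity step, and your sign argument for the normal are exactly this. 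Where you differ is in being more careful. First, you make explicit that convexity of the welfare set requires either a convex $\Pi$ (probabilistic rules) or an atomless-distribution/Lyapunov argument; the paper's claim that $\mathcal{F}$ is convex ``since $W_s$ is linear in $\pi$'' silently presupposes that $\Pi$ is convex. (One caveat on your fix: Lyapunov's theorem yields convexity only for the \emph{unrestricted} class of measurable $\{0,1\}$-valued rules; for a VC-restricted deterministic class such as linear eligibility rules the welfare set need not be convex, and then the supporting-hyperplane step has no purchase in either your argument or the paper's.) Second, you prove the inclusion ``interior-weight maximizers $\subseteq\Pi_{\mbox{\tiny o}}$'' honestly; the paper asserts the corresponding inclusion ``trivially'' for weights ranging over the closed orthant with positive sum, where it is actually false (a maximizer of one group's welfare alone need not be Pareto optimal when that maximizer is not unique).

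On the step you call delicate --- upgrading the supporting weight from $[0,1]$ to $(0,1)$ --- you have identified a genuine gap, and it is one the paper does not close either: its separation argument also delivers only $\alpha\in\mathbb{R}^2_{+}$ with $\alpha_1+\alpha_2>0$, i.e.\ $\alpha\in[0,1]$ after normalization, and no argument is offered for excluding the endpoints even though the statement requires $\alpha\in(0,1)$. Be aware, though, that your proposed resolution for the general convex case (``the selection follows from the fact that $w^\ast$ has no dominating feasible point'') cannot work without further structure. If the welfare set is strictly convex with a smooth boundary --- e.g.\ a disk, which does arise as the zonoid of an unrestricted probabilistic rule class --- then the point maximizing $W_0$ alone is strictly Pareto optimal (the maximizer is unique, so nothing dominates it), yet its normal cone is a single axis ray, so it maximizes no interior-weight objective and the stated set equality fails. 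Your piecewise-linear argument is sound for polyhedral frontiers, which covers the discretized and empirical settings the paper actually works with; in full generality only the ``$\supseteq$'' inclusion and the weaker ``$\subseteq$ with $\alpha\in[0,1]$'' survive. In short: your proof is the paper's proof executed more carefully, and the step you flag as unresolved is genuinely unresolved in the paper as well.
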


The lemma follows from \cite{negishi1960welfare}, whose proof is in Appendix \ref{sec:a1}. It will be convenient to define 
 \begin{equation} \label{eqn:W_a}
  \bar{W}_\alpha = \sup_{\pi \in \Pi} \alpha W_1(\pi) + (1 - \alpha)W_0(\pi),
  \end{equation} 
 the largest value of the objective in Equation \eqref{eqn:hg} for a fixed $\alpha$. In the following examples, we show that Pareto allocations \textit{generalize} notions of treatment rules from previous literature.

\begin{exmp}[Welfare Maximization]  The population equivalent of the EWM problem belongs to the Pareto frontier. Namely,  
$
\mathrm{arg} \max_{\pi \in \Pi} \Big \{p_1 W_1(\pi) + (1 - p_1)W_0(\pi) \Big\} \subseteq \Pi_{\mbox{\tiny o}}.
$
An alternative approach consists in maximizing weighted combinations of the welfare with the weights for each group as given. For instance the allocation \citep{rambachan2020economic}
\begin{equation} \label{eqn:omega} 
\check{\pi}_{\omega} \in \mathrm{arg} \max_{\pi \in \Pi} \Big \{\omega W_1(\pi) + (1 - \omega)W_0(\pi) \Big\} \subseteq \Pi_{\mbox{\tiny o}}, 
\end{equation} 
for some \textit{specific} weight $\omega$ belongs to the Pareto frontier. \qed 
\end{exmp}


Pareto optimal allocations are often non-unique, allowing for flexibility in the choice of
efficient policies. The policy-maker must appeal to some preferential ranking principle based on her preferences. We discuss those in the following lines.

\subsection{Decision Problem} \label{sec:decision}

We start by defining $\mathcal{C}(\Pi)$ the \textit{choice set} of the policy maker \citep{mas1995microeconomic}, where $\mathcal{C}$ is a choice function with $\mathcal{C}(\{\pi_1, \pi_2\}) = \pi_1$ if $\pi_1$ is strictly preferred to $\pi_2$.  We let
\vspace{-3mm}
\begin{equation} \label{eqn:unfairness} 
\mathrm{UnFairness}: \Pi \mapsto \mathbb{R}
\end{equation}
an operator which quantifies the unfairness of a policy. We leave unspecified UnFairness and provide examples in Section \ref{sec:examples} and Section \ref{sec:counterfactual}. We now state the planner's preferences. 

\begin{ass}[Policy-maker's Preferences] \label{ass:optimality} Preferences are rational\footnote{Rational preferences imply transitivity and completeness \citep{mas1995microeconomic}.}, and for each $\pi_1, \pi_2 \in \Pi$,
(i) $\mathcal{C}(\{\pi_1, \pi_2\}) = \pi_1$ if $W_1(\pi_1) \ge W_1(\pi_2)$ and $W_0(\pi_1) \ge W_0(\pi_2)$ and either (or both) of the two inequalities hold strictly; (ii) if neither $\pi_1$ Pareto dominates $\pi_2$ nor $\pi_2$ Pareto dominates $\pi_1$, $\mathcal{C}(\{\pi_1, \pi_2\}) = \pi_1$ if UnFairness($\pi_1$) $<$ UnFairness($\pi_2$); (iii) if neither Pareto dominates the other and with equal UnFairness,
$\mathcal{C}(\{\pi_1, \pi_2 \}) = \{\pi_1 , \pi_2 \}$.
\end{ass} 

Assumption \ref{ass:optimality} postulates lexicographic preferences of the following form: 
 (i) an allocation is strictly preferred to another if it weakly improves welfare for both groups and strictly improves welfare for at least one group; (ii) given two allocations where none of the two Pareto dominates the other, allocations are ranked based on fairness.

 
While different applications reflect different planner's preferences, Assumption \ref{ass:optimality} is motivated by the applications for social welfare and health programs, where welfare depends on outcomes such as health-status \citep{finkelstein2012oregon}, future earnings,  or school achievements. Sacrificing welfare (e.g., \textit{health-status}) of each group for fairness is undesirable in such applications. Conditional on achieving a Pareto efficient allocation, the planner minimizes UnFairness. Whenever, however, fairness constraints are binding (e.g., because of legal considerations), these can be directly incorporated in the function class $\Pi$. We can now characterize the decision problem.


\begin{prop}[Decision Problem] \label{lem:soc_p} Under Assumption \ref{ass:optimality}, $\pi^\star \in \mathcal{C}(\Pi)$ if and only if  
\begin{equation} \label{eqn:main_eq} 
\begin{aligned} 
\pi^\star \in \mathrm{arg} &\inf_{\pi \in \Pi} \mathrm{UnFairness}(\pi)  \\ \text{subject to } \alpha W_1(\pi) + &(1 - \alpha) W_0(\pi) \ge \bar{W}_{\alpha}, \text{ for some } \alpha \in (0,1). 
\end{aligned} 
\end{equation} 
\end{prop}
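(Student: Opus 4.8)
The plan is to first observe that, by Lemma \ref{lem:hyperplane} and the definition of $\bar W_\alpha$ in \eqref{eqn:W_a}, the constraint in \eqref{eqn:main_eq} is merely a restatement of Pareto optimality: a policy $\pi$ satisfies $\alpha W_1(\pi) + (1-\alpha) W_0(\pi) \ge \bar W_\alpha$ for some $\alpha \in (0,1)$ if and only if it attains the supremum defining $\bar W_\alpha$, i.e.\ $\pi \in \Pi_{\textrm{\tiny o}}$. Hence the right-hand side of the claimed equivalence is exactly $\mathrm{arg}\inf_{\pi \in \Pi_{\textrm{\tiny o}}} \mathrm{UnFairness}(\pi)$, and it suffices to prove $\mathcal{C}(\Pi) = \mathrm{arg}\inf_{\pi \in \Pi_{\textrm{\tiny o}}} \mathrm{UnFairness}(\pi)$. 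Throughout I use that, under Assumption \ref{ass:optimality}, preferences are rational, so $\mathcal{C}(\Pi)$ is the set of maximal (equivalently, by completeness, weakly most preferred) elements of $\Pi$; I write $\pi' \succ \pi$ for strict and $\pi' \succeq \pi$ for weak preference.

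For \emph{necessity}, take $\pi^\star \in \mathcal{C}(\Pi)$. If $\pi^\star \notin \Pi_{\textrm{\tiny o}}$ then by definition of the Pareto set some $\tilde\pi \in \Pi$ Pareto dominates it, so part (i) of Assumption \ref{ass:optimality} gives $\tilde\pi \succ \pi^\star$, contradicting maximality; thus $\pi^\star \in \Pi_{\textrm{\tiny o}}$. Next, if some $\pi' \in \Pi_{\textrm{\tiny o}}$ had $\mathrm{UnFairness}(\pi') < \mathrm{UnFairness}(\pi^\star)$, then since two Pareto optimal policies are mutually non-dominating (dominance would contradict optimality of the dominated one), part (ii) yields $\pi' \succ \pi^\star$, again contradicting maximality. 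Hence $\pi^\star$ minimizes $\mathrm{UnFairness}$ over $\Pi_{\textrm{\tiny o}}$.

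For \emph{sufficiency}, let $\pi^\star \in \Pi_{\textrm{\tiny o}}$ attain $\inf_{\pi \in \Pi_{\textrm{\tiny o}}} \mathrm{UnFairness}$, fix any $\pi \in \Pi$, and rule out $\pi \succ \pi^\star$. If $\pi^\star$ Pareto dominates $\pi$, part (i) gives $\pi^\star \succ \pi$; if neither dominates and $\pi \in \Pi_{\textrm{\tiny o}}$, then $\mathrm{UnFairness}(\pi) \ge \mathrm{UnFairness}(\pi^\star)$ and parts (ii)--(iii) give $\pi^\star \succeq \pi$. The only delicate case is $\pi \notin \Pi_{\textrm{\tiny o}}$ with $\mathrm{UnFairness}(\pi) < \mathrm{UnFairness}(\pi^\star)$ and $\pi$ non-comparable to $\pi^\star$, where part (ii) would naively give $\pi \succ \pi^\star$. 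Here I invoke transitivity: because $\pi \notin \Pi_{\textrm{\tiny o}}$ there is a Pareto optimal $\pi' \in \Pi_{\textrm{\tiny o}}$ dominating $\pi$, so $\pi' \succ \pi$ by part (i); since $\pi', \pi^\star \in \Pi_{\textrm{\tiny o}}$ are non-comparable and $\mathrm{UnFairness}(\pi') \ge \mathrm{UnFairness}(\pi^\star)$, parts (ii)--(iii) give $\pi^\star \succeq \pi'$, and transitivity then yields $\pi^\star \succ \pi$. In all cases $\pi \not\succ \pi^\star$, so $\pi^\star$ is maximal, i.e.\ $\pi^\star \in \mathcal{C}(\Pi)$.

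The main obstacle is precisely this last case, and it has two ingredients. First, I need the existence of a Pareto optimal $\pi'$ dominating a non-optimal $\pi$, which I would secure by a compactness/attainment argument on the welfare image $\{(W_1(\pi), W_0(\pi)): \pi \in \Pi\}$: maximizing, say, $W_1 + W_0$ over the closed set of policies weakly dominating $\pi$ produces such a $\pi'$, the same attainment that makes $\Pi_{\textrm{\tiny o}}$ and $\bar W_\alpha$ well defined in Lemma \ref{lem:hyperplane}. Second, I rely essentially on the rationality (transitivity and completeness) postulated in Assumption \ref{ass:optimality}; without it the lexicographic rule (i)--(iii) could in principle cycle, and a low-unfairness but Pareto-dominated policy could spuriously defeat $\pi^\star$. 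Making these two points explicit is what turns the intuitive ``fairest point on the Pareto frontier'' into the stated equivalence.
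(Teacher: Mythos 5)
Your overall route is the same as the paper's: the paper's (much terser) proof likewise observes that the stated preferences are lexicographic (Pareto dominance first, then UnFairness), concludes from this that $\mathcal{C}(\Pi)$ is exactly the set of UnFairness-minimizers over $\Pi_{\textrm{\tiny o}}$, and then invokes Lemma \ref{lem:hyperplane} to rewrite membership in $\Pi_{\textrm{\tiny o}}$ as the constraint in \eqref{eqn:main_eq}. Your reduction of the constraint to Pareto optimality and your necessity argument are correct and coincide with the paper's intent; the paper simply asserts the characterization of $\mathcal{C}(\Pi)$ without any case analysis.

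The gap is in your ``delicate case.'' Part (ii) of Assumption \ref{ass:optimality} is not a ``naive'' reading that transitivity can override: it states outright that $\mathcal{C}(\{\pi,\pi^\star\})=\pi$ whenever the pair is Pareto non-comparable and $\mathrm{UnFairness}(\pi)<\mathrm{UnFairness}(\pi^\star)$, i.e.\ $\pi \succ \pi^\star$. Your chain $\pi' \succ \pi$, $\pi^\star \succeq \pi'$ plus transitivity gives $\pi^\star \succ \pi$. Since strict preference is asymmetric under rationality, you have derived a contradiction, not a resolution: what your argument actually establishes is that Assumption \ref{ass:optimality} is jointly unsatisfiable whenever such a $\pi$ exists, so the assumption implicitly rules this configuration out. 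You cannot select the transitivity conclusion over the part-(ii) conclusion, and your closing claim ``in all cases $\pi \not\succ \pi^\star$'' is false on the face of part (ii) in this case; the correct statement is that the case cannot occur under the assumption (or, if it does occur, the assumption fails and the proposition holds vacuously). Two further points. First, your repair needs every non-Pareto-optimal policy to be dominated by a \emph{Pareto-optimal} one; the compactness/closedness of the welfare image that this requires is supplied neither by the paper's assumptions nor by Assumption \ref{ass:optimality}, so the attainment step you sketch is not available as stated. Second, you can bypass attainment entirely: if $\mathcal{C}(\Pi)\neq\emptyset$ (implicit in the definition of a choice function), your necessity step shows any $\hat\pi \in \mathcal{C}(\Pi)$ minimizes UnFairness over $\Pi_{\textrm{\tiny o}}$; any other minimizer $\pi^\star$ is then Pareto non-comparable to $\hat\pi$ with equal UnFairness, hence indifferent to $\hat\pi$ by part (iii), and rationality transfers maximality from $\hat\pi$ to $\pi^\star$. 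That yields sufficiency without ever confronting the problematic configuration.
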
 

The proof is contained in Appendix \ref{sec:a1}. Proposition \ref{lem:soc_p} formally characterizes the policy-makers decision problem, which consists of minimizing the policy's unfairness criterion, under the condition that the policy is Pareto optimal. The policy-maker does not maximize a weighted combination of welfares, with some \textit{pre-specified} and hard-to-justify weights. Instead, each group's importance (i.e., $\alpha$) is implicitly chosen within the optimization problem to maximize fairness. This approach allows for a transparent choice of the policy based on the policy-makers definition of fairness.

 \begin{exmp}[Why Pareto Efficiency? A simple example] \label{exmp:omega} Let $X = 1$ for simplicity, take $\tau_s, \phi \in (0,1), s \in \{0,1\}$ and let
$
Y(d) = \tau_S d + \varepsilon(d), 
$
with $\mathbb{E}[\varepsilon(d) | S] = 0$.
Consider a class of probabilistic decision rules 
$$
\Pi = \Big\{\pi(x, s) = \beta_s, \quad \beta_1, \beta_0 \in (0,1), \quad \beta_0 p_0 + \beta_1 p_1 \le \phi \Big\},  
$$ 
with the share of treated units being at most $\phi$. Let UnFairness be the difference in the groups' welfares, namely 
$
|\tau_1 \beta_1 - \tau_0 \beta_0|.
$ The smallest possible unfairness is zero, since we can choose $\beta_1 = \beta_0 = 0$ with one of the fairest allocation selecting none of the individuals to treatment. Consider now the Pareto frontier, defined as:\begin{equation} \label{eqn:set} 
\Pi_{\mbox{\tiny o}} = \Big\{\pi(x,s) = \beta_s^*, \quad \beta_0^* = \frac{\phi - p_1 \beta_1^*}{p_0}, \beta_1^* \in [0,1] \Big\} \subset \Pi. 
\end{equation}  
The set of Pareto allocation rules out all those allocation for which the capacity constraint is attained with strict inequality, also excluding $\beta_1 = \beta_0 = 0$. The proposed policy assigns all benefits to individuals, and it trade-offs \textit{who} to treat to minimize $|\tau_1 \beta_1 - \tau_0 \beta_0|$.\footnote{Observe that the level of unfairness with the frontier may or may not be potentially strictly larger than the unfairness obtained in an unconstrained scenario. Namely, to achieve zero unfairness for every $\pi \in \Pi_{\mbox{\tiny o}}$ , we need that $\tau_1 \beta_1^* = \tau_0 \beta_0^*$. Substituting $\beta_0^* = \phi/p_0 - p_1 \beta_1^*/p_0$ this would require $\beta_1^* = \frac{\phi}{p_0} (\tau_1/\tau_0 + p_1/p_0)^{-1}$ which is not necessarily feasible (i.e., the expression is larger than one).} \qed 
\end{exmp}


We conclude by comparing the properties of the policy in Proposition \ref{lem:soc_p} with existing alternatives, stated as a corollary of Proposition \ref{lem:soc_p}. In particular, we compare our method with the policy that maximizes welfare with importance weights for different groups \citep{rambachan2020economic} in Equation \eqref{eqn:omega} and the one with fairness constraints. 
For the latter, define 
 $
 \Pi(\kappa) = \Big\{\pi \in \Pi: \mathrm{UnFairness}(\pi) \le \kappa\Big\} \subseteq \Pi, 
 $
 the set of policies with constraint, and 
\begin{equation} \label{eqn:pi_c} 
 \widetilde{\pi} \in \arg \max_{\pi \in \Pi(\kappa)} p_1 W_1(\pi) + (1 - p_1) W_0(\pi)
 \end{equation} 
the policy that maximizes the welfare imposing fairness constraints \citep{nabi2019learning}.

\begin{cor}[Properties] \label{cor:prop}  Let $\pi^\star$ be defined as in Equation \eqref{eqn:main_eq} and $\pi_{\omega}, \tilde{\pi}$ and in Equation \eqref{eqn:omega}, \eqref{eqn:pi_c}, respectively. Then  
$
\mathrm{UnFairness}(\pi^\star) \le \mathrm{UnFairness}(\check{\pi}_\omega), \forall \omega \in (0,1)$.  

Suppose that either $\widetilde{\pi} \in \Pi_{\mbox{\tiny o}}$ (i.e., it belongs to the Pareto frontier), or fairness constraints are binding to the policy-maker, i.e. $\Pi(\kappa) = \Pi$. Then  
 $
\mathrm{UnFairness}(\pi^\star) \le \mathrm{UnFairness}(\widetilde{\pi}). 
$
Suppose instead that $\widetilde{\pi} \not \in \Pi_{\mbox{\tiny o}}$. Then 
 $
\mathrm{UnFairness}(\pi^\star) \le \mathrm{UnFairness}(\pi_{\mbox{\tiny o}}) 
$ 
for all $\pi_{\mbox{\tiny o}} \in \Pi_{\mbox{\tiny o}}$ that Pareto dominate $\widetilde{\pi}$. In addition, $\pi_{\omega}$ and $\tilde{\pi}$ do not Pareto dominate $\pi^\star$.  
\end{cor}

Corollary \ref{cor:prop} shows that UnFairness of $\pi^\star$ is Pareto optimal and \textit{uniformly} smaller than UnFairness of the policy $\pi_\omega$ that maximizes a weighted combination of the welfares. It also shows that if $\widetilde{\pi}$ \textit{is} Pareto optimal, then its UnFairness is larger than UnFairness of $\pi^\star$. When instead $\widetilde{\pi}$ is \textit{not} Pareto optimal, its Pareto dominant allocations have larger UnFairness than $\pi^\star$. 
Further intuition can be gained under strong duality, which we discuss in Appendix \ref{sec:a22}. Intuitively, the constraint in Proposition \ref{lem:soc_p} holding for \textit{some} weighted combinations of welfares (instead of a particular choice of the weights) is key to achieve lower unfairness of $\pi^\star$ relative to $\widetilde{\pi}$, when $\widetilde{\pi}$ is Pareto efficient.\footnote{Under strong duality, the dual of $\widetilde{\pi}$ corresponds to minimize UnFairness for \textit{one particular} weighted combination of welfare exceeding a certain threshold. In contrast, our decision problem imposes the constraint that \textit{some} weighted combination of welfares exceeding a certain threshold. This difference reflects the difference between the lexicographic preferences that we propose as opposed to an additive social planner's utility.} Finally, when fairness constraints are binding, the proposed procedure always leads to smaller UnFairness.

\section{Fair Targeting: Estimation} \label{sec:3}


We now construct an estimator of $\pi^\star$. We introduce some notation, and we define 
\vspace{-2mm}
\begin{equation} \label{eqn:gamma}
m_{d,s}(x ) = \mathbb{E}\Big[Y_i(d) \Big| X_i = x, S_i=s \Big] , \quad \Gamma_{d,s,i} = \frac{1\{S_i = s\}}{p_s} \Big[ \frac{1\{D_i = d\}}{e(X_i, S_i)} \Big(Y_i - m_{d,s}(X_i)\Big) + m_{d,s}(X_i)\Big]
\end{equation} 
the conditional mean of the group $s$ under treatment $d$, and 
 the doubly robust score \citep{robins1995semiparametric}, respectively. We let $\hat{\Gamma}_{d,s,i}$ the estimated counterpart of $\Gamma_{d,s,i}$. Define

\begin{equation} \label{eqn:welf_hat}
\begin{aligned} 
&\hat{W}_s(\pi) = \frac{1}{n} \sum_{i=1}^n \Big(\hat{\Gamma}_{1, s,i} - \hat{\Gamma}_{0,s,i}\Big) \pi(X_i,s). 
\end{aligned} 
\end{equation} 
the estimated welfare built upon semi-parametric  literature  \citep{newey1990semiparametric, robins1995semiparametric}, with 
$
\hat{m}_{d,s}(.),\hat{e}(.), \hat{p}_s, 
$
constructed via  cross-fitting \citep{chernozhukov2018double}. Details of the cross-fitting procedure are contained in Appendix \ref{sec:a23}. We consider first general notions of fairness, and introduce the corresponding estimator below.

\begin{defn}[Empirical UnFairness] We define $\mathcal{V}_n(\pi, p_s, e, m)$ an unbiased estimate of $\mathrm{UnFairness}(\pi)$ which depends on observables and the population propensity score and conditional mean. We write 
$
\widehat{\mathcal{V}}_n(\pi) = \mathcal{V}_n(\pi, \hat{p}_s, \hat{e}, \hat{m}), 
$
the empirical counterpart. 
\end{defn} 

 We defer to Section \ref{sec:examples} and Section \ref{sec:counterfactual} explicit examples of $\widehat{\mathcal{V}}_n(\pi)$.

\subsection{(Approximate) Pareto Optimality}

Next, we characterize the Pareto frontier using linear inequalities. To construct the Pareto frontier we use the constraint in Equation \eqref{eqn:main_eq} after discretizing the set of weights $\alpha$. Namely, in the first step, we discretize the Pareto frontier, and construct a grid of equally spaced values $\alpha_j \in (0,1)$, $j \in \{1, ..., N\}$, 
with $N = \sqrt{n}$. 
We approximate the Pareto frontier using the set ($\hat{W}_0, \hat{W}_1$ are defined in Equation \eqref{eqn:welf_hat})
\begin{equation} \label{eq:pf}
\widehat{\Pi}_{\mbox{\tiny o}}  =  \Big\{\pi_\alpha \in \Pi: \pi_\alpha \in \mathrm{arg} \sup_{\pi \in \Pi} \Big\{\alpha \hat{W}_0(\pi) + (1 - \alpha) \hat{W}_1(\pi) \Big\}, \text{ s.t. } \alpha \in \{\alpha_1, ..., \alpha_N\}\Big\}. 
\end{equation}
The grid's choice is arbitrary, as long as values are \textit{equally spaced}. 

The set $\widehat{\Pi}_{\mbox{\tiny o}}$ may be hard, if not impossible, to directly estimate, since we may have uncountably many solutions \citep{manski1989estimation, elliott2013predicting}. In particular, the solution to each optimization problem in Equation \eqref{eq:pf} may not be unique.  
Instead of directly estimating $\widehat{\Pi}_{\mbox{\tiny o}}$, we characterize it through linear constraints. First, we find the largest empirical welfare achieved on the discretized Pareto Frontier defined as
\begin{equation} \label{eqn:opt1} 
 \bar{W}_{j,n} = \sup_{\pi \in \Pi} \Big\{ \alpha_j \hat{W}_0(\pi) + (1 - \alpha_j) \hat{W}_1(\pi) \Big\}, \text{ for each } j \in \{1, ..., N\}, 
\end{equation} 
which can be obtained through standard optimization routines \citep{KitagawaTetenov_EMCA2018, zhou2018offline}.
Second, we observe that any $\pi \in \widehat{\Pi}_{\mbox{\tiny o}}$, must satisfy $\alpha_j \hat{W}_0(\pi) + (1 - \alpha_j) \hat{W}_1(\pi)  \ge \bar{W}_{j,n}, \text{ for some } j \in \{1, ..., N\}$, since $\bar{W}_{j,n}$ defines the largest objective for a given $\alpha_j$. We impose such constraint up to a small slackness parameters $\lambda/\sqrt{n}$ and
construct an approximate Pareto frontier as follows: 
\begin{equation} \label{eqn:set_const2}
\widehat{\Pi}_{\mbox{\tiny o}}(\lambda) = \Big\{\pi \in \Pi: \exists j \in \{1, ..., N\} \mbox{ such that } \alpha_j \hat{W}_{0,n}(\pi) + (1 - \alpha_j) \hat{W}_{1,n}(\pi) \geq  \bar{W}_{j, n} - \frac{\lambda}{\sqrt{n}}\Big\}, 
\end{equation} 
where $\widehat{\Pi}_{\mbox{\tiny o}}(0) = \widehat{\Pi}_{\mbox{\tiny o}}$, and $\widehat{\Pi}_{\mbox{\tiny o}} \subseteq \widehat{\Pi}_{\mbox{\tiny o}}(\lambda)$ for any $\lambda \ge 0$.  

Here, we introduced $-\frac{\lambda}{\sqrt{n}}$ which imposes that the resulting policy is ``approximately'' Pareto optimal. As shown in Section \ref{sec:4}, $\lambda/\sqrt{n}$ guarantees that $\widehat{\Pi}_{\mbox{\tiny o}}(\lambda)$ contains all Pareto optimal policies with high-probability, for $\lambda = \mathcal{O}(1)$.
The estimated policy is defined as 
\begin{equation} \label{eqn:pp} 
\hat{\pi}_\lambda \in \mathrm{arg} \min_{\pi \in \widehat{\Pi}_{\mbox{\tiny o}}(\lambda)} \widehat{\mathcal{V}}_n(\pi). 
\end{equation}

\begin{rem}[The choice of the grid and $\lambda$]  The choice of $\lambda$ depends on the function class $\Pi$. In Theorems \ref{thm:1b}, \ref{thm:2}, \ref{thm:between_groups} we discuss guarantees by imposing that $\lambda/\sqrt{n} \ge M \sqrt{v}/N$, for some finite constant $M$ with $\lambda$  increasing in the geometric complexity $v$ of $\Pi$, and where we choose $N = \sqrt{n}$.\footnote{This guarantees that the estimated function class does not exclude Pareto optimal policies with high probability, while $\lambda = \mathcal{O}(1)$ guarantees uniform converence of the Pareto frontier at $1/\sqrt{n}$ rate.}
In contrast, the function class complexity does not affect the choice of the grid (i.e., $N$). This is because the welfare loss due to the grid's approximation error is uniformly bounded by a constant independent of $\Pi$.\footnote{Namely, take a grid of $N + 1$ equally spaced $\alpha_j$. Then the approximation error reads as   
$
\sup_{\pi \in \Pi} |\alpha W_1(\pi) + (1 - \alpha) W_0(\pi) - \max_{\alpha_j \in \{\alpha_1, \cdots, \alpha_N\}} \alpha_j W_1(\pi) - (1 - \alpha) W_0(\pi)| \le 2M/N,
$
which is uniformly bounded by $M$ where $M$ bounds the first moment of the potential outcomes independent of $\Pi$. 
} 
 \qed 
\end{rem}

\subsection{Optimization: Mixed Integer Quadratic Program}

We provide a mixed-integer quadratic program (MIQP) for optimization. We define
$
\mathbf{z}_s =(z_{s,1},\cdots, z_{s,n}),  z_{s,i} = \pi(X_i, s), \pi \in \Pi.
$
Here, $z_{s,n}$ defines the treatment assignment under policy $\pi$ and sensititive attribute $s$ (see the example below); $\mathbf{z}_s$ have simple representation for general classes of policy functions, such as either probabilistic rules which we derive in Appendix \ref{sec:a23} or deterministic linear decision rules \citep{florios2008exact}.

 \begin{exmp}[Maximum score]  For the maximum score $\pi(X_i, s) = 1\{X_i \beta_x + S \mu \ge 0\}, \beta = (\beta_x, \mu) \in \mathcal{B}$, the indicators $z_{s,n}$ are defined via mixed-integer constraints of the form \citep{florios2008exact} 
$
\frac{X_i^\top \beta + s \mu}{|C_i|} < z_{s,i} \le \frac{X_i^\top \beta + s \mu }{|C_i|} + 1,  C_i \ge \sup_{\beta \in \mathcal{B}} |(X_i, S_i)^\top \beta|, z_{s,i} \in \{0,1\}. 
$ 
Such constraint guarantees that $z_{s,i} = 1\{X_i^\top \beta_x + s \mu \ge 0\}$. \qed 
\end{exmp} 

We now need to impose the constraint of Pareto optimality. To do so, we introduce an additional set of decision variables that guarantee the constraints in Equation \eqref{eqn:set_const2} hold. 
The vector $\mathbf{u}=(u_1, ..., u_N) \in \{0,1\}^N$
encodes the locations on the grid of $\alpha$ for which the supremum in \eqref{eqn:set_const2} is reached at; here,  $u_j=1$ whenever the constraint in Equation \eqref{eqn:set_const2} holds for $\alpha_j$. 
The chosen policy must be Pareto optimal, i.e., $u_j$ must be equal to one for at least one $j$.  To ensure this, we impose the constraint  $ \sum_{j=1}^N u_j \ge 1$. 

Combining such constraints, it directly follows that $\hat{\pi}_\lambda$ satisfies Equation \eqref{eqn:pp} if and only if

\vspace{-9mm}

\begin{equation} \label{eqn:opt1}
\hat{\pi}_\lambda \in \mathrm{arg} \min_{\pi} \min_{ \mathbf{z_{0}}, \mathbf{z_{1}} ,\mathbf{u}} \qquad   \widehat{\mathcal{V}}_n(\pi)  
\end{equation} 
\vspace{-15mm}
\begin{align}
   \mbox{subject to}  & \quad \ z_{s,i} = \pi(X_i, s),  \quad 1 \leq i \leq n, & \mbox{(A)}  & \ \nonumber \\
  &\quad  \  u_j \alpha_j   \langle \hat{\boldsymbol{\Gamma}}_{1, 0} - \hat{\boldsymbol{\Gamma}}_{0, 0},  \boldsymbol{z}_0 \rangle + u_j (1 - \alpha_j)   \langle \hat{\boldsymbol{\Gamma}}_{1, 1} - \hat{\boldsymbol{\Gamma}}_{0, 1},  \boldsymbol{z}_1 \rangle  \ge u_j  n\bar{W}_{j, n} -\sqrt{n} \lambda  \  & \mbox{(B)}&\nonumber \\
   &\quad  \  \langle \mathbf{1}, \mathbf u \rangle  \ge 1 \nonumber  & \mbox{(C)}& \\
 &\quad \  \pi \in \Pi  & \mbox{(D)}& \nonumber \\
   & \quad \  u_j  \in \{0,1\},     \  \ 1 \leq j \leq N.  & \mbox{(E)}&\nonumber
\end{align}   



  Here, $\mathbf{\Gamma}_{d,s}$ is the vector of $\Gamma_{d,s, i}$ defined in Equation \eqref{eqn:gamma}. 
 Constraints (B) and (C) state that the resulting policy is (approximately) Pareto optimal, or, equivalently, it maximizes a weighted combination of groups' welfare for \textit{some} $\alpha_j$. Constraints (A), (C), (E) are (mixed-integer) linear constraints, while Constraint (B) is quadratic. Notice that we can further simplify (B) as a linear constraint at the expense of introducing additional $N n$ binary variables and $2 N n$ additional constraints (e.g., see \citealt{wolsey1999integer, viviano2019policy}). Finally, (D) is either linear or quadratic for deterministic assignments and linear probability models.  Hence the objective admits a MIQP representation whenever $\widehat{\mathcal{V}}_n(\pi)$ admits linear representation in $\pi$, as discussed in the following section. 
Note that the solution to the optimization problem might not be unique, depending on the function class. However, non-uniqueness does not affect theoretical properties in Section \ref{sec:4}. 

\begin{rem}[Computational complexity] 
The complexity of the optimization problem depends on the policy function class. For discrete covariates, in Section \ref{sec:complexity} we show that the problem can be solved as a \textit{sequence} of linear programs, for which algorithms that returns exact solutions in polynomial time exist; e.g., \citealt{karmarkar1984new}. For the maximum score and the optimal tree, researchers may rely on existing algorithms such as the branch and bound \citep{wolsey1999integer}, efficiently computed by existing software; e.g., GUROBI and CPLEX. However, their worst-case scalability may grow exponentially with the sample size, similarly to what discussed in the policy learning literature \citep{zhou2018offline, KitagawaTetenov_EMCA2018}. One solution for the optimal tree is to use an exhaustive search method \citep{zhou2018offline}, which, we show in Section \ref{sec:numerics} is feasible for a moderately large sample size also for our program. For the maximum score, researchers may instead use the early termination strategy, which we study in Section \ref{sec:complexity}.   \qed 
\end{rem}

\section{Theoretical Analysis} \label{sec:4}


Below we impose that Condition (A), which restricts the function class of interest of the policy function, which holds for linear scores \citep{manski1975maximum}, and decision trees \citep{zhou2018offline}.  Condition (B) ensures the measurability \citep{rai2018statistical, kosorok2008introduction}.

\begin{ass} \label{ass:moment} Suppose that the following conditions hold: (A) $\Pi$ has finite VC-dimension, denoted as $v$; (B) $\Pi$ is pointwise measurable.
\end{ass}

\begin{ass}\label{ass:overlap} Let: (i) $e(X_i, s), p_s \in (\delta, 1 - \delta)$, almost surely, for $\delta \in (0,1)$, for all $s \in \{0,1\}$; (ii) $Y_i(d) \in [-M, M]$, for some $M < \infty$, for all $d \in \{0,1\}$ almost surely.    
\end{ass} 

Condition (i) imposes the standard overlap assumption; Condition (ii) assumes uniformly bounded outcomes \citep[e.g.,][for related conditions]{mbakop2016model}. The following assumptions are imposed on the estimators.

\begin{ass}[Nuisances' regularities] \label{ass:dr} For some $\xi_1 \ge 1/4, \xi_2 \ge 1/4$:  
\begin{equation}
\small 
\begin{aligned}
&\mathbb{E}\Big[\Big(\hat{m}_{d,s}(X_i) - m_{d,s}(X_i)\Big)^2\Big] = \mathcal{O}(n^{-2\xi_1}), \quad \mathbb{E}\Big[\Big(1\Big/\hat{p}_s \hat{e}(X_i, s) - 1\Big/p_s e(X_i, s)\Big)^2 \Big] = \mathcal{O}(n^{-2\xi_2}).
\end{aligned}  
\end{equation} 
for all $s,d \in \{0,1\}$, where $X_i$ is out-of-sample. In addition, for a finite constant $M$ and $\delta \in (0,1)$, $\sup_{d \in \{0,1\},s \in \{0,1\}, x \in \mathcal{X}} |\hat{m}_{d,s}(x)| < M$, and $\hat{e}(X, S), \hat{p} \in (\delta, 1 - \delta)$ almost surely.  
\end{ass} 

Assumption \ref{ass:dr} states that the \textit{product} of the mean-squared error of the estimated propensity score and conditional mean converges to zero at the parametric rate. This condition is standard in the doubly-robust literature \citep{chernozhukov2018double, farrell2015robust}.  
Assumption \ref{ass:dr} also states that the conditional mean and the propensity score functions are uniformly bounded. The conditions can be stated asymptotically, in which case the uniform bound on estimated nuisance functions is not required and results should be interpreted in the asymptotic sense only \citep{athey2017efficient}.

\subsection{Guarantees on the Pareto Frontier} \label{sec:41}

It is interesting to study the behavior of the estimated frontier relative to its population counterpart. We do so in the following theorems.  

\begin{thm} \label{lem:cons_dr} Under Assumptions \ref{ass:unconf}, \ref{ass:moment}-\ref{ass:dr}, for any $\gamma \in (0,1), \lambda \ge 0$, a universal constant $c_0 < \infty$, with probability larger than $1  - \gamma$, 
\begin{equation} \label{eqn:dist} 
\small 
\begin{aligned} 
&\sup_{\alpha \in (0,1), \pi \in \Pi} \Big| \alpha W_{0}(\pi) + (1 - \alpha) W_{1}(\pi) - \max_{\alpha_j \in \{\alpha_1, \cdots , \alpha_N\}} \Big\{\alpha_j \widehat{W}_{0}(\pi) + (1 - \alpha_j) \widehat{W}_{1}(\pi) - \frac{\lambda}{\sqrt{n}} \Big\} \Big| \le  \\ &c_0 \sqrt{\frac{v}{n}} +  c_0 \sqrt{\frac{\log(2/\gamma)}{n}} + \frac{\lambda}{\sqrt{n}}. 
\end{aligned} 
\end{equation} 
\end{thm}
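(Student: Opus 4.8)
The plan is to bound the left-hand side by the triangle inequality, separating a deterministic discretization error (from replacing the continuum of weights $\alpha \in (0,1)$ by the grid $\{\alpha_1,\dots,\alpha_N\}$) from a uniform statistical error (from replacing $W_s$ by $\widehat W_s$), and then to bookkeep the additive $\lambda/\sqrt n$. Writing $g_\alpha(\pi) = \alpha W_0(\pi) + (1-\alpha)W_1(\pi)$ and $\widehat g_\alpha(\pi)$ for its estimated analogue, I would first note that for any $\alpha$ there is a grid point $\alpha_k$ with $|\alpha-\alpha_k|\le 1/N$; since $\alpha \mapsto g_\alpha(\pi)$ is affine with slope $W_0(\pi)-W_1(\pi)$ and the overlap-plus-bounded-outcomes assumption gives $|W_s(\pi)|\le 2M$, this map is $4M$-Lipschitz, so the discretization error is at most $4M/N = 4M/\sqrt n$ uniformly in $\pi$, which is absorbed into $c_0\sqrt{v/n}$. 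Both directions of the absolute value are then reduced to the uniform statistical error, with the $-\lambda/\sqrt n$ inside the maximum reproducing the $+\lambda/\sqrt n$ on the right-hand side.

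The core of the argument is the uniform bound $\sup_{\pi\in\Pi}|\widehat W_s(\pi)-W_s(\pi)|$ for $s\in\{0,1\}$. I would introduce the oracle estimator $\widetilde W_s(\pi)=n^{-1}\sum_i(\Gamma_{1,s,i}-\Gamma_{0,s,i})\pi(X_i,s)$ built from the true scores and decompose into $[\widetilde W_s(\pi)-W_s(\pi)]$ and $[\widehat W_s(\pi)-\widetilde W_s(\pi)]$. For the oracle term, the scores $\Gamma_{d,s,i}$ are bounded by a constant of order $M/\delta$ under the overlap condition, so a symmetrization/Rademacher argument combined with the finite VC-dimension $v$ (and pointwise measurability) of $\Pi$ yields $\mathbb E\sup_\pi|\widetilde W_s(\pi)-W_s(\pi)|\lesssim \sqrt{v/n}$; a bounded-differences (McDiarmid) or Talagrand concentration inequality, again using boundedness of the scores, upgrades this to a high-probability statement at the cost of the additive $c_0\sqrt{\log(2/\gamma)/n}$.

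For the feasible-minus-oracle term I would exploit cross-fitting and double robustness. Conditioning on the auxiliary folds on which the nuisances are estimated, I split $\widehat W_s(\pi)-\widetilde W_s(\pi)$ into its conditional mean and a centered remainder. The conditional mean is the doubly-robust second-order bias: since $|\pi|\le 1$, it is bounded uniformly in $\pi$ by a product of root-mean-square errors, which by Cauchy--Schwarz and the nuisance-regularity assumption is $O(n^{-(\xi_1+\xi_2)})\le O(n^{-1/2})$. The centered remainder is a mean-zero empirical process indexed by $\Pi$ whose envelope has variance controlled by the nuisance mean-square errors (hence $o(1)$), so a conditional maximal inequality in terms of $v$ gives a bound of order $\sqrt{v/n}\cdot o(1)=o(\sqrt{v/n})$, which is absorbed. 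Collecting the three pieces gives $\sup_\pi|\widehat W_s(\pi)-W_s(\pi)|\le c_0\sqrt{v/n}+c_0\sqrt{\log(2/\gamma)/n}$ with probability $1-\gamma$, and combining with the discretization step and the $\lambda/\sqrt n$ bookkeeping yields the claim.

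I expect the main obstacle to be the feasible-minus-oracle step: obtaining the uniform-in-$\pi$ bound while preserving the parametric $n^{-1/2}$ (product) rate. A naive bound $\sup_\pi|\widehat W_s(\pi)-\widetilde W_s(\pi)|\le n^{-1}\sum_i|\widehat\Gamma_{\cdot,s,i}-\Gamma_{\cdot,s,i}|$ only delivers the slower $n^{-\min(\xi_1,\xi_2)}$ rate and would break the stated bound, so the double-robust cancellation must be extracted at the level of the conditional mean, which is precisely where cross-fitting is essential. A secondary subtlety is aligning, for each direction of the absolute value, the continuum weight $\alpha$ in the population term with the discrete maximum over $\alpha_j$ in the estimated term; this is handled by the Lipschitz/grid-mesh argument, using that the grid attains the weighted-welfare maximum up to $4M/N$.
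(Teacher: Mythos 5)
Your proposal is correct and follows essentially the same route as the paper: a triangle-inequality split into a uniform statistical error, a grid-discretization error of order $M/N = M/\sqrt{n}$, and the $\lambda/\sqrt{n}$ bookkeeping, with the statistical error handled by an oracle/feasible decomposition that uses symmetrization, Dudley's entropy bound and bounded differences for the centered terms, and cross-fitting plus Cauchy--Schwarz to extract the doubly-robust product rate (this is exactly the content of Lemma \ref{lem:helper2} and Lemma \ref{lem:dr}). The only cosmetic difference is that you discretize the population objective via its Lipschitzness in $\alpha$, whereas the paper discretizes the estimated objective using the uniform boundedness of the estimated nuisances; both yield the same $O(M/\sqrt{n})$ term.
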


Theorem \ref{lem:cons_dr} shows that the distance between the estimated Pareto frontier and its population counterpart converges to zero at rate $1/\sqrt{n}$ for a choice of $\lambda = \mathcal{O}(1)$ where $\lambda$ is defined in Equation \eqref{eqn:set_const2}. The derivation uses properties of the double-robust estimator \citep{farrell2015robust}, and connects to the literature on empirical welfare maximization \citep{KitagawaTetenov_EMCA2018, zhou2018offline, athey2017efficient}, while differently here we control the maximum deviation uniformly over a set of weights $\alpha$. 

A natural question is whether the estimated Pareto frontier also contains all Pareto optimal allocations for a finite $\lambda$. We complement Theorem \ref{lem:cons_dr} by showing that with high probability the set of estimated allocations $\widehat{\Pi}_{\textrm{\mbox{\tiny o}}}(\lambda)$ contains the Pareto frontier for finite $\lambda$.

\begin{thm}  \label{thm:1b} Let Assumptions \ref{ass:unconf}, \ref{ass:moment}-\ref{ass:dr} hold. For any $\gamma \in (0,1), \lambda \ge \underline{b} (\sqrt{v} + \sqrt{\log(2/\gamma)})$, for a constant $\underline{b} > 0$, independent of $n$, $N = \sqrt{n}$,  then 
$
\mathbb{P}\Big( \Pi_{\textrm{\tiny  o}} \subseteq \widehat{\Pi}_{\textrm{\mbox{\tiny o}}}(\lambda)\Big)  \ge  1- \gamma                                                       . 
$
\end{thm}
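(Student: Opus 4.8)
The plan is to show that every member of the population Pareto frontier satisfies, with high probability, the defining inequality of $\widehat{\Pi}_{\mbox{\tiny o}}(\lambda)$ at its nearest grid point. Fix an arbitrary $\pi^\star \in \Pi_{\mbox{\tiny o}}$; by Lemma \ref{lem:hyperplane} there is some $\alpha^\star \in (0,1)$ with $\pi^\star \in \arg\sup_{\pi \in \Pi} \Phi_{\alpha^\star}(\pi)$, where I abbreviate $\Phi_\alpha(\pi) = \alpha W_0(\pi) + (1-\alpha) W_1(\pi)$ and $\widehat\Phi_\alpha(\pi) = \alpha \widehat{W}_0(\pi) + (1-\alpha)\widehat{W}_1(\pi)$. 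Since the grid is equally spaced with $N = \sqrt{n}$ points, there is an index $j$ with $|\alpha_j - \alpha^\star| \le 1/N$. It then suffices to bound the slack $\bar{W}_{j,n} - \widehat\Phi_{\alpha_j}(\pi^\star)$ from above by $\lambda/\sqrt{n}$, uniformly over $\pi^\star \in \Pi_{\mbox{\tiny o}}$, since this is exactly the requirement for $\pi^\star \in \widehat{\Pi}_{\mbox{\tiny o}}(\lambda)$.

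Writing $\widehat\pi_j \in \arg\sup_{\pi \in \Pi} \widehat\Phi_{\alpha_j}(\pi)$ so that $\bar{W}_{j,n} = \widehat\Phi_{\alpha_j}(\widehat\pi_j)$, I would decompose
\[
\bar{W}_{j,n} - \widehat\Phi_{\alpha_j}(\pi^\star) = \underbrace{\big[\widehat\Phi_{\alpha_j}(\widehat\pi_j) - \Phi_{\alpha_j}(\widehat\pi_j)\big]}_{\text{(I)}} + \underbrace{\big[\Phi_{\alpha_j}(\widehat\pi_j) - \Phi_{\alpha_j}(\pi^\star)\big]}_{\text{(II)}} + \underbrace{\big[\Phi_{\alpha_j}(\pi^\star) - \widehat\Phi_{\alpha_j}(\pi^\star)\big]}_{\text{(III)}}.
\]
Terms (I) and (III) are each controlled by the uniform estimation error $\Delta := \sup_{\alpha \in [0,1], \pi \in \Pi} |\widehat\Phi_\alpha(\pi) - \Phi_\alpha(\pi)| \le \sup_{\pi \in \Pi}|\widehat{W}_0(\pi) - W_0(\pi)| + \sup_{\pi \in \Pi}|\widehat{W}_1(\pi) - W_1(\pi)|$. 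The purpose of the decomposition is to isolate (II) as a purely population quantity handled by the grid argument.

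For the estimation error I would invoke the uniform convergence already established in the analysis of Theorem \ref{lem:cons_dr}: under Assumption \ref{ass:moment} (finite VC-dimension $v$, pointwise measurability) and Assumption \ref{ass:overlap} (overlap and bounded outcomes), a standard symmetrization and VC bound for the doubly robust welfare estimator, combined with the product-rate nuisance control of Assumption \ref{ass:dr}, yields $\Delta \le c_0\sqrt{v/n} + c_0\sqrt{\log(1/\gamma)/n}$ with probability at least $1-\gamma$, the nuisance contribution being of order $1/\sqrt{n}$. For term (II) I would exploit that $\pi^\star$ is a maximizer at $\alpha^\star$. Taking a population maximizer $\pi_j^\dagger$ at $\alpha_j$ and using the identity $\Phi_{\alpha_j}(\pi) = \Phi_{\alpha^\star}(\pi) + (\alpha_j - \alpha^\star)\big(W_0(\pi) - W_1(\pi)\big)$, the contribution $\Phi_{\alpha^\star}(\pi_j^\dagger) - \Phi_{\alpha^\star}(\pi^\star) \le 0$ vanishes by optimality of $\pi^\star$ at $\alpha^\star$, leaving only $(\alpha_j - \alpha^\star)$ times a difference of welfare gaps. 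Since $|Y(d)| \le M$ implies $|W_s(\pi)| \le 2M$, this is at most $|\alpha_j - \alpha^\star|\cdot 8M \le 8M/N$, i.e.\ of order $1/\sqrt{n}$ for $N = \sqrt{n}$.

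Combining the pieces, with probability at least $1-\gamma$ and uniformly over $\pi^\star \in \Pi_{\mbox{\tiny o}}$,
\[
\bar{W}_{j,n} - \widehat\Phi_{\alpha_j}(\pi^\star) \le 2\Delta + \frac{8M}{N} \le \frac{1}{\sqrt{n}}\Big(2c_0\big(\sqrt{v} + \sqrt{\log(1/\gamma)}\big) + 8M + O(1)\Big),
\]
so that choosing $\underline{b}$ to dominate the bracketed constant gives $\bar{W}_{j,n} - \widehat\Phi_{\alpha_j}(\pi^\star) \le \lambda/\sqrt{n}$ whenever $\lambda \ge \underline{b}(\sqrt{v} + \sqrt{\log(1/\gamma)} + 1)$, hence $\pi^\star \in \widehat{\Pi}_{\mbox{\tiny o}}(\lambda)$ and the containment follows. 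I expect term (II) to be the main obstacle: unlike the distance bound of Theorem \ref{lem:cons_dr}, containment requires the slack $\lambda/\sqrt{n}$ to simultaneously absorb the $\sqrt{v/n}$-scale estimation error and the $1/N$-scale discretization error, which is precisely why the grid must be as fine as $N = \sqrt{n}$ and why $\lambda$ must grow with $\sqrt{v} + \sqrt{\log(1/\gamma)} + 1$. The clean cancellation in (II) hinges on $\pi^\star$ being an \emph{exact} maximizer at $\alpha^\star$, so that only the first-order-in-grid-spacing error survives.
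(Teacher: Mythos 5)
Your proof is correct and takes essentially the same route as the paper's own argument: both characterize each Pareto optimal $\pi^\star$ via Lemma \ref{lem:hyperplane}, pass to the nearest grid point at cost $O(M/N)$ using bounded outcomes, control the empirical--population welfare gap uniformly over $\Pi$ and the grid via the concentration bound of Lemma \ref{lem:dr} (the same engine behind Theorem \ref{lem:cons_dr}), and then choose $\underline{b}$ large enough that $\lambda/\sqrt{n}$ absorbs both the $\sqrt{v/n}+\sqrt{\log(1/\gamma)/n}$ estimation error and the $1/N$ discretization error. Your three-term decomposition with the exact-maximizer cancellation in term (II) is simply a more explicit bookkeeping of the paper's two-step reduction.
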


 Theorem \ref{thm:1b} complements Theorem \ref{lem:cons_dr} showing that it suffices $\lambda = \mathcal{O}(1)$ (and hence a slackness of order $\mathcal{O}(1/\sqrt{n})$) for the set of estimated allocations to contain the Pareto frontier. The proofs of Theorems \ref{lem:cons_dr}, \ref{thm:1b} are contained in Appendix \ref{sec:a1}.
Theorem \ref{thm:1b} uses \textit{finite sample} properties of the estimated (discretized) frontier showing uniform concentration. The choice of $\lambda/\sqrt{n}$ matches the upper-bound on the maximal deviations, and the choice $N = \sqrt{n}$ guarantees that the grid is coarse enough to control the estimation error.

\begin{rem}[Non-binary policies]  While our framework considers a binary action spaces, our guarantees also generalize to multi-action policies. In such a case, the bound depends on the entropy integral of $\Pi$ and the derivation leverages concentration of $\Big|\hat{W}_d(\pi) - W_d(\pi)\Big|$ for multi-action spaces \citep{zhou2018offline}. See Appendix \ref{sec:multi_actions} for a discussion. \qed 
\end{rem}

\subsection{General Fairness Bounds}

Given the guarantees on the frontier, we next analyze guarantees on fairness. We start our discussion by introducing regret bounds for generic notions of unfairness under high-level assumptions and then provide examples of upper and lower bounds.

\begin{ass}[High-level conditions on UnFairness] \label{ass:unique2} For some $\eta > 0, \gamma > 0$, 
$$\mathbb{P}\Big(\sup_{\pi \in \Pi} \Big|\widehat{\mathcal{V}}_n(\pi) - \mathrm{UnFairness}(\pi)\Big| \le \mathcal{K}(\Pi, \gamma) n^{-\eta}\Big) \ge 1 - \gamma
$$
for some $\mathcal{K}(\Pi, \gamma) < \infty$. Also assume that $\mathrm{UnFairness}(\pi)$ is uniformly bounded.
\end{ass} 
 Assumption \ref{ass:unique2} states that the estimated unfairness converges with probability $1 - \gamma$ to population unfairness uniformly over $\Pi$ at rate $n^{-\eta}$ for some arbitrary $\eta$. The constant $\mathcal{K}(\Pi, \gamma)$ depends on the function class' complexity and the probability $\gamma$. We characterize the constant and the rate $\eta$ in examples in Section \ref{sec:examples} and Appendix \ref{sec:a25}.  
\begin{thm} \label{thm:2} Let Assumptions \ref{ass:unconf}, \ref{ass:moment}-\ref{ass:unique2} hold. Then for some constants $0 < c_0, \underline{b} < \infty$, independent of $n$, $\lambda \ge \underline{b} (\sqrt{v} + \sqrt{\log(2/\gamma)}), N = \sqrt{n},$
with probability at least $1 - 2 \gamma$, 
\normalfont{
\begin{equation}  \label{eqn:distributions}
\begin{aligned} 
&  \mathrm{UnFairness}(\hat{\pi}_\lambda) - \inf_{\pi \in \Pi_{\textrm{\mbox{\tiny o}}}} \mathrm{UnFairness}(\pi) \le \frac{c_0}{\sqrt{n}} + \frac{c_0 \mathcal{K}(\Pi, \gamma)}{n^{\eta}}.  
\end{aligned} 
\end{equation} 
}
\end{thm}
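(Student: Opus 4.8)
The plan is to combine the containment result of Theorem \ref{thm:1b} with the uniform fairness-convergence of Assumption \ref{ass:unique2} in a standard oracle-inequality decomposition, carried out on the intersection of two favorable events. First I would fix $\mathcal{E}_1 = \{\Pi_{\mbox{\tiny o}} \subseteq \widehat{\Pi}_{\mbox{\tiny o}}(\lambda)\}$ and $\mathcal{E}_2 = \{\sup_{\pi \in \Pi}|\widehat{\mathcal{V}}_n(\pi) - \mathrm{UnFairness}(\pi)| \le \mathcal{K}(\Pi,\gamma) n^{-\eta}\}$. For the stated $\lambda \ge \underline{b}(\sqrt{v} + \sqrt{\log(1/\gamma)} + 1)$ and $N = \sqrt{n}$, Theorem \ref{thm:1b} gives $\mathbb{P}(\mathcal{E}_1) \ge 1 - \gamma$ and Assumption \ref{ass:unique2} gives $\mathbb{P}(\mathcal{E}_2) \ge 1 - \gamma$, so a union bound yields $\mathbb{P}(\mathcal{E}_1 \cap \mathcal{E}_2) \ge 1 - 2\gamma$, matching the probability in the statement. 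Everything that follows is deterministic on $\mathcal{E}_1 \cap \mathcal{E}_2$.

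Second, I would exploit feasibility. Since $\inf_{\pi \in \Pi_{\mbox{\tiny o}}} \mathrm{UnFairness}(\pi)$ need not be attained, pick for arbitrary $\epsilon>0$ a policy $\pi_\epsilon \in \Pi_{\mbox{\tiny o}}$ with $\mathrm{UnFairness}(\pi_\epsilon) \le \inf_{\pi \in \Pi_{\mbox{\tiny o}}} \mathrm{UnFairness}(\pi) + \epsilon$. On $\mathcal{E}_1$ we have $\pi_\epsilon \in \widehat{\Pi}_{\mbox{\tiny o}}(\lambda)$, so $\pi_\epsilon$ is admissible in the minimization defining $\hat{\pi}_\lambda$ in Equation \eqref{eqn:pp}, and optimality of $\hat{\pi}_\lambda$ gives $\widehat{\mathcal{V}}_n(\hat{\pi}_\lambda) \le \widehat{\mathcal{V}}_n(\pi_\epsilon)$. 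Third, I would convert empirical to population unfairness by applying $\mathcal{E}_2$ twice, obtaining
\[
\mathrm{UnFairness}(\hat{\pi}_\lambda) \le \widehat{\mathcal{V}}_n(\hat{\pi}_\lambda) + \mathcal{K}(\Pi,\gamma) n^{-\eta} \le \widehat{\mathcal{V}}_n(\pi_\epsilon) + \mathcal{K}(\Pi,\gamma) n^{-\eta} \le \mathrm{UnFairness}(\pi_\epsilon) + 2 \mathcal{K}(\Pi,\gamma) n^{-\eta}.
\]
Substituting the choice of $\pi_\epsilon$ and sending $\epsilon \to 0$ produces the $n^{-\eta}$ term with constant $c_0 = 2$.

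The remaining $c_0/\sqrt{n}$ slack is what I would attach to carry the welfare-side concentration explicitly. Theorem \ref{thm:1b} guarantees feasibility of $\pi_\epsilon$ only because the discretized empirical frontier concentrates around the population frontier at rate $1/\sqrt{n}$ (Theorem \ref{lem:cons_dr}); tracking that deviation through the feasibility step, rather than hiding it inside $\mathcal{E}_1$, contributes the additive $c_0/\sqrt{n}$, and it also keeps the bound valid in the regime $\eta \ge 1/2$ where the fairness error is no longer the bottleneck. The main obstacle I anticipate is precisely this feasibility step: the comparator $\pi_\epsilon$ is defined through the population welfares $W_0, W_1$ and lies on the exact frontier $\Pi_{\mbox{\tiny o}}$, whereas $\widehat{\Pi}_{\mbox{\tiny o}}(\lambda)$ is built from the estimated welfares and the slack $\lambda/\sqrt{n}$. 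Verifying that $\pi_\epsilon$ survives the empirical constraints requires the uniform welfare-deviation control of Theorem \ref{lem:cons_dr} together with the calibration $\lambda = \Theta(\sqrt{v} + \sqrt{\log(1/\gamma)} + 1)$ and $N = \sqrt{n}$, so that the discretization error of order $1/N$ and the estimation error of order $\sqrt{v/n}$ are both dominated by the slack. Once feasibility is secured, the rest is the routine bookkeeping above; no smoothness or uniqueness of the fairness minimizer is needed, since the bound is one-sided and $\mathrm{UnFairness}$ is assumed uniformly bounded.
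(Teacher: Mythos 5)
Your proposal is correct and follows essentially the same route as the paper's proof: invoke Theorem \ref{thm:1b} and Assumption \ref{ass:unique2} on an intersection of two events (union bound giving $1-2\gamma$), then run the standard oracle inequality using feasibility of a frontier comparator and optimality of $\hat{\pi}_\lambda$, applying the uniform deviation bound twice to get the $2\mathcal{K}(\Pi,\gamma)n^{-\eta}$ term. The only cosmetic difference is that you compare against an $\epsilon$-near-minimizer $\pi_\epsilon \in \Pi_{\textrm{\mbox{\tiny o}}}$ while the paper bounds against $\inf_{\pi \in \widehat{\Pi}_{\mbox{\tiny o}}(\lambda)}$ via a supremum over $\widehat{\Pi}_{\mbox{\tiny o}}(\lambda) \subseteq \Pi$; both yield the same bound, and the extra $c_0/\sqrt{n}$ in the statement is harmless slack in either argument (your closing worry about re-tracking the welfare concentration through the feasibility step is unnecessary, since event $\mathcal{E}_1$ already delivers feasibility outright).
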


The proof is contained in Appendix \ref{sec:a1} and leverages  Theorem \ref{thm:1b} to show that the set of Pareto allocations is contained with high-probability within the estimated allocations.
Theorem \ref{thm:2} characterizes the convergence rate of the UnFairness of the estimated policy relative to the lowest unfairness within the class of Pareto allocations.  
To our knowledge, this is the first result of this type of fair policy. The rate depends on the convergence rate of the estimated UnFairness. 
In the following paragraphs, we discuss examples and sufficient conditions for Assumption \ref{ass:unique2} to hold and formally characterize the rate of convergence $\eta$ and the constant $\mathcal{K}(\cdot)$.

\subsection{Regret: Examples and Rate Characterization} \label{sec:examples}

Here we discuss three examples, one based on policy predictions, a second based on the welfare-effect, and a third based on incentive compatibility.

\begin{defn}[Prediction disparity] \label{defn:pred}
Prediction disparity and its empirical counterpart take the following form
$$
C(\pi) = \mathbb{E}\Big[\pi(X, S) | S = 0\Big] - \mathbb{E}\Big[\pi(X,S) | S = 1\Big] , \quad \hat{C}(\pi) = \frac{\sum_{i=1}^n \pi(X_i) (1 - S_i)}{n(1 - \hat{p}_1)} - \frac{\sum_{i=1}^n \pi(X_i) S_i}{n \hat{p}_1},  
$$ 
\end{defn} 

Prediction disparity captures disparity in the treatment probability between groups. The second notion of UnFairness measures welfare disparities between the two groups.

\begin{defn}[Welfare disparity] \label{defn:welfare}
Define the welfare disparity and its empirical counterpart as 
 $$
 D(\pi) = W_0(\pi) - W_1(\pi) , \quad \widehat{D}(\pi) = \widehat{W}_0(\pi) - \widehat{W}_1(\pi). 
 $$ 
\end{defn}

Between-groups disparity captures the difference in \textit{welfare} between the advantaged group $(S = 0)$ and the disadvantaged group $(S = 1)$, relative to the baseline.\footnote{Recall the definition of welfare in Equation \eqref{eqn:welfare} where we only consider the effect under treatment the effect under control.}

The policymaker may also consider $|D(\pi)|$ or $|C(\pi)|$ as measures of UnFairness, in which case the policymaker treats the two groups symmetrically, whose regret bounds are discussed in Appendix \ref{sec:absolute}. 
One last example is based on the notion of incentive-compatibility, motivated by discussion in \cite{narita2021incorporating}.

\begin{defn}[Incentive compatibility] \label{defn:ic} Incentive compatibility is defined as 
$$
\small 
\begin{aligned} 
\mathcal{I}(\pi) = I_1(\pi) + I_0(\pi), \quad I_s(\pi) = \mathbb{E}\Big[\pi(X,1 - s)(Y(1) - Y(0)) | S = s\Big] - \mathbb{E}\Big[\pi(X,s)(Y(1) - Y(0)) | S = s\Big] 
\end{aligned} 
$$ 
with estimator $\widehat{\mathcal{I}}(\pi) = \hat{I}_1(\pi) +  \hat{I}_0(\pi)$, $\hat{I}_s(\pi) = \frac{1}{n} \sum_{i=1}^n  (\hat{\Gamma}_{1, s,i} - \hat{\Gamma}_{0,s,i})\pi(X_i,1 - s) - \widehat{W}_s(\pi)$.  
\end{defn} 
Here $I_s(\pi)$ captures fairness based on the incentive of an individual in revealing her sensitive attribute: $I_s(\pi)$ is positive if the welfare of an individual generated from reporting her sensitive attribute incorrectly is larger than the welfare obtained if she reported it correctly. 
 Additional notions, such as predictive parity, can also be considered and omitted for the sake of brevity, see Appendix \ref{sec:a25} for details. 
For each of the three definitions above, UnFairness linear in $\pi$, and hence optimization can be performed via MIQP.

\subsubsection{Upper and Lower Bounds: Rate Characterization}

In the following theorem, we discuss the rate of the regret-bound.

\begin{thm}[Regret bound] \label{thm:between_groups} Let Assumptions \ref{ass:unconf}, \ref{ass:moment}-\ref{ass:dr} hold. Let either \\ (i) $\mathrm{UnFairness}(\pi) = D(\pi)$, and $\widehat{\mathcal{V}}_n(\pi) = \widehat{D}(\pi)$, (ii) or $\mathrm{UnFairness}(\pi) = C(\pi)$, and $\widehat{\mathcal{V}}_n(\pi) = \widehat{C}(\pi)$, (iii) or $\mathrm{UnFairness}(\pi) = \mathcal{I}(\pi)$, and $\widehat{\mathcal{V}}_n(\pi) = \widehat{\mathcal{I}}(\pi)$. Then for some constants $0 < \underline{b}, c_0 < \infty$ independent of the sample size, for any $\gamma \in (0,1), \lambda \ge \underline{b} (\sqrt{v} + \sqrt{\log(2/\gamma)}), N = \sqrt{n}$,
with probability at least $1 - 2 \gamma$, 
$$
\mathrm{UnFairness}(\hat{\pi}_\lambda) - \inf_{\pi \in \Pi_{\textrm{\mbox{\tiny o}}}} \mathrm{UnFairness}(\pi) \le  c_0 \sqrt{\frac{  v }{n}} + c_0 \sqrt{\frac{\log(2/\gamma)}{n}}.   
$$ 
\end{thm}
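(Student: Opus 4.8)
The plan is to derive Theorem \ref{thm:between_groups} from the general regret bound of Theorem \ref{thm:2}, so the only real work is to verify its high-level hypothesis, Assumption \ref{ass:unique2}, for each of the three criteria with $\eta = 1/2$ and $\mathcal{K}(\Pi,\gamma)$ proportional to $\sqrt{v} + \sqrt{\log(2/\gamma)}$. Once that is established, substituting $\eta = 1/2$ and this choice of $\mathcal{K}(\Pi,\gamma)$ into \eqref{eqn:distributions} collapses both terms on the right-hand side to $O\!\left(\sqrt{v/n} + \sqrt{\log(2/\gamma)/n}\right)$ after renaming constants, which is precisely the asserted bound. Since the operators $D$, $C$, and $\mathcal{I}$ are all \emph{linear} in $\pi$ and built from the doubly-robust scores $\hat{\Gamma}_{d,s,i}$ (or, for $C$, from indicators and $\hat{p}_1$), verifying Assumption \ref{ass:unique2} amounts to a uniform-over-$\Pi$ concentration of each empirical operator around its population target at the parametric rate.

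The core step handles the welfare estimator, which covers cases (i) and (iii). For fixed $s$ I would decompose $\widehat{W}_s(\pi) - W_s(\pi)$ into (a) an \emph{oracle} empirical process $\frac{1}{n}\sum_i (\Gamma_{1,s,i} - \Gamma_{0,s,i})\pi(X_i,s) - W_s(\pi)$ evaluated at the true nuisances, and (b) a \emph{nuisance} remainder $\frac{1}{n}\sum_i (\hat{\Gamma}_{1,s,i} - \Gamma_{1,s,i} - \hat{\Gamma}_{0,s,i} + \Gamma_{0,s,i})\pi(X_i,s)$. Under Assumption \ref{ass:overlap} the scores are uniformly bounded, so $\{(\Gamma_{1,s} - \Gamma_{0,s})\pi(\cdot,s) : \pi \in \Pi\}$ is a uniformly bounded VC-subgraph class of index controlled by $v$ (Assumption \ref{ass:moment}); a symmetrization/Rademacher argument together with a bounded-differences inequality then controls (a) by $c_0\left(\sqrt{v/n} + \sqrt{\log(2/\gamma)/n}\right)$ with probability $1-\gamma$. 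For (b) I would use cross-fitting: conditioning on the auxiliary fold used to form $\hat{m}, \hat{e}, \hat{p}$, the doubly-robust construction removes the first-order dependence, leaving a conditionally mean-zero piece concentrating at rate $n^{-1/2}$ plus a product bias bounded via Cauchy--Schwarz and Assumption \ref{ass:dr} by $O(n^{-\xi_1 - \xi_2}) \le O(n^{-1/2})$, using $\xi_1,\xi_2 \ge 1/4$. Summing the $s=0$ and $s=1$ bounds gives the claim for $D$; applying the identical argument to the shifted policies $\pi(\cdot, 1-s)$, which share the VC index $v$, gives it for each $I_s$ and hence for $\mathcal{I}$.

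Case (ii) requires no doubly-robust machinery: $\widehat{C}(\pi) - C(\pi)$ involves only the classes $\{\pi(\cdot)(1-S) : \pi \in \Pi\}$ and $\{\pi(\cdot)S : \pi \in \Pi\}$, of VC index at most $v$, together with the scalar error $|\hat{p}_1 - p_1|$. The empirical-process part is again $O\!\left(\sqrt{v/n} + \sqrt{\log(2/\gamma)/n}\right)$, while $|\hat{p}_1 - p_1| = O_p(n^{-1/2})$ by a Hoeffding bound; a first-order expansion of $1/\hat{p}_1$ and $1/(1-\hat{p}_1)$ around the population values (valid since $p_1 \in (\delta, 1-\delta)$ by Assumption \ref{ass:overlap}) transfers this rate to $\widehat{C}$.

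The main obstacle is term (b): obtaining a \emph{uniform} bound over the infinite class $\Pi$ on the nuisance remainder while keeping the dependence on $v$ and $\gamma$ explicit. The two delicate points are (1) verifying that cross-fitting genuinely decouples the estimated nuisances from the empirical average, so the remainder is conditionally centered up to the product bias, and (2) controlling the supremum over $\pi$ of this conditionally-centered term at rate $n^{-1/2}$, which again rests on the VC-subgraph property of the relevant function class and a conditional concentration inequality. Once these are in place, all three cases assemble mechanically and feed into Theorem \ref{thm:2}.
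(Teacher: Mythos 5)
Your proposal is correct and follows essentially the same route as the paper: both reduce the theorem to Theorem \ref{thm:2} by verifying Assumption \ref{ass:unique2} with $\eta = 1/2$ and $\mathcal{K}(\Pi,\gamma) \propto \sqrt{v} + \sqrt{\log(2/\gamma)}$, using for $D$ and $\mathcal{I}$ the oracle-process-plus-nuisance-remainder decomposition with cross-fitting and the Cauchy--Schwarz product-bias bound $O(n^{-\xi_1-\xi_2})$ (the paper's Lemma \ref{lem:dr}), and for $C$ a VC/Rademacher bound on the indicator classes combined with $n^{-1/2}$ concentration of $\hat{p}_1$ under the overlap condition. The delicate points you flag (cross-fitting decoupling and uniformity over $\Pi$ of the remainder) are exactly the ones the paper's lemma handles.
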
 

The proof is included in Appendix \ref{sec:a1}. 
Theorem \ref{thm:between_groups} characterizes the regret bound for three different notions of UnFairness. The bound scales at rate $1/\sqrt{n}$. Here $\mathcal{K}(\Pi, \gamma) = \sqrt{v} + \sqrt{\log(2/\gamma)}, \eta = 1/2$ in Theorem \ref{thm:2}. The lower bound depends, however, on the notion of unfairness. Below, we derive a lower bound for any data-dependent policy which achieves the same rate for the predictive disparity.

\begin{thm}[Lower bound] \label{thm:lower_bound} Let $\Pi$ be such that $\pi(x,s)$ is constant in its last argument $s$ for all $x \in \mathcal{X}, \pi \in \Pi$, and with finite VC-dimension $v \ge 3$. Let $\mathrm{UnFairness}(\pi) = C(\pi)$, and $\widehat{\mathcal{V}}_n(\pi) = \widehat{C}(\pi)$. Let $\mathcal{U}$ be the set of distributions of $(X,S)$ and $\mathcal{P}(X,S) = \{ P_{Y,D | (X,S)}: \text{such that } |Y| < M \text{ a.s.}, \text{ and } P(D =1 |X, S) \in (\delta, 1 - \delta) \}$. 
 Then, there exists a distribution $P_{X,S,Y,D} = P_{X,S} P_{Y,D|X,S}$ with $P_{X,S} \in \mathcal{U}, P_{Y,D|X,S} \in \mathcal{P}(X,S)$, such that for every rule $\pi_n \in \Pi_{\textrm{\mbox{\tiny o}}}$ based upon $(X_1,  S_1, Y_1, D_1), \cdots, (X_n, S_n, Y_n, D_n)$, for finite constants constant $0 < c_0, \bar{C} < \infty$ independent of $n$,  and any $\gamma \in (0, 1/4)$,  $n \ge \max\{\bar{C} \log(1/(4 \gamma)), v-1\}$, with probability at least $\gamma$
$$
\begin{aligned}
\mathrm{UnFairness}(\pi_n) -  \inf_{\pi \in \Pi_{\textrm{\mbox{\tiny o}}}} \mathrm{UnFairness}(\pi) \ge \sqrt{\frac{c_0 \log(\frac{1}{4 \gamma})}{n}}. 
\end{aligned} 
$$  
\end{thm}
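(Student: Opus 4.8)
The plan is to prove this via a reduction to a two-point (Le Cam / Tsybakov) testing problem. I will exhibit a pair of data-generating distributions $P_0,P_1$ on $(X,S,Y,D)$ that (a) are statistically nearly indistinguishable from $n$ samples, with per-observation Kullback--Leibler divergence of order $\delta^2$, yet (b) have fair-Pareto targets $\inf_{\pi\in\Pi_{\mbox{\tiny o}}}\mathrm{UnFairness}(\pi)$ that are attained at \emph{different} policies whose prediction disparities differ by $\Omega(\delta)$. Calibrating the perturbation $\delta\asymp\sqrt{\log(1/(4\gamma))/n}$ will then force any data-dependent selection rule to pick the wrong fairest policy, and hence incur regret $\gtrsim\delta$, with probability at least $\gamma$ under one of the two instances.

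The construction exploits the restriction that $\pi(x,s)$ is constant in $s$. Under this restriction $C(\pi)=\mathbb{E}[\pi(X)\mid S=0]-\mathbb{E}[\pi(X)\mid S=1]$ depends only on the within-group covariate marginals $P_{X\mid S}$, whereas membership in the Pareto frontier $\Pi_{\mbox{\tiny o}}$ is governed by the welfares $W_s(\pi)=\mathbb{E}[(Y(1)-Y(0))\pi(X)\mid S=s]$, i.e. by the conditional treatment effects. I would place mass on a small, fixed set of covariate values and, using the hypothesis $v\ge 3$, choose $\Pi$ rich enough to contain two policies $\pi_A,\pi_B$ that are both \emph{strictly} Pareto optimal under $P_0$. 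I then take $P_1$ to differ from $P_0$ only through a $\delta$-shift of $P_{X\mid S}$, chosen so that the sign of $C(\pi_A)-C(\pi_B)$ flips between the two instances. By strict efficiency and continuity, for $\delta$ small both $\pi_A$ and $\pi_B$ remain Pareto optimal under both $P_0$ and $P_1$ (so that a rule constrained to output elements of $\Pi_{\mbox{\tiny o}}$ can legitimately return either under either instance), while the identity of the fairest element switches: under $P_0$ the infimum is attained at $\pi_A$ and under $P_1$ at $\pi_B$, with $|C(\pi_A)-C(\pi_B)|=\Omega(\delta)$. Consequently any rule that returns the policy optimal for the wrong instance pays regret $\Omega(\delta)$.

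With the instances in hand, I would apply the standard two-point inequality. The $n$-fold KL divergence is bounded by $n\cdot\mathcal{O}(\delta^2)$, so Tsybakov's two-point lower bound gives that, for any rule, the probability of selecting the fairest policy of the wrong instance is at least $\tfrac14 e^{-n K}$ with $K=\mathcal{O}(\delta^2)$. Choosing $\delta^2\asymp \log(1/(4\gamma))/n$ makes $nK\le \log(1/(4\gamma))$, hence this probability is at least $\gamma$, and the associated regret is at least a constant multiple of $\delta\asymp\sqrt{\log(1/(4\gamma))/n}$, which is the claimed bound; the condition $n\ge \bar C\log(1/(4\gamma))$ guarantees $\delta$ is small enough for the construction to be valid. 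It remains to verify that both instances satisfy the maintained assumptions: bounded outcomes $|Y|<M$, overlap $P(D=1\mid X,S)\in(\delta_0,1-\delta_0)$, and $\Pi$ of finite VC dimension; the role of $v\ge 3$ is precisely to ensure enough covariate points can be shattered to embed the two competing Pareto-optimal policies and realize the sign flip.

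The main obstacle is the simultaneous control of the two competing requirements, which is exactly the delicate part of the design: the perturbation must be small enough ($\mathcal{O}(\delta^2)$ in KL) to be undetectable, yet it must induce an $\Omega(\delta)$ jump in the \emph{fair-Pareto selection}, all while keeping both candidate policies strictly on the frontier so that the constraint $\pi_n\in\Pi_{\mbox{\tiny o}}$ does not trivialize the problem or render an output illegal under the true law. In effect I must engineer a discontinuity of the fairest-Pareto-policy map in a parameter that is only $\sqrt{1/n}$-estimable. A secondary point to handle carefully is the quantifier order in the statement (``there exists a distribution such that for every rule''): this will be obtained by taking the harder of the two symmetric instances in the two-point argument, with the constant $c_0$ absorbing the resulting factors, and the rate then matches the upper bound of Theorem~\ref{thm:between_groups}, establishing minimax optimality for the prediction-disparity criterion.
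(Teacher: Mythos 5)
Your two-point Le Cam plan is workable and would deliver the claimed $\sqrt{\log(1/(4\gamma))/n}$ rate, but it takes a genuinely different route from the paper. The paper eliminates what you correctly identify as the delicate part at the very first step: it sets $Y(1)=Y(0)=0$ almost surely, so that \emph{every} policy is Pareto optimal, $\Pi_{\mbox{\tiny o}}=\Pi$, and the constraint $\pi_n\in\Pi_{\mbox{\tiny o}}$ carries no content — there is no frontier geometry to build or to preserve under perturbation. It then observes that $C(\pi)$ is proportional to the risk of the classifier $\pi(X)$ predicting the sensitive attribute $S$ under a weighted $0$--$1$ loss, and adapts the classification lower-bound machinery of \cite{devroye2013probabilistic} (Theorems 14.5--14.6): a family of distributions indexed by balanced bit vectors $b\in\{0,1\}^{v-1}$ supported on $v$ shattered points, with $P(S=1\mid X=x_i)=p_1\pm c$ encoding $b_i$, and LeCam's inequality applied to complementary pairs $(b,b^c)$. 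Your construction instead keeps treatment effects nontrivial, so you must (i) exhibit $\pi_A,\pi_B$ that stay strictly Pareto optimal under both instances, (ii) control $C$ over the \emph{entire} frontier, not just at $\pi_A,\pi_B$, since the rule may output any frontier element, and (iii) verify that the $O(\delta)$ shift in $P_{X\mid S}$ ejects neither policy from the frontier. These steps are feasible — for instance, two support points at which the group-specific conditional treatment effects have opposite signs make ``treat only $x_1$'' and ``treat only $x_2$'' strict frontier points whose $C$-values can be set to $\mp c\delta$, with the remaining policies' fairness sandwiched in between — but they are exactly the work your sketch defers, and exactly what the paper's $Y\equiv 0$ trick renders unnecessary. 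In exchange, your route shows that hardness persists even on a nondegenerate frontier, and it dispenses with the bit-vector family, which is overkill here because the stated bound carries no $\sqrt{v/n}$ term. One caveat: the quantifier issue you flag is not fixed by ``taking the harder of the two instances,'' since which instance is harder depends on the rule; both Le Cam-type arguments intrinsically prove ``for every rule there exists a hard distribution.'' But the paper's own proof performs the same swap (it bounds $\sup_b P_b(\mathrm{regret}>t)$ for each fixed rule and then asserts the conclusion ``for some distribution, for all $\pi_n$''), so on this point your proposal and the paper are on equal footing.
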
 

The proof is contained in Appendix \ref{sec:a1}, and, to our knowledge, it is the first result of this type for fair and Pareto optimal policies. 
The lower bound states that we can find a distribution and some positive (non-vanishing) probability $\gamma$ such that any data-dependent policy $\pi_n$ achieves a regret which scales to zero at a rate no faster than $1/\sqrt{n}$. Observe that a direct corollary of such result is that the rate of the lower bound is also achieved in expectation. The condition imposes a restriction on the set of policies $\Pi$: $\Pi$ does not contain policies that use the sensitive attribute as a covariate. This class of policies occurs if anti-discriminatory laws are enforced and incorporated over the set $\Pi$. The lower bound applies to prediction disparity, and we leave to future research a more comprehensive study of lower bounds under generic notions of fairness. The derivation modifies arguments in the empirical risk minimization literature \citep{devroye2013probabilistic}, due to the dependence of the objective function with the \textit{conditional} probability of treatment.



 Throughout this section, we have considered \textit{distributional} notions of fairness, i.e., they depend on distributional statements relative to the sensitive attribute, often used in the literature \citep{kasy2020, donini2018empirical, narita2021incorporating}.
 \textit{Counterfactual} notions depend instead on counterfactual statements relative to the sensitive attribute \citep{kilbertus2017avoiding}. We discuss one counterfactual notion in Section \ref{sec:counterfactual}.


\subsection{Computational Complexity} \label{sec:complexity} 

We conclude this section with a discussion on the computational complexity of the procedure. 
First, consider the case where $|\mathcal{X}| < \infty$, defines a \textit{finite} number of strata, as often assumed in economic applications \citep[e.g.,][]{manski2004}. Since $|\mathcal{X}| < \infty$, let $X$ be a set of dummies $X \in \{0,1\}^p, \sum_j X^{(j)} = 1$, and $\pi(X, S) = X\beta_S, \beta_0, \beta_1 \in [0,1]^p$, where $\beta$ defines the treatment probability for a given individual type. Note that the result continues to hold if we require that $\pi$ assigns treatments on a finite number of strata but $X$ is continuous. 


\begin{prop} \label{prop:polynomial} Let $|\mathcal{X}| < \infty$,  and $\pi(X, S) = X \beta_S, \beta_0, \beta_1, \in [0,1]^p$. Suppose that you can write $\mathcal{V}_n(\pi) = g(\sum_{i=1}^n \hat{F}_i \pi(X_i, S_i))$, for some arbitrary $\hat{F}_i$ and either $g(x) = x$ or $g(x) = |x|$. Let $||\hat{F}_i||_{\infty}, ||\hat{\Gamma}_i||_{\infty} < B < \infty$ be uniformly bounded.  Then there exists an algorithm which solves Equation \eqref{eqn:opt1}, with running time $\mathcal{O}(\sqrt{n} p^{\omega})$, for some finite $\omega < \infty$. 
\end{prop} 

The proof is contained in Appendix \ref{sec:proof_complexity}. Proposition \ref{prop:polynomial} states that there exists an exact algorithm with a running time that is polynomial in the number of types and which scales at a rate $\sqrt{n}$ in the number of observations. The exponent $\omega$ depends on the algorithm.\footnote{Classic examples include Vaidya's and Karmarkar's  algorithm \citep{vaidya1990algorithm, karmarkar1984new}.} The intuition is that we can represent the optimization problem in Equation \eqref{eqn:opt1} as a sequence of $\sqrt{n}$ many linear programs (see Appendix \ref{sec:proof_complexity}). 

For generic function classes, the sequence of linear programs described above is not possible. Two examples are the maximum score with continuous variables and the optimal classification trees. These methods, however, admit a mixed-integer linear program (MILP), which can be solved \textit{exactly} with, e.g., Branch and Bound (BB) algorithms \citep{wolsey1999integer}. In generic settings, MILP is known to be NP-hard in the \textit{worst-case} scenario, hence infeasible for large samples.
Here, we characterize properties when an early termination is imposed. Namely, with an early termination, the BB algorithm reports an upper bound on the distance from the best objective (gap), informative for the regret.  



\begin{prop} \label{prop:early_termination} 
Define $\bar{W}_{j,n}^\delta$ the value function which maximizes $\alpha_j \hat{W}_0(\pi) + (1 - \alpha_j) \hat{W}_1(\pi)$ with an early stopping criterion which stops when the estimated bound on the gap is $\delta$. Define $\hat{\pi}_{\lambda}^{\delta}$ the solution obtained after running the optimization algorithm in Equation \eqref{eqn:opt1} which stops whenever the estimated gap is $\delta$, and which replaces $\bar{W}_{j,n}$ in Constraint (B) with $\bar{W}_{j,n}^{\delta}$. Let the conditions in Theorem \ref{thm:between_groups} hold.
For some $0 < \underline{b}, c_0 < \infty$ independent of $n$, for any $\gamma \in (0,1), \lambda \ge \underline{b} (\sqrt{v} + \sqrt{\log(2/\gamma)}), N = \sqrt{n}$,
with probability at least $1 - 2 \gamma$, 
$$
\mathrm{UnFairness}(\hat{\pi}_\lambda^\delta) - \inf_{\pi \in \Pi_{\textrm{\mbox{\tiny o}}}} \mathrm{UnFairness}(\pi) \le  c_0 \sqrt{\frac{  v }{n}} + c_0 \sqrt{\frac{\log(2/\gamma)}{n}} + \delta.   
$$ 
\end{prop} 

The proof is in Appendix \ref{sec:proof_complexity}. Proposition \ref{prop:early_termination} shows that the effect of early termination with gap $\delta$ is informative (and can be chosen appropriately) for the policy regret.

\section{Counterfactual UnFairness} \label{sec:counterfactual}

This section is of independent interest, and it discusses a novel notion of UnFairness which connects the literature on causal fairness \citep{kilbertus2017avoiding} and the economic literature on envy-freeness \citep{varian1976two}. The notion is based on counterfactual statements relative to the \textit{sensitive} attribute. We sketch the main intuition here and defer details to Appendix \ref{sec:counterfactual_app}. This section defines $Y(d,s), X(s)$ the potential outcome and covariates as functions of the sensitive attribute $s$.  
The following causal model is considered. 

\begin{ass} \label{ass:unconfounded2} 
Let (A) $Y(d,s) \perp (D, S) |  X(s)$, (B) $X(s) \perp S$. 
\end{ass}

Assumption \ref{ass:unconfounded2} is required for estimation with a counterfactual fairness and not for notions of fairness discussed in the previous sections. Condition (A) and (B) in Assumption \ref{ass:unconfounded2} state that the sensitive attribute is independent of potential outcomes and covariates, while it allows for the dependence of \textit{observed} covariates and outcomes with the sensitive attribute. Indexing potential outcomes and covariates captures this dependence by the sensitive attribute. Dependence can also occur through \textit{unobserved} characteristics, which are dependent on both outcomes and sensitive attributes as long as observables do not \textit{causally} affect the sensitive attribute. See 
 Figure \ref{fig:confounding} for an illustration. Assumption \ref{ass:unconfounded2} holds when sensitive attributes do not have causal parents \citep[e.g.,][]{kilbertus2017avoiding}.\footnote{Whenever $S_i$ has not \textit{causal parents}, such as, for instance, age and gender in the application of interest, Assumption \ref{ass:unconf} trivially holds. The case of race represents instead an exception under which Assumption \ref{ass:unconf} may fail since an individual's race depends on parents' characteristics.  Assumption \ref{ass:unconf} can be stated after conditioning on baseline characteristics such as parents' observable characteristics to accommodate this latter case.}

Let the conditional welfare, for the policy function being assigned to the opposite attribute, i.e., the effect of $\pi(x, s_1)$, on the group $s_2$, conditional on covariates, be








\definecolor{aero}{rgb}{0.49,0.73,0.91}
\definecolor{airsuperiorityblue}{rgb}{0.45,0.63,0.76}
\definecolor{babyblueeyes}{rgb}{0.63,0.79,0.95}
\definecolor{beaublue}{rgb}{0.74,0.83,0.9}
\definecolor{glaucous}{rgb}{0.38,0.51,0.71}

\begin{figure} 
\centering
\begin{subfigure}[b]{0.4\textwidth}
\begin {tikzpicture}[-latex ,auto ,node distance =3 cm and 2cm ,on grid ,
semithick ,
state/.style ={ circle ,top color =glaucous!80 , bottom color =glaucous!80 ,
draw,glaucous!80 , text=black, minimum width =0.01 cm}]
\node[state] (C) at (0,0) {$S_i$};
\node[state] (A) [above right=of C] {$X_i$};
\node[state] (B) [below right=of A] {$D_i$};
\node[state] (D) [above right=of B] {$Y_i$};
\node[state] (E) at (6,1) {$U_i$};

\path (C) edge [bend left = 01,color=black!70] node[below =0.015 cm] {} (A);
\path (C) edge [bend right = 01,color=black!70] node[above =0.015 cm] {} (B);
\path (C) edge [bend left = 01,color=black!70] node[above =0.015 cm] {} (D);


\path (A) edge [bend right = 01, color=black!70] node[above =0.015 cm] {} (B);
\path (A) edge [bend left = 01, color=black!70] node[above =0.015 cm] {} (D);

\path (B) edge [bend left = 01, color= black!70] node[above =0.015 cm] {} (D);
\path (E) edge [bend left = 01, color= black!70] node[above =0.015 cm] {} (C);
\path (E) edge [bend left = 01, color= black!70] node[above =0.015 cm] {} (D);

\end{tikzpicture}
\end{subfigure}
\hfill 
\begin{subfigure}[b]{0.4\textwidth}
\definecolor{aero}{rgb}{0.49,0.73,0.91}
\definecolor{airsuperiorityblue}{rgb}{0.45,0.63,0.76}
\definecolor{babyblueeyes}{rgb}{0.63,0.79,0.95}
\definecolor{beaublue}{rgb}{0.74,0.83,0.9}
\definecolor{glaucous}{rgb}{0.38,0.51,0.71}

\begin {tikzpicture}[-latex ,auto ,node distance =3 cm and 2cm ,on grid ,
semithick ,
state/.style ={ circle ,top color =glaucous!80 , bottom color =glaucous!80 ,
draw,glaucous!80 , text=black, minimum width =0.01 cm}]
\node[state] (C) at (0,0) {$S_i$};
\node[state] (A) [above right=of C] {$X_i$};
\node[state] (B) [below right=of A] {$D_i$};
\node[state] (D) [above right=of B] {$Y_i$};
\node[state] (E) at (6,1) {$U_i$};

\path (C) edge [bend left = 01,color=black!70] node[below =0.015 cm] {} (A);
\path (C) edge [bend right = 01,color=black!70] node[above =0.015 cm] {} (B);
\path (C) edge [bend left = 01,color=black!70] node[above =0.015 cm] {} (D);


\path (A) edge [bend right = 01, color=black!70] node[above =0.015 cm] {} (B);
\path (A) edge [bend left = 01, color=black!70] node[above =0.015 cm] {} (D);

\path (B) edge [bend left = 01, color= black!70] node[above =0.015 cm] {} (D);
\path (E) edge [bend left = 01, color= black!70] node[above =0.015 cm] {} (C);
\path (E) edge [bend left = 01, color= black!70] node[above =0.015 cm] {} (A);

\end{tikzpicture}
  \end{subfigure}\caption{Directed acyclical graphs under which Assumption \ref{ass:unconf} does not hold in the presence of the confounder $U_i$, and holds in the absence of $U_i$. } \label{fig:confounding} 
\end{figure}

\vspace{-8mm}
\begin{align}
V_{\pi(x,s_1)}(x,s_2) &= \mathbb{E}\Big[\pi(x, s_1) Y_i(1, s_2) + (1 - \pi(x, s_1)) Y_i(0, s_2) \Big| X_i(s_2) = x\Big].
\end{align}

Envy refers to the concept that ``an allocation is equitable if and only if no agent prefers another agent’s bundle to his own'' \citep{varian1976two}. We say that the agent with attribute $s_2$ \textit{envies} the agent with attribute $s_1$, if her welfare (on the right-hand side of Equation \ref{eqn:welf}) exceeds the welfare she would have received had her covariate and policy been assigned  the opposite attribute (left-hand side of Equation \ref{eqn:welf}), namely 
\begin{equation} \label{eqn:welf}
\mathbb{E}_{X(s_1)}\Bigl[  V_{\pi(X(s_1), s_1)}\Big(X(s_1),s_2\Big) \Bigl] > 
\mathbb{E}_{X(s_2)}\Bigl[  V_{\pi(X(s_2), s_2)}\Big(X(s_2),s_2\Big) \Bigl]. 
\end{equation} 

We then measure the unfairness towards an individual with attribute $s_2$ as 
  \begin{equation} \label{eqn:ll}
  \begin{aligned} 
\mathcal{A}(s_1,s_2;\pi) =  \mathbb{E}_{X(s_1)}\Bigl[V_{\pi(X(s_1),s_1)} (X(s_1),s_2)  \Bigl]  - \mathbb{E}_{X(s_2)}\Bigl[  V_{\pi(X(s_2), s_2)}\Big(X(s_2),s_2\Big) \Bigl].
\end{aligned} 
\end{equation}
Whenever we aim not to discriminate in either direction, we take the sum of the effects $\mathcal{A}(s_1,s_2; \pi)$ and $\mathcal{A}(s_2,s_1; \pi)$.\footnote{Such an approach builds on the notion of ``social envy'' discussed in \cite{feldman1974fairness}.}
Equation \eqref{eqn:ll} connects to previous notions of \textit{counterfactual fairness} \citep{ kilbertus2017avoiding}, while, differently from previous references, (i) we provide formal justification to fairness using an envy-freeness argument; (ii) we construct the definition of fairness based on \textit{distributional} impact of the treatment allocation rule on the welfare. 
It is complementary to \cite{kusner2019making}, 
who compare the policy effects over individuals with the opposite sensitive attribute, lacking an envy-based justification. 
 On the other hand, a shortcoming of the above notion are that, similarly to the above references, it does not capture notions of incentive-compatibility differently from Definitions \ref{defn:ic}, discussed in the previous section. A second shortcoming is that it requires parametric estimation for minimax rate of convergence.  Namely, in Appendix 
\ref{sec:counterfactual_app} we show that for a suitable choice of the estimator of $\mathcal{A}(\cdot)$, with probability at least $1 - 2\gamma$
$$
\mathrm{UnFairness}(\hat{\pi}) - \inf_{\pi \in \Pi_{\textrm{\mbox{\tiny o}}}} \mathrm{UnFairness}(\pi) = \mathcal{O}\Big(\sqrt{\frac{1 }{ n^{2\zeta}} } + \sqrt{\frac{\log(2/\gamma)}{n}}\Big).   
$$ 
where here $\zeta$ denote the rate of convergence of the \textit{conditional mean function} (see Appendix \ref{sec:counterfactual_app}). The convergence rate is of order $n^{-1/2}$ for a parametric estimators and slower for non-parametric estimators compared to the notions of UnFairness discussed in Section \ref{sec:4}. The slower convergence rate is because counterfactual envy-freeness requires extrapolation on a different population. It opens new questions on the trade-offs between counterfactual and predictive notions of fairness.

\section{Empirical Application and Numerical Study} \label{sec:app}

We now discuss the empirical application. This section designs a policy that assigns students to entrepreneurial programs, while imposing fairness on gender. 
We use data that originated from \cite{lyons2017impact}. The paper studies the effect of an entrepreneurship training and incubation program for undergraduate students in  North  America on subsequent entrepreneurial activity. We have in total $335$ observations, of which $53\%$ treated and the remaining under control, and $26\%$ of applicants are women.\footnote{Data is available at \url{https://www.openicpsr.org/openicpsr/project/113492/version/V1/view}.} 
The population of interest is the pool of final applicants. We construct a targeting rule that assigns the award to the finalist based on the applicant's observable characteristics.  We maximize subsequent entrepreneurial activity, which is captured using a dummy variable, indicating whether the participant worked in the startup once the program ended. The study is a quasi-experiment, and, as noted in \cite{lyons2018does} the focus on the pool of final applicants mitigates selection on unobservables. Similarly to \cite{lyons2018does} we control for residual confounding through individual level observable characteristics and an observable quality score of the final applicant. 
Estimation of the nuisance functions is through penalized regression and discussed in Appendix \ref{sec:a3a}.

We consider three notions of UnFairness: (i)  \textit{counterfactual envy}; (ii)  (ii) \textit{predictive disparity}, which minimizes the probability of treatment between the two groups as in Definition \ref{defn:pred}; (iii)  \textit{predictive disparity} with absolute value (i.e. it denotes the absolute difference between the probability of treatment between the two groups).  While (ii) and (iii) do not impose conditions on the distribution of the sensitive attribute, counterfactual envy ((i)) assumes unconfoundedness also of the sensitive attribute. Such a condition is equivalent to assuming that the decision to change gender is exogenous. The reader may refer to Figure \ref{fig:confounding} for a graphical illustration. In case of failure of such assumption, the reader should refer to results for (ii) and (iii) only.

We consider \textit{linear} decision rules, given their large use in economics \citep{manski1975maximum}\footnote{This is estimated solving Equation \eqref{eqn:opt1} with a small slackness parameter of order $10^{-6}$. The reader may refer to Appendix \ref{sec:a24} for details. }
\begin{equation}  \label{eqn:functional_form}
\begin{aligned} 
&\Pi = \Big\{\pi(x, \mathrm{fem}) = 1\Big\{\beta_0 + \beta_1 \mathrm{fem} + x^\top \phi \ge 0 \Big\}, \quad (\beta_0, \beta_1, \phi) \in \mathcal{B}\Big\}.
\end{aligned} 
\end{equation}

We allow covariates $x$ to be either (1) the years to graduation, years of entrepreneurship, the region of the start-up, the major, the school rank, or (2) the score assigned to the candidate by the interviewer and the school rank. We refer to these two cases respectively as \textit{Case 1} and \textit{Case 2}. We consider in-sample capacity constraints imposed on the function class with at most 150 individuals selected for the treatment.\footnote{The validity of the in-sample capacity constraints follows from a uniform concentration argument of the capacity constraint around its expectation.  }



\begin{table*}[h]\centering
\caption{Empirical application. The first two columns report the welfare improvement plus the baseline value. The last column reports the importance weights assigned by the method to the welfare of female students. FTP Envy refers to the Fair Targeting rule that minimizes envy-freeness unfairness; FTP Predictive Disp (Definition \ref{defn:pred}) refers to the Pareto allocation that minimizes the difference in probability of treatment (Abs indicate in absolute value); 
Welfare Max. 1 denotes the method that maximizes the empirical welfare considering $\Pi_1$, and similarly Welfare Max. 2, 3 for the function classes, respectively 
$\Pi_2, \Pi_3$.  }    \label{tab:summaries} 
\ra{1.3}
\begin{tabular}{@{}lrrrcrrrcrrr@{}}\toprule
& \multicolumn{2}{c}{Welfare Female} & \ & \multicolumn{2}{c}{Welfare Male} & \ & \multicolumn{2}{c}{Importance Weight} \\
\cmidrule{2-3} \cmidrule{5-6}  \cmidrule{8-9}  
&Case 1& Case 2& &Case 1& Case 2 &&Case 1& Case 2  \\ \midrule
Fair Envy & $0.376$ & $0.372$ & & $0.272$ & $0.195$ & & $0.384$ & $0.487$ \\ 
FTP Pred & $0.432$ & $0.374$ & & $0.224$ & $0.180$ & & $0.847$ & $0.924$ \\ 
FTP Pred Abs & $0.433$ & $0.351$ & & $0.208$ & $0.235$ & & $0.924$ & $0.487$ \\ 
Welfare Max. 1  & $0.376$ & $0.351$ & & $0.272$ & $0.235$ & & $0.266$ & $0.266$ \\ 
Welfare Max. 2 & $0.288$ & $0.307$ & & $0.285$ & $0.238$ & & $0.266$ & $0.266$ \\ 
Welfare Max. 3 & $0.331$ & $0.307$ & & $0.265$ & $0.238$ & & $0.266$ & $0.266$ \\ 
\bottomrule
\end{tabular}
\end{table*}

We compare the proposed methodology to the method that maximizes the empirical welfare with the double robust score \citep{athey2017efficient}. We consider three nested function classes for the welfare maximization method. The first does not impose any restriction except for the functional form in Equation \eqref{eqn:functional_form}. The second, imposes that $\beta_1 = 0$. The third class imposes that $\beta_1 = 0$ \textit{and} that the average effect of the policy on females is at least as large as the one on males. The function classes are
$$
\small 
\begin{aligned} 
\Pi_1 = &\Pi, \quad \Pi_2 = \Big\{\pi(x) = 1\Big\{\beta_0  + x^\top \phi \ge 0 \Big\} \Big\}, \\
\Pi_3 = &\Big\{\pi(x) = 1\Big\{\beta_0  + x^\top \phi \ge 0 \Big\}, \quad \mathbb{E}_n\Big[(Y_i(1) - Y_i(0)) \pi(X_i)\Big| S = 1\Big] \ge \mathbb{E}_n\Big[(Y_i(1) - Y_i(0)) \pi(X_i)\Big| S = 0\Big] \Big\},  
\end{aligned} 
$$ 
where $\mathbb{E}_n[\cdot]$ denote the empirical expectation, estimated using the doubly-robust method.

Figure \ref{fig:pareto_front1} reports the Pareto frontier over each function class.\footnote{The value functions over the Pareto frontier can be exactly recovered as follows: we solve $2$ optimization problems for each $\alpha_j$, $j \in \{1, ..., N\}$. For each of these problems, we impose constraints on the welfare of one of the two groups being larger than the other and vice-versa; we then select the subset of solutions that are not Pareto dominated by the other, and we plot the corresponding welfares in the figure.} 
The figure shows that restricting the function class leads to Pareto-dominated allocations. This outlines the limitations of maximizing welfare under fairness constraints: such constraints can be harmful for both groups. Instead, the proposed method enforces Pareto optimality in the least constrained environment (red line) and selects the policy based on fairness considerations.


\begin{figure}
\centering
\includegraphics[scale=0.37]{./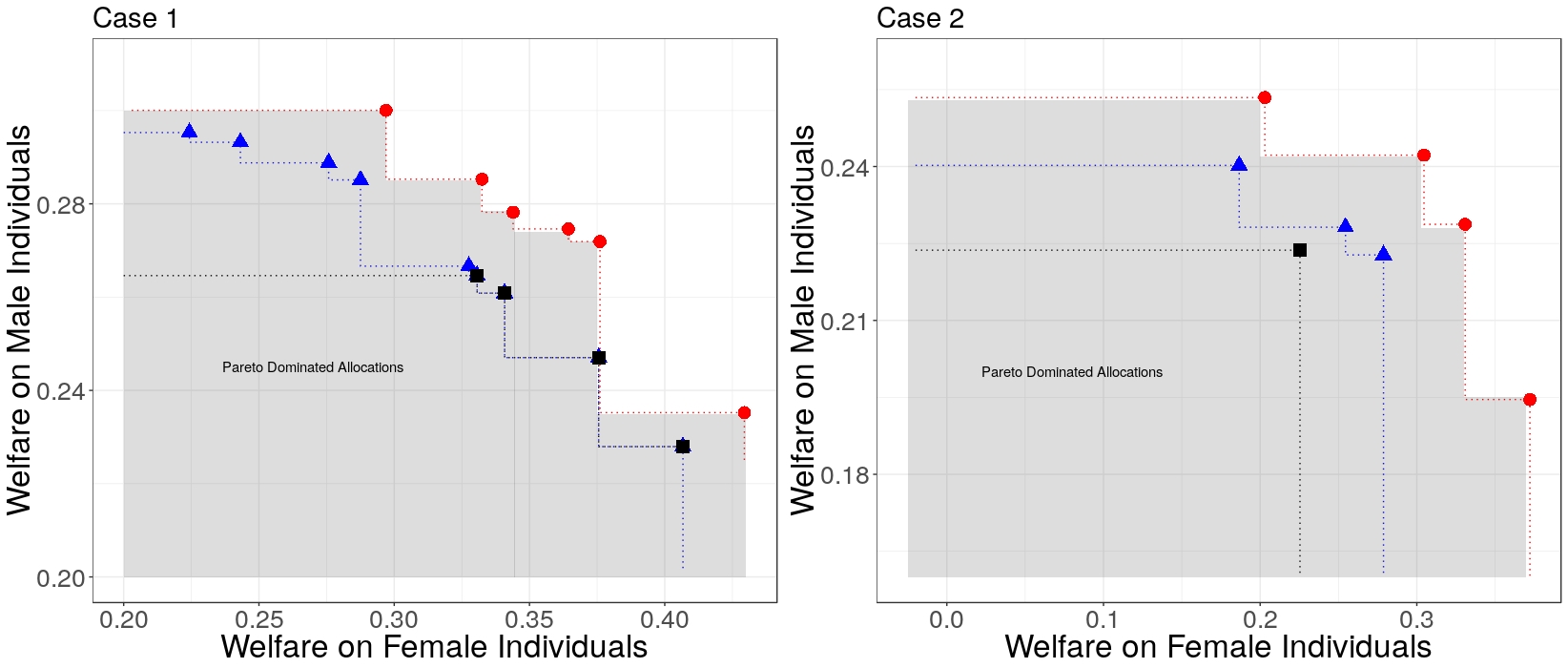}
\caption{(Discretized) Pareto frontier under deterministic linear policy rule estimated through MIQP. Dots denote Pareto optimal allocations. Red dots (circle) correspond to $\Pi_1$, blue dots (triangle) to $\Pi_2$ and black dots (square) to $\Pi_3$.
 }
\label{fig:pareto_front1} 
\end{figure}

In Table \ref{tab:summaries} we collect results\footnote{In computations, the competitors (Welfare Maximization) achieves the global optimum (dual gap equal to zero). For the proposed method we impose a maximum time limit on the MIQP.}   of the welfare on female and male students, as well as the relative importance weight assigned to each group for methods that maximize different UnFairness measures.
 In the table, we observe that minimizing Envy and Predictive Disparity leads to (weakly) larger welfare effects on the minority group. Envy leads to comparable results to welfare maximization for $Case 1$ due to the discreteness of the frontier.\footnote{Even if the weight $\alpha$ is larger for FTP Envy and FTP Parity Abs in $Case$ $1$ and $2$ respectively, this does not lead to a different result than Welfare Max. 1 due to the discreteness of the frontier.} We observe an increase in the welfare of female students when minimizing the \textit{absolute} difference between probabilities of treatments for $Case$ $1$, and comparable results to the welfare maximization method for $Case$ $2$. The table shows that the proposed method finds importance weights assigned to each group solely based on the notion of fairness provided, without requiring any prior specification of relative weights assigned to each group. The method that maximizes the empirical welfare instead assigns to the sensitive group the importance weight equal to its corresponding probability, small for minorities. In two settings only, the results coincide with the proposed method due to the discreteness of the frontier.

Figure \ref{fig:pareto_front2} reports the unfairness level for different sets of covariates, with unfairness measured as the difference in the probability of treatments between the two groups. Overall, Figure \ref{fig:pareto_front2} shows that the level of the unfairness of the proposed method is uniformly smaller than the unfairness achieved by maximizing welfare, consistently with results in Section \ref{sec:2}.   

Finally, we compare also with probabilistic decision rules, which are allowed in our framework. Figure \ref{fig:pareto_front2} also collects result also for a probabilistic policy function (in green) which is a super-set of $\Pi$ in Equation \eqref{eqn:functional_form} and assigns different probabilities of treatments to groups below and above the hyperplane in Equation \eqref{eqn:functional_form} (see Appendix \ref{sec:a24}).\footnote{Formally, the function class is $\Big\{\pi_\beta(X, S) = p_1 1\{X_i^\top \beta + S \beta_0  > 0\} + p_0 1\{X^\top \beta + S \beta_0 \le 0\}, p_1, p_0 \in [0,1], \beta \in \mathcal{B} \Big\}$.}  Results are mostly comparable across probabilistic or deterministic decisions. However, we find that a probabilistic decision enlarges the set of Pareto allocations in Appendix \ref{sec:a3a}. 



\begin{figure}
\centering
\includegraphics[scale=0.37]{./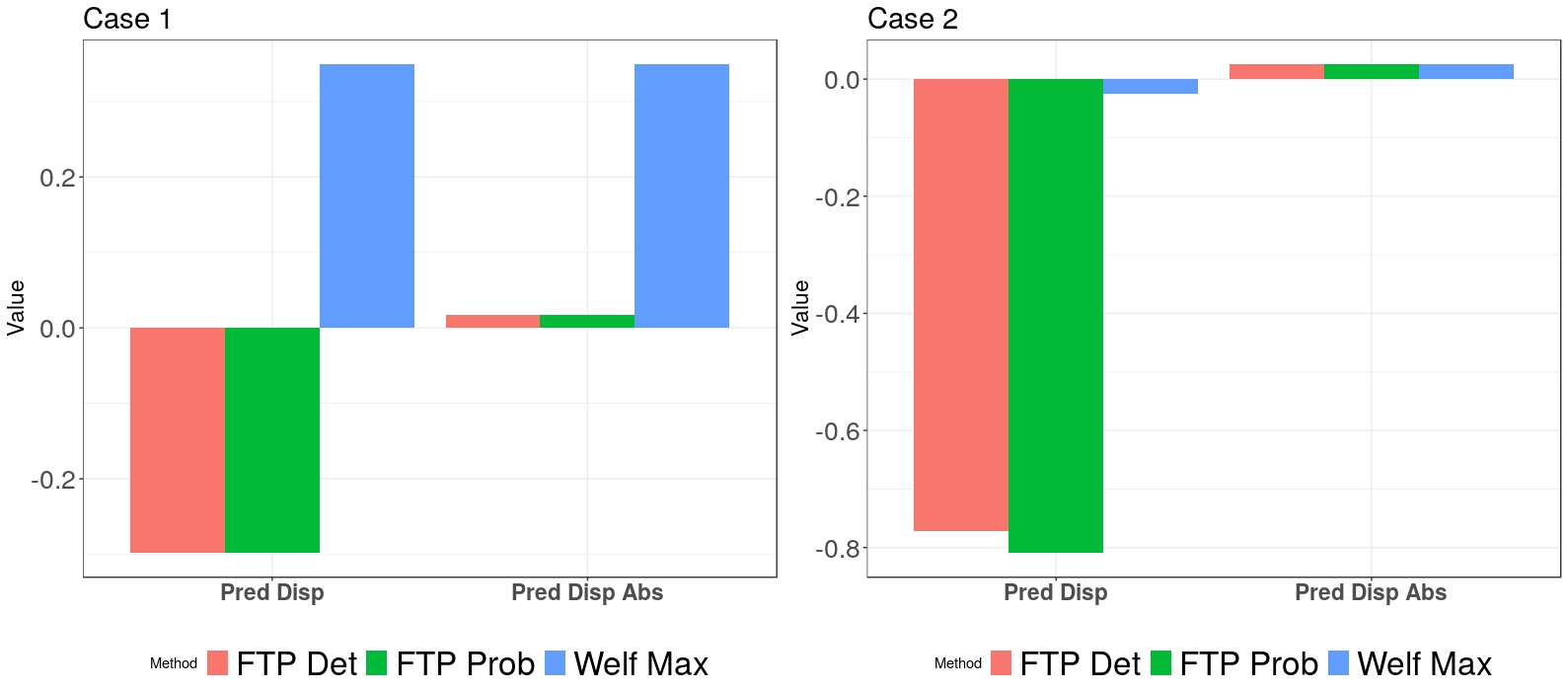}
\caption{Empirical application. Unfairness level of the Fair Policy Targeting method  with a deterministic allocation rule (in red), with a probabilistic decision rule (in green), and of the welfare maximization method (in blue). Pred disp refers to Definition \ref{defn:pred} and Pred disp abs to Definition \ref{defn:pred} in absolute value. Smaller values indicate smaller UnFairness. } 
\label{fig:pareto_front2} 
\end{figure}



\subsection{A Calibrated Experiment} \label{sec:numerics}

Next, we conduct a calibrated experiment.  We run the simulations calibrated to the same estimated model in the empirical application (using data from \citealt{lyons2017impact}). Covariates and sensitive attributes are drawn with replacement from the empirical distribution. 
Formally, we draw 
$
S_i \sim_{i.i.d.} \mathrm{Bern}(\hat{p}_1), 
$ 
where $\hat{p}_1$ is the probability of being female. We draw covariates $X | S = 1$ from the females' empirical distribution and similarly $X | S = 0$ for male applicants. We draw $D | X, S \sim \mathrm{Bern}(\hat{e}(X,S))$, and 
$
Y(d) = \hat{m}_{d,S}(X) + \varepsilon, \varepsilon \sim \mathcal{N}(0,1),  
$
where
$\hat{e}, \hat{m}$ are the estimated conditional mean and propensity score as in the application. We consider unfairness as prediction disparity in Definition \ref{defn:pred}. 

We consider three classes of policy functions: (i) probabilistic linear rule $x_1^\top \beta$, were $x_1$ is a set of binary variables
; (ii) maximum-score $\pi(x_2) = 1\{x_2^\top \beta > 0\}$;  (iii) classification tree with depth equal to two. 
For each method, we compute results over three variables  (minority, whether the student will graduate in more the one year, whether the average score exceeds the median score). We also include the average score as a continuous variable in the tree and the maximum score. We impose that the number of treated individuals does not exceed $150$ individuals across each design and report welfare per share of treated individuals.\footnote{Welfare is scaled by the unconditional treatment probability since the number of treated units is fixed.} We estimate the probabilistic rule with a linear program, the maximum score with a mixed-integer linear program, and the classification tree via exhaustive search. 
For the classification tree, we fix the number of possible splits to be four at equally spaced quantiles of each covariate distribution.\footnote{The choice of the exhaustive search follows in spirit to the discussion in \cite{zhou2018offline}, with differences due to the presence of multiple objectives and constraints here. Four splits facilitate computations. }
We run one-hundred replications, and over each replication, we correctly estimate the nuisance functions from the sampled observations.

In Figure \ref{fig:1_sim} we report the running time (in seconds) of different function classes, with the maximum score having two different stopping times (see also Appendix \ref{sec:a3b} for more results). Consistently with Section \ref{sec:complexity}, the complexity of the linear rule scales much slower than the one of the maximum score. Also, the optimal tree is much faster than the maximum score, even for a larger sample size. The maximum score presents a relatively fast growth in terms of running time, which, however, is still feasible to handle for $n = 600$.
Figure \ref{fig:1_sim} also shows that a more stringent stopping time on the maximum score does not affect its performance either in terms of fairness or welfare. This is because by passing as a starting point an ``educated guess", most of the remaining optimization time is to discard dominated solutions. We obtain such a guess by taking the best solution estimated in the first step run to estimate the Pareto frontier. Finally, Figure \ref{fig:1_sim} shows that different function classes mostly lead to non-dominated males and females welfare comparisons.

Table \ref{tab:mse2} contrasts the unfairness and welfare of five different alternative approaches. Each competitor uses the same function class and estimation procedure as the proposed method. The first two competitors maximize a weighted average of female and male welfare with weights either $\alpha = 1/2$ \citep[in the spirit of the planner's utility in][]{rambachan2020economic}, or the empirical average $\mathbb{E}_n[S]$ as a welfare maximization problem. The third approach maximizes welfare with constraints of the form $
\mathrm{UnFairness}_n(\pi) \le \kappa/n, 
$ where we choose $\kappa \in\{10, 1\}$ (Constrained Max and Constrained Max2, respectively).\footnote{This is in the spirit of fairness constraints  \citep[e.g.,][]{nabi2019learning}, with constraints on statistical parity.} The fourth approach maximizes welfare with constraints on disparate impact as in Definition \ref{defn:welfare} \citep[in the spirit of optimization in][]{donini2018empirical}. Interestingly, while the stricter constraint reduces the gap in males' and females' welfare for the competitor Disparate Impact, such a gap is large due to the estimation error of the constraint (Appendix \ref{sec:a3b}, Figure \ref{fig:impact} presents details). 
We observe that the proposed method leads to the lowest UnFairness, and it is not Pareto-dominated. Our method favors the minority group, hence leading to larger welfare for female students.   Appendix \ref{sec:a3b} provides results for a smaller sample size.


\begin{figure}[!ht]
\centering 
\includegraphics[scale=0.6]{./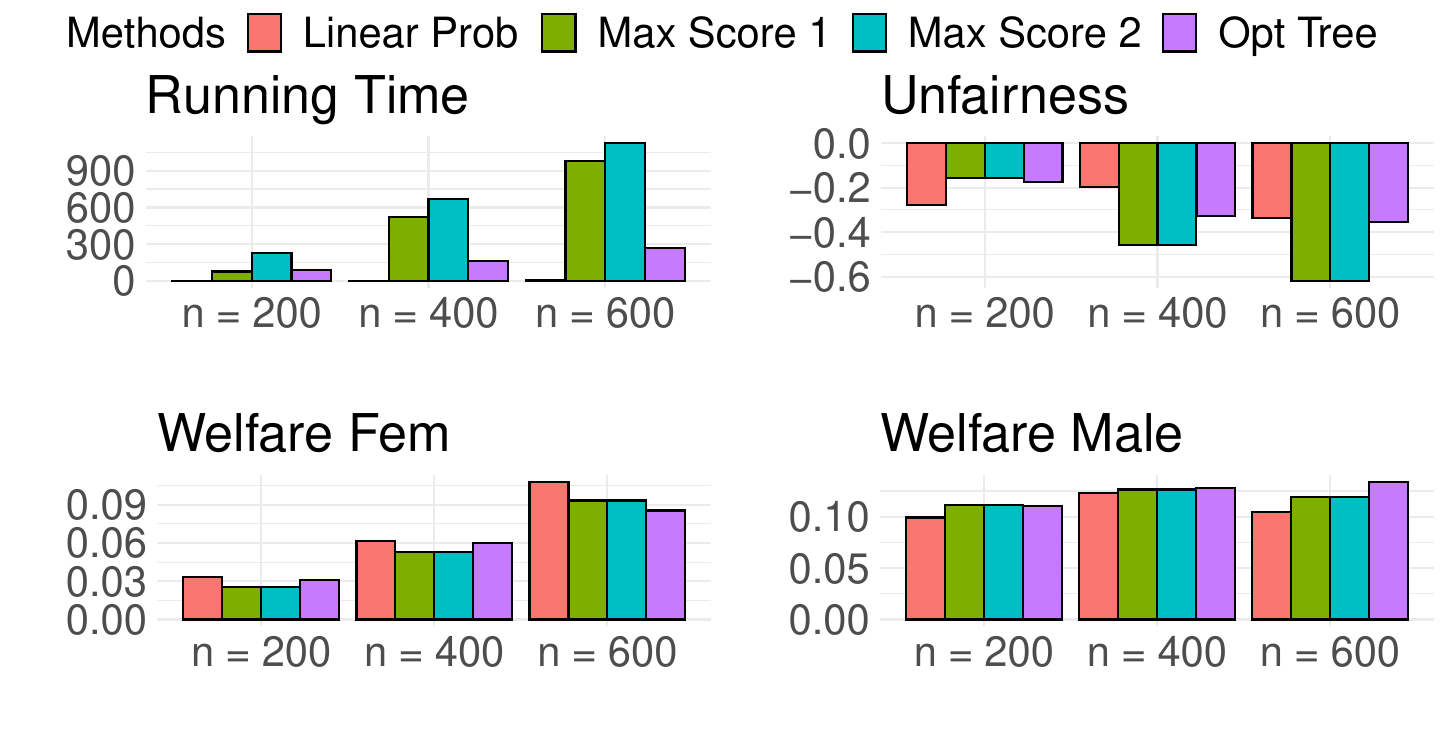}
\caption{Running time, UnFairness and welfare for $p = 4$.  Here, Linear Prob is a linear probability rule estimated via linear programming, maximum score is estimated with MILP and optimal tree via exhaustive search with depth two. The maximum score algorithm presents two different stopping times denoted as Max Score 1 and Max Score 2 (with fifty and two hundred seconds stopping after estimating the Pareto frontier). } 
\label{fig:1_sim}
\end{figure}



\begin{table*}[!ht]\centering
\caption{Statistical disparity (Unfairness), welfare of male ($W_0$) and female ($W_1$) participants of the proposed method (Fair Targeting) and of the alternative procedures in \textit{percentage points}. Weighted average maximizes a weighted average of females and males' welfare with weight $\alpha = 1/2$; Utilitarian average uses instead $\alpha = \mathbb{E}_n[S]$; Constrained Max maximizes welfare under fairness constrain and Disparate Impact maximizes welfare under constraints on disparate welfare impact between the two groups. $n = 600, p = 4$. The constraint is $\kappa = 10$ for the methods in the fourth and fifth row and $\kappa = 1$ for the last two rows. }    \label{tab:mse2} 
\begin{tabular}{@{}lrrrcrrrcrrr@{}}\toprule
& \multicolumn{3}{c}{Linear Rule} & \ & \multicolumn{3}{c}{Maximum Score} & \ & \multicolumn{3}{c}{Tree} \\
\cmidrule{2-4} \cmidrule{6-8}  \cmidrule{10-12}  
&\small UnFair&\small $W_0$ &\small $W_1$ && \small UnFair&\small  $W_0$  & \small $W_1$ && \small UnFair & \small $W_0$  & \small $W_1$  \\
 \Xhline{.8pt} 
 \cline{2-4}  \cline{6-8}  \cline{10-12} 
 \rowcolor{glaucous!60}  \small Fair Targeting & $-33.5$ & $10.4$ & $10.8$&& $-62.4$ & $12$ & $9.4$&&$-35.5$&13.4&8.5\\
\small Weighted Average  & $13.2$ & $14.7$ & $8.5$ && $14.3$ & $16.6$ & $7.7$&& $2.4$ & $15.2$ & $8.3$  \\
\small Utilitarian Average  & $25.6$ & $16.3$ & $6.4$ && $14.3$ & $16.6$ & $7.7$&& $30.9$ & $17.0$ & $6.3$  \\
\small Constrained Max & $-13.2$ & $14.4$ & $8.1$&&$-32.2$ & $14.2$ & $8.3$ &&$-13.8$ & $15.0$ & $7.1$\\
\small Disparate Impact & $3.2$ & $13.5$ & $7.9$&&$-8.2$ & $15.0$ & $8.8$ &&$6.8$ & $15.1$ & $7.9$\\
\small Constrained Max2 & $-16.3$ & $14.1$ & $8.3$&&$-34.5$ & $13.9$ & $8.4$ &&$-19.3$ & $14.7$ & $7.3$\\
\small Disparate Impact2 & $-3.8$ & $12.1$ & $8.4$&&$-13.7$ & $14.6$ & $9.3$ &&$0.6$ & $14.3$ & $8.1$\\
\bottomrule
\end{tabular}
\end{table*}

\section{Conclusion} \label{sec:conclusion}

  In this paper, we have introduced a novel method for estimating fair and optimal treatment allocation rules. We proposed a multi-objective decision problem, where the policymaker aims to select the least unfair policy in the set of Pareto optimal allocations. We discuss a set of theoretical guarantees on the estimated policy and provide an application. 
From a theoretical perspective, we open new questions on the trade-offs between predictive and causal notions of fairness and its corresponding regret bound.  
Counterfactual notions require extrapolation, hence possibly leading to a slower convergence rate. We leave to future research
a comprehensive study of properties of different notions of fairness in terms of their implied
regret. From a practical perspective, an interesting new direction is estimation with non-utilitarian within-group welfare measures. 
Finally, the decision problem considered aims to balance efficiency and fairness, and a study of such trade-offs in a decision theoretical framework remains an open research question.

\numberwithin{equation}{section}
 \numberwithin{figure}{section}
\numberwithin{algorithm}{section}
 \numberwithin{table}{section}
\makeatletter 
\newcommand{\section@cntformat}{Appendix \thesection:\ }
\makeatother

\spacingset{1.5}

\appendix
  
\section{Main Proofs} \label{sec:a1}

Throughout the rest of our discussion we define 
\begin{equation}
 \Pi_{o,n} =  \Big\{\pi_\alpha \in \Pi: \pi_\alpha \in \mathrm{arg} \sup_{\pi \in \Pi} \Big\{\alpha \hat{W}_0(\pi) + (1 - \alpha) \hat{W}_1(\pi) \Big\}, \text{ s.t. } \alpha \in (0,1)\Big\}, 
\end{equation} 
and let $N = \sqrt{n}$ as discussed in the main text. We denote $\alpha_1 - \alpha_2 = \varepsilon$, where, recall, the grid of $(\alpha_i)_{i=1}^N$ contains elements equally spaced. We say that $x \lesssim y$ if $y \le c_0 x$ for a finite constant $c_0$ independent of $n$.

\subsection{Auxiliary Lemmas}

\begin{lem} \label{prop:1} Under Assumption \ref{ass:unconf}, \ref{ass:overlap} for any sensitive attribute $s \in \{0,1\}$ 
\begin{equation} \label{eqn:wel}
\begin{aligned} 
&W_s(\pi) = \mathbb{E}\Big[ \frac{1\{S_i = s\}}{p_s} \Big(\frac{ Y_i D_i }{e(X_i, s)} - \frac{ Y_i (1 - D_i)}{1 - e(X_i, s)}\Big) \pi(X_i, s)\Big].
\end{aligned}   
\end{equation} 
\end{lem}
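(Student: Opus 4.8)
The plan is to start from the inverse-propensity-weighted expression on the right-hand side of Equation \eqref{eqn:wel} and reduce it to the definition of $W_s(\pi)$ in Equation \eqref{eqn:welfare} by successive applications of the law of iterated expectations together with Assumption \ref{ass:unconf}. First I would absorb the factor $1\{S_i = s\}/p_s$: since $p_s = P(S_i = s)$, for any integrable function $h$ of $(Y_i, D_i, X_i, S_i)$ one has $\mathbb{E}[\frac{1\{S_i = s\}}{p_s} h] = \mathbb{E}[h \mid S_i = s]$, so the right-hand side equals $\mathbb{E}[(\frac{Y_i D_i}{e(X_i,s)} - \frac{Y_i(1-D_i)}{1-e(X_i,s)})\pi(X_i,s) \mid S_i = s]$. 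The overlap part of Assumption \ref{ass:overlap} guarantees $e(X_i,s) \in (\delta, 1-\delta)$ almost surely, so every ratio is well defined, and the bounded-outcome condition ensures integrability.

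The second step treats the two terms separately. Using SUTVA to write $Y_i = Y_i(1)D_i + Y_i(0)(1-D_i)$ together with $D_i^2 = D_i$, the treated term becomes $Y_i(1)D_i/e(X_i,s)$. I would then condition on $(X_i, S_i = s)$ and invoke Assumption \ref{ass:unconf}, namely $Y_i(d) \perp D_i \mid X_i, S_i$, to factor $\mathbb{E}[Y_i(1)D_i \mid X_i, S_i = s] = \mathbb{E}[Y_i(1) \mid X_i, S_i = s]\, e(X_i,s)$; the propensity scores then cancel, leaving $\mathbb{E}[Y_i(1) \mid X_i, S_i = s]$. An identical argument applied to the control term, using $\mathbb{E}[1-D_i \mid X_i, S_i = s] = 1 - e(X_i,s)$, yields $\mathbb{E}[Y_i(0) \mid X_i, S_i = s]$.

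Finally, I would collect the two conditional means and apply the tower property once more. Since $\pi(X_i, s)$ is $(X_i, S_i)$-measurable, the inner conditional expectation passes through the policy factor and gives $\mathbb{E}[(\mathbb{E}[Y_i(1)\mid X_i, s] - \mathbb{E}[Y_i(0)\mid X_i, s])\,\pi(X_i,s) \mid S_i = s] = \mathbb{E}[(Y_i(1) - Y_i(0))\pi(X_i,s) \mid S_i = s]$, which is exactly $W_s(\pi)$. This is essentially a standard IPW identification, so I do not anticipate a genuine obstacle; the only points requiring care are the legitimacy of dividing by the propensity score (handled by overlap) and the correct use of conditional independence rather than full independence, since the unconfoundedness in Assumption \ref{ass:unconf} holds only after conditioning on $(X_i, S_i)$.
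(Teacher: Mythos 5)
Your proposal is correct and follows essentially the same route as the paper's proof: absorb the factor $1\{S_i=s\}/p_s$ via the definition of conditional expectation, then apply the law of iterated expectations together with Assumption \ref{ass:unconf} (and overlap for integrability) to recover $W_s(\pi)$. The paper states these two steps tersely; you have simply filled in the intermediate details (SUTVA decomposition, cancellation of propensity scores), which are exactly the ones the paper leaves implicit.
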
 
\vspace{5 mm}
\begin{proof}[Proof of Lemma \ref{prop:1}]
Assumption \ref{ass:overlap} guarantees existence of the expectation.
By definition of the conditional expectation
$$
\begin{aligned} 
\eqref{eqn:wel} = \mathbb{E}\Big[ \Big(\frac{ Y_i D_i }{e(X_i, s)} - \frac{ Y_i (1 - D_i)}{1 - e(X_i, s)}\Big) \pi(X_i, s) \Big|S_i = s\Big]. 
\end{aligned} 
$$ 
 Using the law of iterated expectations and Assumption \ref{ass:unconf} the result directly follows.    
\end{proof} 

\vspace{5 mm}

\begin{lem} \label{lem:helper2}  Let $W_{s,n} = \frac{1}{n} \sum_{i=1}^n (\Gamma_{1,s,i} - \Gamma_{0,s,i}) \pi(X_i,s)$, where $\Gamma_{d,s,i}$ is defined as in Equation \eqref{eqn:gamma}. Let Assumptions \ref{ass:unconf}, \ref{ass:moment}, and \ref{ass:overlap} hold. Then with probability at least $1 - \gamma$, 
\begin{equation} 
\sup_{\alpha \in (0,1)} \sup_{\pi \in \Pi} \Big| \alpha W_{0}(\pi) + (1 - \alpha) W_{1}(\pi) - \alpha W_{0, n}(\pi) + (1 - \alpha) W_{1, n}(\pi) \Big| \le \bar{C} \frac{M}{\delta^2} \sqrt{v/n} + \frac{\bar{C} M}{\delta^2} \sqrt{\log(2/\gamma)/n}
\end{equation} 
for a universal constant $\bar{C} < \infty$. In addition,
\begin{equation} 
\mathbb{E}\Big[\sup_{\alpha \in (0,1)}  \sup_{\pi \in \Pi} \Big| \alpha W_{0}(\pi) + (1 - \alpha) W_{1}(\pi) - \alpha W_{0, n}(\pi) + (1 - \alpha) W_{1, n}(\pi) \Big| \Big] \le \bar{C} \frac{M}{\delta^2} \sqrt{v/n}. 
\end{equation} 

\end{lem}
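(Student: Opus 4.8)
The plan is to recognize the displayed quantity as the deviation of an oracle (known-nuisance) empirical welfare process from its population mean, and then to control that deviation uniformly over $\Pi$ with standard empirical-process tools. First I would check that $W_{s,n}(\pi)$ is unbiased for $W_s(\pi)$: writing out $\Gamma_{d,s,i}$ from \eqref{eqn:gamma}, the correction term $\frac{1\{S_i=s\}}{p_s}\frac{1\{D_i=d\}}{e(X_i,s)}\big(Y_i - m_{d,s}(X_i)\big)$ has conditional mean zero given $(X_i,S_i)$ by Assumption \ref{ass:unconf}, so $\mathbb{E}[\Gamma_{d,s,i}\pi(X_i,s)] = \mathbb{E}[m_{d,s}(X_i)\pi(X_i,s)\mid S_i=s]$ and hence $\mathbb{E}[(\Gamma_{1,s,i}-\Gamma_{0,s,i})\pi(X_i,s)] = W_s(\pi)$, consistent with \eqref{eqn:welfare} and Lemma \ref{prop:1}. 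Therefore the bracketed expression equals $\alpha\big(W_0(\pi)-W_{0,n}(\pi)\big) + (1-\alpha)\big(W_1(\pi)-W_{1,n}(\pi)\big)$, a centered weighted average.

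Since for fixed $\pi$ this is affine in $\alpha$, its absolute value over $\alpha\in(0,1)$ is maximized at an endpoint, so
$$
\sup_{\alpha\in(0,1)}\sup_{\pi\in\Pi}\big|\,\cdots\,\big| \le \sup_{\pi\in\Pi}\big|W_0(\pi)-W_{0,n}(\pi)\big| + \sup_{\pi\in\Pi}\big|W_1(\pi)-W_{1,n}(\pi)\big|.
$$
This decouples $\alpha$ and reduces the claim to bounding two single-group empirical processes. For each group I would bound $\mathbb{E}\sup_{\pi}|W_{s,n}(\pi)-W_s(\pi)|$ by symmetrization followed by Dudley's entropy integral. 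The envelope is controlled via Assumption \ref{ass:overlap}: with $e,p_s>\delta$, $|Y_i|,|m_{d,s}|\le M$, and $\pi\in[0,1]$, each $|\Gamma_{d,s,i}|\le \bar{C}M/\delta^2$. Because $(\Gamma_{1,s,i}-\Gamma_{0,s,i})$ is a fixed (data-dependent) multiplier and the map $\pi\mapsto(\Gamma_{1,s,i}-\Gamma_{0,s,i})\pi(X_i,s)$ is Lipschitz, the empirical covering numbers of the product class are controlled by those of $\Pi$ up to the envelope factor; since $\Pi$ has VC dimension $v$ by Assumption \ref{ass:moment}(A), Dudley's integral yields $\mathbb{E}\sup_\pi|W_{s,n}(\pi)-W_s(\pi)|\le \bar{C}\frac{M}{\delta^2}\sqrt{v/n}$. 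Summing over the two groups gives the stated expectation bound.

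For the high-probability statement I would apply the bounded-differences (McDiarmid) inequality to $F=\sup_{\alpha,\pi}|\cdots|$. The population terms $W_s(\pi)$ are nonrandom, and changing one observation $(Y_i,D_i,S_i,X_i)$ perturbs each $\frac1n\sum$ by at most $\frac{1}{n}\cdot\frac{\bar{C}M}{\delta^2}$, where the indicator $1\{S_i=s\}$ confines the change to a single group; thus the bounded-difference constants are $O\!\big(M/(n\delta^2)\big)$. McDiarmid then gives $F\le \mathbb{E}[F] + \bar{C}\frac{M}{\delta^2}\sqrt{\log(2/\gamma)/n}$ with probability at least $1-\gamma$, and combining with the expectation bound completes the argument. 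The main obstacle is the second step: one must argue carefully that multiplying the VC class $\Pi$ by the bounded but random score $(\Gamma_{1,s,i}-\Gamma_{0,s,i})$ preserves the $\sqrt{v/n}$ complexity rate (via the Lipschitz/VC-subgraph comparison together with the overlap-based envelope), while the unbiasedness check and the concentration step are routine.
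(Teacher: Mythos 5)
Your proposal is correct and follows essentially the same route as the paper's proof: center the empirical welfare via the unbiasedness of the doubly robust scores (the paper's Lemma \ref{prop:1}), decouple $\alpha$ to reduce to two single-group empirical processes, bound their expected suprema by symmetrization plus Dudley's entropy integral using the overlap-based envelope $\bar{C}M/\delta^2$ and the VC dimension $v$, and obtain the high-probability statement from the bounded-differences inequality. The only cosmetic differences are that you decouple $\alpha$ via the affine-endpoint argument where the paper uses the triangle inequality with $\alpha, 1-\alpha \le 1$, and you bound the expectation before applying McDiarmid where the paper does the reverse; the step you flag as the main obstacle (the bounded random multiplier) is handled in the paper by exactly the standard argument you describe.
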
 

\vspace{5 mm}

\begin{proof}[Proof of Lemma \ref{lem:helper2}] Throughout the proof we refer to $\bar{C} < \infty$ as a universal constant. 
Observe first that under Assumption \ref{ass:overlap} and Assumption \ref{ass:moment}, we have
\begin{equation}
\sup_{\alpha \in (0,1)}  \sup_{\pi \in \Pi} \Big| \alpha W_{0}(\pi) + (1 - \alpha) W_{1}(\pi) - \alpha W_{0, n}(\pi) + (1 - \alpha) W_{1, n}(\pi) \Big|,  
\end{equation} 
satisfies the bounded difference assumption \citep{boucheron2013concentration} with constant $\frac{2 M}{\delta^2 n}$.\footnote{This follows from the triangular inequality and the fact that under Assumption \ref{ass:overlap} the inverse probability weight is uniformly bounded by $1/\delta^2$ and under Assumption \ref{ass:moment} the conditional mean function is bounded by $M$.} See for instance \cite{boucheron2005theory}.   
By the bounded difference inequality, with probability at least $1- \gamma$, 
\begin{equation} \label{eqn:thm1}
\begin{aligned} 
&\sup_{\alpha \in (0,1)}  \sup_{\pi \in \Pi} \Big| \alpha W_{0}(\pi) + (1 - \alpha) W_{1}(\pi) - \alpha W_{0, n}(\pi) + (1 - \alpha) W_{1, n}(\pi) \Big| \\ &\le \mathbb{E}\Big[\sup_{\alpha \in (0,1)}  \sup_{\pi \in \Pi} \Big| \alpha W_{0}(\pi) + (1 - \alpha) W_{1}(\pi) - \alpha W_{0, n}(\pi) + (1 - \alpha) W_{1, n}(\pi) \Big|\Big] + \bar{C} \frac{M}{\delta^2}\sqrt{\log(2/\gamma)/n}.
\end{aligned} 
\end{equation}  
We now move to bound the expectation in the right-hand side of Equation \eqref{eqn:thm1}. Under Assumption \ref{ass:unconf}, we obtain by Lemma \ref{prop:1} and trivial rearrangments, that 
\begin{equation}
\mathbb{E}\Big[\alpha W_{0}(\pi) + (1 - \alpha) W_{1}(\pi) - \alpha W_{0, n}(\pi) + (1 - \alpha) W_{1, n}(\pi) \Big] = 0. 
\end{equation} 

Using the symmetrization argument \citep{van1996weak}, we can now bound the above supremum with the Radamacher complexity of the function class of interest, which combined with the triangle inequality reads as follows: 
\begin{equation}
\begin{aligned}
&\mathbb{E}\Big[\sup_{\alpha \in (0,1)}  \sup_{\pi \in \Pi} \Big| \alpha W_{0}(\pi) + (1 - \alpha) W_{1}(\pi) - \alpha W_{0, n}(\pi) + (1 - \alpha) W_{1, n}(\pi) \Big|\Big] \\ 
&\le \mathbb{E}\Big[\sup_{\alpha \in (0,1)}  \sup_{\pi \in \Pi} \Big| \alpha W_{0}(\pi)  - \alpha W_{0, n}(\pi)\Big| \Big] + \mathbb{E}\Big[\sup_{\alpha \in (0,1)}  \sup_{\pi \in \Pi} \Big|(1 - \alpha)W_1(\pi) -  (1 - \alpha) W_{1, n}(\pi) \Big|\Big] \\ 
&\le \mathbb{E}\Big[ \sup_{\pi \in \Pi} \Big| W_{0}(\pi)  - W_{0, n}(\pi) \Big|\Big] + \mathbb{E}\Big[ \sup_{\pi \in \Pi} \Big|W_1(\pi) -  W_{1, n}(\pi) \Big|\Big] \\ 
&\le \mathbb{E}\Big[\sup_{\pi \in \Pi}  \Big|\frac{1}{n} \sum_{i=1}^n \sigma_i\pi(X_i, 1)\Gamma_{1,1,i} \Big| \Big] + \mathbb{E}\Big[\sup_{\pi \in \Pi}  \Big|\frac{1}{n} \sum_{i=1}^n \sigma_i \pi(X_i, 1) \Gamma_{0,1,i} \Big| \Big]\\ &+\mathbb{E}\Big[\sup_{\pi \in \Pi}  \Big|\frac{1}{n} \sum_{i=1}^n \sigma_i\pi(X_i, 0)\Gamma_{1,0,i} \Big|  \Big] + \mathbb{E}\Big[\sup_{\pi \in \Pi}  \Big|\frac{1}{n} \sum_{i=1}^n \sigma_i \pi(X_i, 0) \Gamma_{0,0,i} \Big|  \Big],
\end{aligned}  
\end{equation} 
where here $\sigma_i$ are independent Radamacher random variables. 
We can study each component of the above expression separately. By the Dudley's entropy integral bound, since the VC-dimension of the function class $\Pi$ is bounded by Assumption \ref{ass:moment}, and since each $\Gamma_{d,s,}$ is bounded, we obtain (see for instance \cite{wainwright2019high}), under Assumption \ref{ass:moment} (A) and (B), with trivial rearrangement
\begin{equation}
\mathbb{E}\Big[\sup_{\pi \in \Pi}  \Big|\frac{1}{n} \sum_{i=1}^n \sigma_i\pi(X_i, s)\Gamma_{d,s, i} \Big| \Big] \le \frac{M \bar{C}}{\delta^2} \sqrt{v/n} .
\end{equation} 
for each $d,s$. The remaining terms follow similarly. The proof is complete. 
\end{proof} 

\begin{lem} \label{lem:dr} Let Assumptions \ref{ass:unconf}, \ref{ass:moment}-\ref{ass:dr} hold. Then with probability at least $1 - \gamma$,
\begin{equation} 
\sup_{\alpha \in (0,1)} \sup_{\pi \in \Pi} \Big| \alpha W_{0}(\pi) + (1 - \alpha) W_{1}(\pi) - \alpha \hat{W}_{0}(\pi) + (1 - \alpha) \hat{W}_{1}(\pi) \Big| \le \bar{C} \frac{M}{\delta^2} \sqrt{v/n} + \frac{\bar{C} M}{\delta^2} \sqrt{\log(2/\gamma)/n}
\end{equation} 
for a universal constant $\bar{C} < \infty$. 
\end{lem}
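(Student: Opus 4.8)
The plan is to reduce Lemma~\ref{lem:dr} to Lemma~\ref{lem:helper2} by a triangle inequality that isolates the cost of replacing the true scores by their cross-fitted counterparts. Writing the left-hand side as the deviation of $\alpha\hat W_0+(1-\alpha)\hat W_1$ from $\alpha W_0+(1-\alpha)W_1$, I would first insert the oracle quantity $\alpha W_{0,n}+(1-\alpha)W_{1,n}$ built with the true nuisances and split the supremum into an oracle term $T_{\mathrm{or}}=\sup_{\alpha,\pi}|\alpha W_0+(1-\alpha)W_1-(\alpha W_{0,n}+(1-\alpha)W_{1,n})|$ and a nuisance term $T_{\mathrm{nu}}=\sup_{\alpha,\pi}|\alpha(W_{0,n}-\hat W_0)+(1-\alpha)(W_{1,n}-\hat W_1)|$. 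The term $T_{\mathrm{or}}$ is exactly Lemma~\ref{lem:helper2}, which already supplies both the $\sqrt{v/n}$ and the $\sqrt{\log(2/\gamma)/n}$ contributions. Since $\alpha\in(0,1)$ and $|\pi(\cdot)|\le 1$, I can bound $T_{\mathrm{nu}}\le\sum_{s\in\{0,1\}}\sup_{\pi\in\Pi}|W_{s,n}(\pi)-\hat W_s(\pi)|$, so it remains to show that each summand is $O_p(\sqrt{v/n})+O_p(\sqrt{\log(1/\gamma)/n})$.

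For a fixed $s$, I would exploit cross-fitting: conditionally on the auxiliary fold producing $(\hat m,\hat e,\hat p)$, the evaluation observations are i.i.d. and independent of the nuisance estimates. Writing $\Delta_{d,s,i}=\Gamma_{d,s,i}-\hat\Gamma_{d,s,i}$, I would decompose $W_{s,n}(\pi)-\hat W_s(\pi)=\frac1n\sum_i(\Delta_{1,s,i}-\Delta_{0,s,i})\pi(X_i,s)$ into its conditional mean (a bias) plus a conditionally mean-zero remainder. The bias is the crucial term: a direct bound on $\frac1n\sum_i|\Delta_{d,s,i}|$ via Cauchy--Schwarz only yields $O(n^{-\min(\xi_1,\xi_2)})$, i.e. as slow as $n^{-1/4}$, which is insufficient. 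Instead I would invoke the Neyman-orthogonality (double robustness) of the score in Equation~\eqref{eqn:gamma}: its conditional expectation $\mathbb E[\hat\Gamma_{d,s,i}-\Gamma_{d,s,i}\mid \hat m,\hat e,\hat p]$ factorizes into the product of the outcome-model error and the inverse-propensity error, so that uniformly in $\pi$ (using $|\pi|\le1$ and Cauchy--Schwarz) the bias is at most a constant times $\sqrt{\mathbb E[(\hat m_{d,s}-m_{d,s})^2]}\,\sqrt{\mathbb E[(1/(\hat p_s\hat e)-1/(p_s e))^2]}=O(n^{-\xi_1-\xi_2})$, which is $O(n^{-1/2})$ precisely because Assumption~\ref{ass:dr} imposes $\xi_1+\xi_2\ge 1/2$; the additional error from estimating $p_s$ in the outer factor is itself $O_p(n^{-1/2})$ and absorbed.

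The conditionally mean-zero remainder is an empirical process indexed by the VC class $\Pi$ with envelope $|\Delta_{1,s,i}-\Delta_{0,s,i}|$ whose $L_2$ norm vanishes; by the same symmetrization and Dudley entropy-integral argument used in Lemma~\ref{lem:helper2}, its Rademacher complexity is bounded by a constant times $\|\Delta\|_{L_2}\sqrt{v/n}=o(\sqrt{v/n})$, hence of smaller order than the target. Because Assumption~\ref{ass:dr} makes $\hat m$ bounded and $\hat e,\hat p$ bounded away from $0$ and $1$, each $\hat\Gamma_{d,s,i}$ is bounded by a multiple of $M/\delta^2$, so this remainder also obeys the bounded-difference inequality conditionally, which furnishes the $\sqrt{\log(2/\gamma)/n}$ deviation; a union bound over $s\in\{0,1\}$ and over the two pieces, combined with Lemma~\ref{lem:helper2}, delivers the stated bound with a universal constant $\bar C$.

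The main obstacle is the bias step. Everything hinges on extracting the product (orthogonality) structure of the doubly-robust score so that individual nuisance rates that can be as slow as $n^{-1/4}$ combine into the required $n^{-1/2}$; a naive norm bound loses this and breaks the rate. Once the bias is controlled, the remaining empirical-process and concentration arguments are routine and parallel those already established for the oracle term $T_{\mathrm{or}}$.
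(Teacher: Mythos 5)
Your proposal is correct, and its skeleton is the same as the paper's: the identical triangle-inequality split into an oracle term handled verbatim by Lemma~\ref{lem:helper2} and a nuisance term bounded by $\sum_{s}\sup_{\pi\in\Pi}|W_{s,n}(\pi)-\hat W_s(\pi)|$, cross-fitting to obtain conditional independence from the nuisance estimates, the doubly-robust product structure to turn two rates as slow as $n^{-1/4}$ into $n^{-\xi_1-\xi_2}\le n^{-1/2}$, and symmetrization/Dudley plus bounded differences for the stochastic part. Where you differ is in how the nuisance term is packaged. The paper splits it algebraically into three pieces: one where the inverse-propensity error multiplies the residual $D_i(Y_i-m_{1,s}(X_i))$, one where the outcome-model error multiplies $\bigl(D_i/e(X_i,s)-1\bigr)$ --- both conditionally mean-zero term by term, hence treated as empirical processes with $O(1)$ envelopes contributing $O(\sqrt{v/n})$ --- and the product-of-errors piece, bounded by Cauchy--Schwarz on empirical $L_2$ norms plus concentration. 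You instead center the whole difference at its conditional expectation and invoke Neyman orthogonality to exhibit the bias as a product of errors; your centered remainder is then an empirical process with a \emph{vanishing} envelope, contributing $o(\sqrt{v/n})$, a marginally sharper intermediate bound that changes nothing in the final rate. One point where your packaging is actually more careful: computing the conditional bias forces you to confront the leftover term $\hat m_{1,s}(X_i)\bigl(1/p_s-1/\hat p_s\bigr)$, coming from the outer factor $1/\hat p_s$ attached to $\hat m$, which you correctly dismiss as $O_p(n^{-1/2})$ since $\hat p_s$ is a sample proportion bounded below by $\delta$; the paper's displayed two-term decomposition of $W_{s,n}-\hat W_s$ into its (i) and (ii) silently drops exactly this term, so your route closes a small (harmless) gap in the paper's write-up.
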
 

\vspace{5 mm}
\begin{proof}[Proof of Lemma \ref{lem:dr}]
First observe that we can bound the above expression as 
\begin{equation} 
\begin{aligned} 
& \sup_{\alpha \in (0,1)}  \sup_{\pi \in \Pi} \Big| \alpha W_{0}(\pi) + (1 - \alpha) W_{1}(\pi) - \alpha \hat{W}_{0}(\pi) + (1 - \alpha) \hat{W}_{1}(\pi) \Big| \le \\ &
\sup_{\alpha \in (0,1)}  \underbrace{\sup_{\pi \in \Pi} \Big| \alpha W_{0}(\pi) + (1 - \alpha) W_{1}(\pi) - \alpha W_{0,n}(\pi) + (1 - \alpha) W_{1, n}(\pi) \Big|}_{(I)} \\ &+ \sup_{\alpha \in (0,1)}  \underbrace{\sup_{\pi \in \Pi} \Big| \alpha W_{0, n}(\pi) + (1 - \alpha) W_{1, n}(\pi) - \alpha \hat{W}_{0}(\pi) + (1 - \alpha) \hat{W}_{1}(\pi) \Big|}_{(II)}. 
\end{aligned} 
\end{equation} 
Here $W_{s,n}$ is as defined in Lemma \ref{lem:helper2}. The term (I) is bounded as in Lemma \ref{lem:helper2}. Therefore, we are only left to discuss (II). 

Using the triangular inequality, we only need to bound
\begin{equation}
\sup_{\pi \in \Pi} \Big|W_{0,n}(\pi)  - \hat{W}_{0,n}(\pi)\Big| + \sup_{\pi \in \Pi} \Big|W_{1,n}(\pi)  - \hat{W}_{1,n}(\pi)\Big|.
\end{equation} 
We bound the first term while the second term follows similarly. 
We write 
\begin{equation}
\begin{aligned} 
&\sup_{\pi \in \Pi} \Big|W_{s,n}(\pi)  - \hat{W}_s(\pi)\Big| \\ &\le 
\Big|\frac{1}{n} \sum_{i=1}^n \frac{1\{S_i = s\}}{p_s} \frac{D_i (Y_i - m_{1,s}(X_i))}{e(X_i, s)} \pi(X_i, s) +  \frac{1\{S_i = s\}}{p_s} m_{1,s}(X_i)\pi(X_i, s) \\ &- \frac{1}{n} \sum_{i=1}^n \frac{1\{S_i = s\}}{\hat{p}_s} \frac{D_i (Y_i - \hat{m}_{1,s}(X_i))}{\hat{e}(X_i, s)} \pi(X_i, s) -  \frac{1\{S_i = s\}}{\hat{p}_s} \hat{m}_{1,s}(X_i)\pi(X_i, s)\Big| \\
&+ \Big|\frac{1}{n} \sum_{i=1}^n \frac{1\{S_i = s\}}{p_s} \frac{(1 - D_i) (Y_i - m_{0,s}(X_i))}{1 - e(X_i, s)} \pi(X_i, s) +  \frac{1\{S_i = s\}}{p_s} m_{0,s}(X_i)\pi(X_i, s)\\ &- \frac{1}{n} \sum_{i=1}^n \frac{1\{S_i = s\}}{\hat{p}_s} \frac{(1 - D_i) (Y_i - \hat{m}_{s,0}(X_i))}{1 - \hat{e}(X_i, s)} \pi(X_i, s) -  \frac{1\{S_i = s\}}{ \hat{p}_s} \hat{m}_{0, s}(X_i) \pi(X_i, s)\Big| .
\end{aligned} 
\end{equation} 
We discuss the first component while the second follows similarly. 

With trivial re-arrengment, using the triangular inequality, we obtain that the following holds 
\begin{equation} \label{eqn:i_and_ii}
\begin{aligned} 
&\Big|\frac{1}{n} \sum_{i=1}^n \frac{1\{S_i = s\}}{p_s} \frac{D_i (Y_i - m_{1,s}(X_i))}{e(X_i, s)} \pi(X_i, s) +  \frac{1\{S_i = s\}}{p_s} m_{1,s}(X_i)\pi(X_i, s) \\ &- \frac{1}{n} \sum_{i=1}^n \frac{1\{S_i = s\}}{\hat{p}_s} \frac{D_i (Y_i - \hat{m}_{1,s}(X_i))}{\hat{e}(X_i, s)} \pi(X_i, s) -  \frac{1\{S_i = s\}}{ \hat{p}_s} \hat{m}_{1,s}(X_i)\pi(X_i, s)\Big| \\
&\le \underbrace{\sup_{\pi \in \Pi} \Big|\frac{1}{n} \sum_{i=1}^n 1\{S_i = s\} D_i (Y_i - m_{1,s}(X_i)) \Big(\frac{1}{p_s e(X_i, s)} - \frac{1}{\hat{p}_s \hat{e}(X_i, s)}\Big) \pi(X_i, s)\Big|}_{(i)} \\ &+ \underbrace{\sup_{\pi \in \Pi}\Big| \frac{1}{n} \sum_{i=1}^n 
 \Big( \frac{1\{S_i =s\} D_i}{\hat{e}(X_i, s) \hat{p}_s}  - \frac{1\{S_i =s\}}{ p_s}\Big) (m_{1,s}(X_i) - \hat{m}_{1,s}(X_i)) \pi(X_i, s)\Big|}_{(ii)}. 
\end{aligned}  
\end{equation} 
We study $(i)$ and $(ii)$ separately. We start from $(i)$. 
Recall, that by cross fitting $\hat{e}(X_i, s) = \hat{e}^{-k(i)}(X_i, s)$, where $k(i)$ is the fold containing unit $i$. Therefore, observe that given the $K$ folds for cross-fitting, we have  
\begin{equation} \label{eqn:summanda} 
\begin{aligned} 
 &\Big|\frac{1}{n} \sum_{i=1}^n 1\{S_i = s\} D_i (Y_i - m_{1,s}(X_i)) \Big(\frac{1}{p_s e(X_i, s)} - \frac{1}{\hat{p}_s \hat{e}(X_i, s)}\Big) \pi(X_i, s)\Big| \\
 &\le  \sum_{k \in \{1, ..., K\}} \Big|\frac{1}{n} \sum_{i \in \mathcal{I}_k} 1\{S_i = s\} D_i (Y_i - m_{1,s}(X_i)) \Big(\frac{1}{p_s e(X_i, s)} - \frac{1}{\hat{p}_s^{(-k(i))} \hat{e}^{(-k(i))}(X_i, s)}\Big) \pi(X_i, s)\Big|. 
 \end{aligned} 
\end{equation} 
In addition, we have that 
\begin{equation}
\begin{aligned} 
&\mathbb{E}\Big[\sum_{i \in \mathcal{I}_k} 1\{S_i = s\} D_i (Y_i - m_{1,s}(X_i)) \Big(\frac{1}{p_s e(X_i, s)} - \frac{1}{\hat{p}_s^{(-k(i))} \hat{e}^{(-k(i))}(X_i, s)}\Big) \pi(X_i, s) \Big] \\ 
&= \mathbb{E}\Big[\mathbb{E}\Big[\sum_{i \in \mathcal{I}_k} 1\{S_i = s\} D_i (Y_i - m_{1,s}(X_i)) \Big(\frac{1}{p_s e(X_i, s)} - \frac{1}{\hat{p}_s^{(-k(i))} \hat{e}^{(-k(i))}(X_i, s)}\Big) \pi(X_i, s) \Big| \hat{p}^{(-k(i))}, \hat{e}^{(-k(i))} \Big] \Big] \\ &= 0,  
\end{aligned} 
\end{equation} 
by cross-fitting. 
By Assumption \ref{ass:dr}, we know that
\begin{equation}
\sup_{x \in \mathcal{X} ,s \in \mathcal{S}} \Big|\frac{1}{p_s e(x, s)} - \frac{1}{\hat{p}_s^{(-k(i))} \hat{e}^{(-k(i))}(x, s)}\Big| \le 2/\delta^2
\end{equation} 
and therefore each summand in Equation \eqref{eqn:summanda} is bounded by a finite constant $2M \bar{C} /\delta^2$, for a universal constant $\bar{C}$. 
We now obtain, using the symmetrization argument \citep{van1996weak}, and the Dudley's entropy integral \citep{wainwright2019high}
\begin{equation}
\begin{aligned} 
&\mathbb{E}\Big[\sup_{\pi \in \Pi} |\frac{1}{n} \sum_{i \in \mathcal{I}_k} 1\{S_i = s\} D_i (Y_i - m_{1,s}(X_i)) \Big(\frac{1}{p_s e(X_i, s)} - \frac{1}{\hat{p}_s^{(-k(i))} \hat{e}^{(-k(i))}(X_i, s)}\Big) \pi(X_i, s)| \Big| \hat{p}^{(-k(i))}, \hat{e}^{(-k(i))}\Big] \\ &\lesssim \frac{M}{\delta^2} \sqrt{v/n}.
\end{aligned} 
\end{equation} 
In addition, by the bounded difference inequality \citep{boucheron2005theory}, with probability at least $1 - \gamma$, for a universial constant $c < \infty$
\begin{equation}
\begin{aligned} 
&\sup_{\pi \in \Pi} \Big|\frac{1}{n} \sum_{i \in \mathcal{I}_k} 1\{S_i = s\} D_i (Y_i - m_{1,s}(X_i)) \Big(\frac{1}{p_s e(X_i, s)} - \frac{1}{\hat{p}_s^{(-k(i))} \hat{e}^{(-k(i))}(X_i, s)}\Big) \pi(X_i, s)\Big| \le  \\
&\mathbb{E}\Big[\sup_{\pi \in \Pi} |\frac{1}{n} \sum_{i \in \mathcal{I}_k} 1\{S_i = s\} D_i (Y_i - m_{1,s}(X_i)) \Big(\frac{1}{p_s e(X_i, s)} - \frac{1}{\hat{p}_s^{(-k(i))} \hat{e}^{(-k(i))}(X_i, s)}\Big) \pi(X_i, s)| \Big| \hat{p}^{(-k(i))}, \hat{e}^{(-k(i))}\Big] \\ &+ c \frac{M}{\delta^2} \sqrt{\frac{\log(2/\gamma)}{n}}. 
\end{aligned} 
\end{equation} 

We now consider the term $(ii)$. Observe that we can write 
\begin{equation} \label{eqn:j_and_jj}
\begin{aligned} 
(ii) \le & \underbrace{\sup_{\pi \in \Pi}\Big| \frac{1}{n} \sum_{i=1}^n 
\Big(\frac{D_i 1\{S_i = s\}}{\hat{p}_s \hat{e}(X_i, s)}  - \frac{D_i 1\{S_i = s\}}{ p_s e(X_i,s)}\Big) (m_{1,s}(X_i) - \hat{m}_{1,s}(X_i)) \pi(X_i, s)\Big|}_{(j)} \\ &+ \underbrace{\sup_{\pi \in \Pi}\Big| \frac{1}{n} \sum_{i=1}^n 
\Big(\frac{D_i 1\{S_i = s\}}{p_s e(X_i, s)}  - \frac{1\{S_i = s\}}{p_s}\Big) (m_{1,s}(X_i) - \hat{m}_{1,s}(X_i)) \pi(X_i, s)\Big|}_{(jj)}. 
\end{aligned} 
\end{equation} 
We consider each term seperately. Consider $(jj)$ first. Using the cross-fitting argument we obtain 
\begin{equation} 
\begin{aligned} 
&\sup_{\pi \in \Pi}\Big| \frac{1}{n} \sum_{i=1}^n 
\Big(\frac{D_i 1\{S_i = s\}}{p_s e(X_i, s)}  - \frac{1\{S_i = s\}}{p_s}\Big)( m_{1,s}(X_i) - \hat{m}_{1,s}(X_i)) \pi(X_i, s)\Big| \\ &\le \sum_{k \in \{1, ..., K\}} \sup_{\pi \in \Pi}\Big| \frac{1}{n} \sum_{i \in \mathcal{I}_k} 
\Big(\frac{D_i 1\{S_i = s\}}{p_s e(X_i, s)}  - \frac{1\{S_i = s\}}{p_s}\Big) (m_{1,s}(X_i) - \hat{m}_{1,s}^{(-k(i))}(X_i)) \pi(X_i, s)\Big|. 
\end{aligned} 
\end{equation} 
Observe now that 
\begin{equation} \label{eqn:helper100}
\begin{aligned} 
\mathbb{E}\Big[\Big(\frac{D_i 1\{S_i = s\}}{p_s e(X_i, s)}  - \frac{1\{S_i = s\}}{p_s} \Big) (m_{1,s}(X_i) - \hat{m}_{1,s}^{(-k(i))}(X_i)) \pi(X_i, s)\Big| \hat{m}_{1,s}^{(-k(i))} \Big] = 0, 
\end{aligned} 
\end{equation} 
since by cross-fitting, $\hat{m}_{1, s}^{(-k(i))}$ is independent of $(D_i, S_i, X_i)$ and, as a result, the conditional expectation of the left-hand side in Equation \eqref{eqn:helper100}, also conditional on $X_i$ equals zero. 
Therefore, following the same argument used for $(i)$ in Equation \eqref{eqn:i_and_ii}, we obtain that with probability at least $1 - \gamma$ 
\begin{equation}
\small 
\begin{aligned} 
& \sum_{k \in \{1, \cdots, K\}} \sup_{\pi \in \Pi}\Big| \frac{1}{n} \sum_{i \in \mathcal{I}_k} 
\Big(\frac{D_i 1\{S_i = s\}}{p_s e(X_i, s)}  - \frac{1\{S_i = s\}}{p_s}\Big) (m_{1,s}(X_i) - \hat{m}_{1,s}^{(-k(i))}(X_i)) \pi(X_i, s)\Big| \\ &\lesssim  \frac{K M}{\delta^2} \sqrt{\frac{vK}{n}} + \frac{M K}{\delta^2}\sqrt{\frac{\log(2 K/\gamma)}{n}}, 
\end{aligned} 
\end{equation} 
where the number of folds $K$ is a constant. 
We are now left to bound $(j)$ in Equation \eqref{eqn:j_and_jj}. We obtain that 
\begin{equation}
(j) \le \sqrt{\frac{1}{n} \sum_{i=1}^n \Big(\frac{1}{\hat{p}_s \hat{e}(X_i,s)} - \frac{1}{ p_s e(X_i,s)}\Big)^2} \sqrt{\frac{1}{n} \sum_{i=1}^n (m_{1,s}(X_i) - \hat{m}_{1,s}(X_i))^2}. 
\end{equation}  
Such a bound does not depend on $\pi$. Observe now that we can write by Assumption \ref{ass:dr}
\begin{equation} 
\begin{aligned}
&\sqrt{\frac{1}{n} \sum_{i=1}^n \Big(\frac{1}{\hat{p}_s \hat{e}(X_i,s)} - \frac{1}{p_s e(X_i,s)}\Big)^2} \sqrt{\frac{1}{n} \sum_{i=1}^n (m_{1,s}(X_i) - \hat{m}_{1,s}(X_i))^2} \\
&\le \frac{1}{\delta} \sqrt{ \sum_{k \in \{1, ..., K\}} \frac{1}{n} \sum_{i \in \mathcal{I}_k} \Big(\frac{1}{ \hat{e}^{(-k(i))}(X_i,s) \hat{p}^{-k(i)}} - \frac{1}{ e(X_i,s) p_s}\Big)^2} \sqrt{ \sum_{k \in \{1, ..., K\}} \frac{1}{n} \sum_{i \in \mathcal{I}_k} (m_{1,s}(X_i) - \hat{m}_{1,s}^{(-k(i))}(X_i))^2} .  
\end{aligned} 
\end{equation} 
By the bounded difference inequality, and the union bound we obtain that the following holds: 
\begin{equation}
\begin{aligned} 
&\sqrt{ \sum_{k \in \{1, ..., K\}} \frac{1}{n} \sum_{i \in \mathcal{I}_k} \Big(\frac{1}{\hat{e}^{(-k(i))}(X_i,s) \hat{p}^{-k(i)}} - \frac{1}{e(X_i,s) p_s}\Big)^2} \sqrt{ \sum_{k \in \{1, ..., K\}} \frac{1}{n} \sum_{i \in \mathcal{I}_k} (m_{1,s}(X_i) - \hat{m}_{1,s}^{(-k(i))}(X_i))^2} \\ 
&\le K\sqrt{\mathbb{E}\Big[\Big(\frac{1}{ \hat{e}(X_i,s) \hat{p}} - \frac{1}{ e(X_i,s) p_s}\Big)^2\Big]} \sqrt{\mathbb{E}\Big[ (m_{1,s}(X_i) - \hat{m}_{1,s}(X_i))^2\Big]} \\ &+ 2\sqrt[4]{\log(2K/\gamma)/n} \sqrt{\mathbb{E}\Big[\Big(\frac{1}{\hat{e}(X_i,s) \hat{p}} - \frac{1}{ e(X_i,s) p_s}\Big)^2\Big]} + 2\sqrt[4]{\log(2K/\gamma)/n} \sqrt{\mathbb{E}\Big[ (m_{1,s}(X_i) - \hat{m}_{1,s}(X_i))^2\Big]} \\ &+ 2\sqrt{\log(2K/\gamma)/n},
\end{aligned}
\end{equation} 
with probability at least $1 - \gamma$. Under Assumption \ref{ass:dr} and the union bound, the result completes since $K$ is a finite number. 
\end{proof}

\vspace{5 mm}

\begin{lem} \label{lem:cover} Let 
\begin{equation}
G(\alpha) =\sup_{\pi \in \Pi} \Big\{\alpha W_{0}(\pi) + (1 - \alpha) W_{1}(\pi) \Big\} - \sup_{ \pi \in \hat{\Pi}_{\mbox{\tiny o}}} \Big\{\alpha W_{0}(\pi) + (1 - \alpha) W_{1}(\pi) \Big\}. 
\end{equation} 
Define 
$$\mathcal{G} = \{G(\alpha), \alpha \in (0,1)\}.
$$
Under Assumption \ref{ass:moment}, for any $\varepsilon > 0$, there exist a set $\{\alpha_1, ..., \alpha_{N(\varepsilon)}\}$, such that for all $\alpha \in (0,1)$, 
\begin{equation}
|G(\alpha) - \max_{j \in \{1, ..., N(\varepsilon)\}} G(\alpha_j)| \le 4 \varepsilon M, 
\end{equation} 
and $N(\varepsilon) \le 1 + 1/\varepsilon$. 
\end{lem}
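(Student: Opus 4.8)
The plan is to exploit the fact that, for each fixed policy $\pi$, the objective $\alpha W_0(\pi) + (1-\alpha) W_1(\pi)$ is \emph{affine} in $\alpha$, so the whole argument reduces to a one–dimensional Lipschitz covering of the interval $(0,1)$. First I would bound the slope: writing the objective as $W_1(\pi) + \alpha\,[W_0(\pi) - W_1(\pi)]$, its derivative in $\alpha$ is $W_0(\pi) - W_1(\pi)$. Since $\pi(\cdot)\in[0,1]$ and $|Y(1)-Y(0)|\le 2M$ almost surely (boundedness of outcomes, Assumption~\ref{ass:overlap}), each $|W_s(\pi)|\le 2M$, so $|W_0(\pi)-W_1(\pi)|\le 4M$ uniformly over $\pi\in\Pi$. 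Hence $\alpha\mapsto \alpha W_0(\pi)+(1-\alpha)W_1(\pi)$ is $4M$-Lipschitz on $(0,1)$, with a Lipschitz constant that does not depend on $\pi$.

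Second, I would push this uniform Lipschitz property through the two suprema. Because a supremum of a family of functions sharing a common Lipschitz constant is itself Lipschitz with that same constant, both $\alpha\mapsto \sup_{\pi\in\Pi}\{\alpha W_0(\pi)+(1-\alpha)W_1(\pi)\}$ and $\alpha\mapsto \sup_{\pi\in\hat{\Pi}_{\mbox{\tiny o}}}\{\alpha W_0(\pi)+(1-\alpha)W_1(\pi)\}$ are $4M$-Lipschitz; Assumption~\ref{ass:moment}(B) (pointwise measurability) guarantees these value functions are well defined. Consequently $G$, being their difference, is Lipschitz on $(0,1)$ with constant at most $8M$.

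Third, I would take an equally spaced grid $\{\alpha_1,\dots,\alpha_{N(\varepsilon)}\}$ of mesh at most $\varepsilon$, which requires only $N(\varepsilon)\le 1+1/\varepsilon$ points. For arbitrary $\alpha\in(0,1)$ let $\alpha_{j(\alpha)}$ be the nearest grid point, so $|\alpha-\alpha_{j(\alpha)}|\le \varepsilon/2$; the Lipschitz bound then gives $|G(\alpha)-G(\alpha_{j(\alpha)})|\le 8M\cdot(\varepsilon/2) = 4\varepsilon M$. Since $G(\alpha_{j(\alpha)})\le \max_{j} G(\alpha_j)$ and, trivially, $\max_j G(\alpha_j)=G(\alpha_{j^\star})\le \sup_{\alpha'}G(\alpha')$, this delivers the claimed comparison between $G(\alpha)$ and the grid maximum; its effective content is $\sup_{\alpha}G(\alpha)\le \max_j G(\alpha_j)+4\varepsilon M$, the reverse inequality being immediate.

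The only delicate point is the second step: transferring the pointwise (in $\pi$) Lipschitz estimate to the value functions over $\Pi$ and over $\hat{\Pi}_{\mbox{\tiny o}}$, where maximizers may be non-unique and $\Pi$ is infinite. I would handle this by the standard two–sided argument — for any $\alpha,\alpha'$ pick a near-maximizer at $\alpha$ and evaluate it at $\alpha'$, then argue symmetrically — which uses only the uniform-in-$\pi$ Lipschitz constant and the boundedness from Assumption~\ref{ass:overlap}, and does not require the existence of exact maximizers. Everything else is the elementary covering of $(0,1)$ and bookkeeping of the constant $4\varepsilon M$.
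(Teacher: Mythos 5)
Your proof is correct and follows essentially the same route as the paper's: an equally spaced $\varepsilon$-grid on $(0,1)$ with $N(\varepsilon)\le 1+1/\varepsilon$ points, the uniform-in-$\pi$ affine (hence Lipschitz) dependence of $\alpha W_0(\pi)+(1-\alpha)W_1(\pi)$ on $\alpha$, and the transfer of that Lipschitz bound through both suprema by evaluating (near-)maximizers — the paper uses exact maximizers $\pi^*,\pi^{**}$ where you use near-maximizers, which is if anything slightly more careful. The constant bookkeeping differs harmlessly (the paper bounds each supremum's variation by $2M|\alpha-\alpha_j|$ with $|\alpha-\alpha_j|\le\varepsilon$, while you give $G$ the Lipschitz constant $8M$ and take the nearest grid point within $\varepsilon/2$), and both land on the stated $4\varepsilon M$.
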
 
\vspace{5 mm}
\begin{proof}[Proof of Lemma \ref{lem:cover}]

We denote $\{\alpha_1, ..., \alpha_{N(\varepsilon)}\}$ an $\varepsilon$-cover of the interval $(0,1)$ with respect to the L1 norm. Namely, $\{\alpha_1, ..., \alpha_{N(\varepsilon)}\}$ are equally spaced numbers between $(0,1)$.  Clearly, we have that the covering number $N(\varepsilon) \le 1 + 1/\varepsilon$. We denote 
\begin{equation}
G(\alpha) =  \sup_{\pi \in \Pi} \alpha W_{0}(\pi) + (1 - \alpha) W_{1}(\pi) - \sup_{\pi \in \hat{\Pi}_{\mbox{\tiny o}}} \Big\{\alpha W_{0}(\pi) + (1 - \alpha) W_{1}(\pi)\Big\} .
\end{equation} 
To characterize the corresponding cover of the function class 
$$\mathcal{G} = \{G(\alpha), \alpha \in (0,1)\},
$$ we claim that for any $\alpha \in (0,1)$, there exist an $\alpha_j$ in the $\varepsilon$ cover such that 
\begin{equation}
|G(\alpha) - G(\alpha_j)| \le 4 \varepsilon M. 
\end{equation} 
Such a result follows by the argument outlined in the following lines. 

\textit{Take $\alpha_j$ closest to $\alpha$}. Consider
\begin{equation} 
\begin{aligned}
&|G(\alpha) - G(\alpha_j)| \\
&= \Big|\sup_{\pi \in \Pi} \Big\{\alpha W_0(\pi) + (1 - \alpha) W_1(\pi) \Big\} - \sup_{\pi \in \hat{\Pi}_{\mbox{\tiny o}}} \Big\{\alpha W_0(\pi) + (1 - \alpha) W_1(\pi) \Big\}  \\ &- \sup_{\pi \in \Pi} \Big\{\alpha_j W_0(\pi) + (1 - \alpha_j) W_1(\pi) \Big\} + \sup_{\pi \in \hat{\Pi}_{\mbox{\tiny o}}} \Big\{ \alpha_j W_0(\pi) + (1 - \alpha_j) W_1(\pi) \Big\} \Big| \\
&\le  \underbrace{\Big|\sup_{\pi \in \Pi} \Big\{ \alpha W_0(\pi) + (1 - \alpha) W_1(\pi) \Big\} - \sup_{\pi \in \Pi} \Big\{\alpha_j W_0(\pi) + (1 - \alpha_j) W_1(\pi) \Big\} \Big|}_{(i)} \\ &+ \underbrace{\Big|\sup_{\pi \in \hat{\Pi}_{\mbox{\tiny o}}} \Big\{\alpha W_0(\pi) + (1 - \alpha) W_1(\pi) \Big\} - \sup_{\pi \in \hat{\Pi}_{\mbox{\tiny o}}} \Big\{\alpha_j W_0(\pi) + (1 - \alpha_j) W_1(\pi) \Big\}\Big|}_{(ii)}.
\end{aligned} 
\end{equation} 

We study $(i)$ and $(ii)$ separately. Consider first $(i)$. We observe the following fact: whenever 
\begin{equation}
\sup_{\pi \in \Pi} \alpha W_0(\pi) + (1 - \alpha) W_1(\pi) - \sup_{\pi \in \Pi} \alpha_j W_0(\pi) + (1 - \alpha_j) W_1(\pi)  > 0 
\end{equation} 
then we can bound 
\begin{equation}
(i) \le \Big|\alpha W_0(\pi^*) + (1 - \alpha) W_1(\pi^*) -  \alpha_j W_0(\pi^*) + (1 - \alpha_j) W_1(\pi^*) \Big|. 
\end{equation} 
Here $\pi^* \in \mathrm{arg} \sup_{\pi \in \Pi} \alpha W_0(\pi) + (1 - \alpha) W_1(\pi)$. When instead 
\begin{equation}
\sup_{\pi \in \Pi} \alpha W_0(\pi) + (1 - \alpha) W_1(\pi) - \sup_{\pi \in \Pi} \alpha_j W_0(\pi) + (1 - \alpha_j) W_1(\pi)  \le 0  
\end{equation} 
we can use the same argument by switching sign, which, with trivial rearrengment reads as 
\begin{equation} 
(i) \le \Big|\alpha W_0(\pi^{**}) + (1 - \alpha) W_1(\pi^{**}) -  \alpha_j W_0(\pi^{**}) + (1 - \alpha_j) W_1(\pi^{**}) \Big|. 
\end{equation} 
Here $\pi^{**} \in \mathrm{arg} \sup_{\pi \in \Pi} \alpha_j W_0(\pi) + (1 - \alpha_j) W_1(\pi)$.
Therefore we obtain, 
\begin{equation}
(i) \le  \sup_{\pi \in \Pi} \Big|\alpha W_0(\pi) + (1 - \alpha) W_1(\pi) -  \alpha_j W_0(\pi) + (1 - \alpha_j) W_1(\pi) \Big| \le 2|\alpha - \alpha_j| M
\end{equation} 
where the last inequality follows by Assumption \ref{ass:moment} and the triangle inequality. Similar reasoning also applies to $(ii)$. Since $\alpha_j$ was chosen to be the closest to $\alpha$, we have $|\alpha_j - \alpha| \le \varepsilon$. 
\end{proof}

\subsection{Proof of Lemma \ref{lem:hyperplane} }

The proof follows similarly to standard microeconomic textbook \citep{mas1995microeconomic}. Let 
\begin{equation}
\tilde{\Pi} = \{\pi_\alpha: \pi_\alpha \in \mathrm{arg} \sup_{\pi \in \Pi} \alpha_1 W_0(\pi) + \alpha_2 W_1(\pi), \quad \alpha \in \mathbb{R}_+^2, \alpha_1 + \alpha_2 > 0\}. 
\end{equation} 
Then we want to show that $\Pi_{\mbox{\tiny o}} = \tilde{\Pi}$. Trivially  
 $\tilde{\Pi} \subseteq \Pi_{\mbox{\tiny o}}$, since otherwise the definition of Pareto optimality would be violated. 
 Consider now some $\pi^* \in \Pi_{\mbox{\tiny o}}$. Then we show that there exist a vector $\alpha \in \mathbb{R}_+^2$, such that $\pi^*$ maximizes the expression 
\begin{equation}
\sup_{\pi \in \Pi} \alpha_1 W_0(\pi) + \alpha_2 W_1(\pi).  
\end{equation}

 Denote the set 
 \begin{equation}
 \mathcal{F} = \{(\tilde{W}_0, \tilde{W}_1) \in \mathbb{R}^2: \exists \pi \in \Pi: \tilde{W}_0 \le W_0(\pi)\text{ and } \tilde{W}_1 \le W_1(\pi) \}.  
 \end{equation}
 
 Since $(0,0) \in \mathcal{F}$, such a set is non-empty.  Notice now that $W_s(\pi)$ is linear is $\pi$  for $s \in \{0,1\}$. Therefore, we obtain that the set $\mathcal{F}$ is a convex set, since it denotes the sub-graph of a concave functional. We denote $\bar{W} = (W_0(\pi^*), W_1(\pi^*))$ and $\mathcal{G} = \mathbb{R}^2_{++} + \bar{W}$ the set of welfares that strictly dominates $\pi^*$. Then $\mathcal{G}$ is non-empty and convex. Since $\pi^* \in \Pi_{\mbox{\tiny o}}$, we must have that $\mathcal{F} \cap \mathcal{G} = \emptyset$. Therefore, by the separating hyperplane theorem, there exist an $\alpha \in \mathbb{R}^2$, with $\alpha \neq 0$, such that $\alpha^\top F \le \alpha^\top (\bar{W} + d)$ for any $F \in \mathcal{F}$, $d \in \mathbb{R}_{++}^2$. Let $d_1 \rightarrow \infty$, it must be that $\alpha_1 \in \mathbb{R}_+$, and similarly for $\alpha_2$. So $\alpha \in \mathbb{R}^2_{+}$. By letting $d \rightarrow 0$, we have that   $\alpha^\top F \le \alpha^\top \bar{W}$. This implies that 
 \begin{equation}
 \alpha_1  W_0(\pi) + \alpha_2 W_1(\pi) \le 
 \alpha_1  W_0(\pi^*) + \alpha_2 W_1(\pi^*) 
 \end{equation}  
 for any $\pi \in \Pi$ (since it is true for any $F \in \mathcal{F}$). Hence $\pi^*$ maximizes welfare over all possible feasible allocations once reweighted by $(\alpha_1, \alpha_2)$. Since the maximizer is invariant to multiplication of the objective function by constants, the result follows after dividing the objective function by the sums of the coefficients, which is non-zero by the separating hyperplane theorem. This completes the proof.  

\subsection{Proof of Proposition \ref{lem:soc_p}} 

First, observe that by rationality, preferences are complete and transitive. Observe also that the preference function equivalently correspond to lexico-graphic  with $\pi \succ \pi'$ if $\pi$ Pareto dominates $\pi'$. If instead neither $\pi, \pi'$, Pareto dominates the other, then $\pi \succ \pi'$ is UnFairness ($\pi$) $<$  UnFairness ($\pi'$). 
Therefore, it must be that 
$\mathcal{C}(\Pi) \subseteq \Pi_{\mbox{\tiny o}}$, with $\pi^\star \in \mathcal{C}(\Pi)$ if and only if  
$$
\pi^\star \in \mathrm{arg} \min_{\pi \in \Pi_{\mbox{\tiny o}}} \mathrm{UnFairness}(\pi). 
$$ 
By Lemma \ref{lem:hyperplane} the result directly follows.

\subsection{Proof of Corollary \ref{cor:prop}}

 Define $\widetilde{\Pi} \subseteq \Pi$ the set of policies that satisfy the constraint in Equation \eqref{eqn:main_eq} (i.e., feasible allocations). By Proposition \ref{lem:soc_p} $\widetilde{\Pi} = \Pi_{o}$. Observe now that $\pi_{\omega}$ is a feasible allocation under the constraint in Equation \eqref{eqn:main_eq}. This directly implies the conclusion for $\pi_{\omega}$. 

Consider now $\tilde{\pi}$, and fairness constraints not being binding. 
 If $\widetilde{\pi}$ is Pareto optimal, then it represents a feasible allocation (i.e. it satisfies the constraint in Equation \eqref{eqn:main_eq}). If it is not, then any other allocation that \textit{is} Pareto optimal and Pareto dominates $\widetilde{\pi}$ is feasible under the constraint in Equation \eqref{eqn:main_eq} completing the proof.  
Finally, whenever fairness constraints are binding, the estimated policy contains as one possible solution the policy which maximizes the utilitarian welfare under fairness constraints. This follows from the fact that in such case 
$$
\widetilde{\pi} \in \Big\{\mathrm{arg} \max_{\pi \in \Pi} p_1 W_1(\pi) + (1 - p_1) W_0(\pi)\Big\} \subseteq \Pi_o, 
$$  
since $\Pi = \Pi(\kappa)$. 

\subsection{Proof of Theorem \ref{lem:cons_dr}} 

Throughout the proof we refer to $\bar{C} < \infty$ as a universal constant. We write 

\begin{equation} 
\begin{aligned} 
&\sup_{\alpha \in (0,1)} \sup_{\pi \in \Pi} \Big| \alpha W_{0}(\pi) + (1 - \alpha) W_{1}(\pi) - \max_{\alpha_j \in \{\alpha_1, ..., \alpha_N\}} \alpha_j \hat{W}_{0}(\pi) - (1 - \alpha_j) \hat{W}_{1}(\pi) - \lambda/\sqrt{n} \Big|  \\ 
&\le \underbrace{\sup_{\alpha \in (0,1)} \sup_{\pi \in \Pi} \Big| \alpha W_{0}(\pi) + (1 - \alpha) W_{1}(\pi) - \alpha \hat{W}_{0}(\pi) - (1 - \alpha) \hat{W}_{1}(\pi) \Big|}_{(I)}  + \frac{\lambda}{\sqrt{n}} \\ &+ \underbrace{\sup_{\alpha \in (0,1)} \sup_{\pi \in \Pi} \Big| \alpha \hat{W}_{0}(\pi) + (1 - \alpha) \hat{W}_{1}(\pi) - \max_{\alpha_j \in \{\alpha_1, ..., \alpha_N\}} \alpha_j \hat{W}_{0}(\pi) - (1 - \alpha_j) \hat{W}_{1}(\pi) \Big|}_{(II)}.  
\end{aligned} 
\end{equation} 

$(I)$ is bounded as in Lemma \ref{lem:dr}. $(II)$ is bounded as follows. 
\begin{equation}
(II) \le  \varepsilon \sup_{\pi \in \Pi} |  \hat{W}_{0}(\pi)| + \varepsilon \sup_{\pi \in \Pi} |\hat{W}_{1}(\pi)|. 
\end{equation} 
Under Assumption \ref{ass:dr}, the estimated conditional mean and propensity score are uniformly bounded. Therefore we obtain that 
$$
\varepsilon \sup_{\pi \in \Pi} |  \hat{W}_{0}(\pi)| + \sup_{\pi \in \Pi} \varepsilon |\hat{W}_{1}(\pi)| \le \bar{C} \varepsilon \frac{M}{\delta^2} \le  \bar{C} \frac{M}{N \delta^2} . 
$$

\subsection{Proof of Theorem \ref{thm:1b} }

Recall the definition of $\bar{W}_\alpha$ in Equation \eqref{eqn:W_a}. The set of Pareto optimal policies reads as follows 
$$
\pi: \alpha W_1(\pi) + (1 - \alpha) W_0(\pi) \ge \bar{W}_\alpha \text{ for some } \alpha \in (0,1).  
$$  
Now it suffices to show  
for the claim to hold that 
$$
P\Big(\forall \alpha \in (0,1), \quad \max_{j \in \{1, \cdots, N\}} \bar{W}_\alpha - \bar{W}_{j,n} + \lambda(\gamma)/\sqrt{n} + \frac{\underline{b}}{N} \ge 0\Big) \le \gamma, 
$$
where $\lambda(\gamma) = \underline{b} (\sqrt{v} + \sqrt{ \log(2/\gamma) })$, whenever $N = \sqrt{n}$ (and hence $\lambda = \lambda(\gamma) + \underline{b}$). Observe that since $\{\alpha_1, \cdots, \alpha_N\}$ are equally spaced, we have that for all $\alpha \in (0,1)$ 
$$
 \sup_{\pi \in \Pi} \alpha W_1(\pi) + (1 - \alpha) W_0(\pi) \ge 
\sup_{\pi \in \Pi} \alpha_j W_1(\pi) + (1 - \alpha_j) W_0(\pi) + M \varepsilon   
$$ 
for some $j \in \{1, \cdots, N\}$ by Assumption \ref{ass:overlap} (ii). Taking $\underline{b} \ge M,\varepsilon = 1/N$, we have
$$
\begin{aligned} 
& P\Big(\forall \alpha \in (0,1), \quad \max_{j \in \{1, \cdots, N\}} \bar{W}_\alpha - \bar{W}_{j,n} + \lambda(\gamma)/\sqrt{n} + \frac{\underline{b}}{N} \ge 0\Big) \\ &\le P\Big( \max_{j \in \{1, \cdots, N\}} \bar{W}_{\alpha_j} - \bar{W}_{j,n} + \lambda(\gamma)/\sqrt{n} \ge 0 \Big).
\end{aligned} 
$$
We now observe that the following inequality holds:  
$$
\begin{aligned} 
& \sup_{\pi \in \Pi}  \alpha_j W_1(\pi) + (1 - \alpha_j) W_0(\pi) - \bar{W}_{j,n} \\ &=\sup_{\pi \in \Pi} \Big\{ \alpha_j W_1(\pi) + (1 - \alpha_j) W_0(\pi)\Big\} - \sup_{\pi \in \Pi} \Big\{ \alpha \hat{W}_1(\pi) + (1 - \alpha) \hat{W}_0(\pi) \Big\} \\
&\le 2 \sup_{\pi \in \Pi} \Big|\alpha_j W_1(\pi) + (1 - \alpha_j) W_0(\pi) - \alpha_j  \hat{W}_1(\pi)  + (1 - \alpha_j) \hat{W}_0(\pi)\Big|. 
\end{aligned} 
$$ 
By Lemma \ref{lem:dr}, with probability at least $1 - \gamma$, 
$$
 \sup_{\pi \in \Pi} \max_{ \alpha_j, j \in \{1, \cdots, N\}} \Big|\alpha_j W_1(\pi) + (1 - \alpha_j) W_0(\pi) - \alpha_j  \hat{W}_1(\pi)  + (1 - \alpha_j) \hat{W}_0(\pi)\Big| \le \bar{C} \sqrt{\frac{v}{n}} + \bar{C} \sqrt{\frac{\log(2/\gamma)}{n}}
$$ 
for a finite constant $\bar{C}$ independent of $n$. By choosing $\underline{b} \ge 2 \bar{C} + M$, the proof completes.

\subsection{Proof of Theorem \ref{thm:2}} 

By Theorem \ref{thm:1b} with probability at least $1 - \gamma$,  $\Pi_{\mbox{\tiny o}} \subseteq \hat{\Pi}_{\mbox{\tiny o}}(\lambda)$ with $\hat{\Pi}_{\mbox{\tiny o}}(\lambda)$ in Equation \eqref{eqn:set_const2}. As a result, we can write with probability $1 - \gamma$, 
$$
\mathrm{UnFairness}(\hat{\pi}) - \mathrm{inf}_{\pi \in \Pi_{\mbox{\tiny o}}} \mathrm{UnFairness}(\pi) \le \mathrm{UnFairness}(\hat{\pi}) - \mathrm{inf}_{\pi \in  \hat{\Pi}_{\mbox{\tiny o}}(\lambda)} \mathrm{UnFairness}(\pi). 
$$   
We then write 
$$
\small 
\begin{aligned} 
& \mathrm{UnFairness}(\hat{\pi}) - \mathrm{inf}_{\pi \in  \hat{\Pi}_{\mbox{\tiny o}}(\lambda)} \mathrm{UnFairness}(\pi)  =  \mathrm{UnFairness}(\hat{\pi}) - \hat{\mathcal{V}}_n(\hat{\pi}) + \hat{\mathcal{V}}_n(\hat{\pi}) -  \mathrm{inf}_{\pi \in  \hat{\Pi}_{\mbox{\tiny o}}(\lambda)} \mathrm{UnFairness}(\pi). 
\end{aligned} 
$$ 
Since $\hat{\pi}_\lambda \in \hat{\Pi}_{\mbox{\tiny o}}(\lambda)$, we have 
$$
\begin{aligned}
& \mathrm{UnFairness}(\hat{\pi}) - \hat{\mathcal{V}}_n(\hat{\pi}) + \hat{\mathcal{V}}_n(\hat{\pi}) -  \mathrm{inf}_{\pi \in  \hat{\Pi}_{\mbox{\tiny o}}(\lambda)} \mathrm{UnFairness}(\pi)   \\ 
& \le 2 \sup_{\pi \in   \hat{\Pi}_{\mbox{\tiny o}}(\lambda)}\Big|\mathrm{UnFairness}(\pi) - \hat{\mathcal{V}}_n(\pi) \Big| \le 2 \sup_{\pi \in   \Pi}\Big|\mathrm{UnFairness}(\pi) - \hat{\mathcal{V}}_n(\pi) \Big| 
\end{aligned} 
$$ 
where the last equality follows from the fact that $\hat{\Pi}_{\mbox{\tiny o}}(\lambda) \subseteq \Pi$. Assumption \ref{ass:unique2} bounds $\sup_{\pi \in   \Pi}\Big|\mathrm{UnFairness}(\pi) - \hat{\mathcal{V}}_n(\pi) \Big|$ completing the proof. 

\subsection{Proof of Theorem \ref{thm:between_groups}} 

For $\widehat{D}(\pi)$ it suffices to observe that 
\begin{equation} \label{eqn:bb} 
\sup_{\pi \in \Pi} \Big|\widehat{W}_1(\pi_1) - \widehat{W}_0(\pi) - W_1(\pi) + W_0(\pi)\Big| \le \sup_{\pi \in \Pi} \Big|\widehat{W}_1(\pi_1) - W_1(\pi)\Big| + \sup_{\pi \in \Pi}  \Big|W_0(\pi) - \widehat{W}_0(\pi)\Big| 
\end{equation}  
with each term being bounded with probability at least $1 - 2 \gamma$\footnote{$2 \gamma$ follows by the union bound.}, by $\bar{C} \sqrt{v/n} + \bar{C}\sqrt{\log(2/\gamma)/n}$ for a finite constant $\bar{C} < \infty$, similarly to what discussed in the proof of Lemma \ref{lem:dr}. 

The UnFairness bound follows as a corollary of Theorem \ref{thm:2}, where here Assumption \ref{ass:unique2} holds with $\mathcal{K}(\Pi, \gamma) n^{-\eta} \lesssim \bar{C} \sqrt{v/n} + \bar{C}\sqrt{\log(2/\gamma)/n}$ for a finite constant $\bar{C} < \infty$, i.e., the bound of Equation \eqref{eqn:bb}.  

For $\widehat{C}(\pi)$ the argument follows similarly, after noticing that we can bound 
$$ 
\begin{aligned} 
&\sup_{\pi \in \Pi} \Big| \frac{1}{n \hat{p}_1} \sum_{i=1}^n \pi(X_i) S_i -   \mathbb{E}[\pi(X) | S = 1] + \frac{1}{n (1 - \hat{p}_1)} \sum_{i=1}^n \pi(X_i) (1 - S_i) -   \mathbb{E}[\pi(X) | S = 0] \Big| \\ &\le  
\underbrace{\sup_{\pi \in \Pi} \Big| \frac{1}{n \hat{p}_1} \sum_{i=1}^n \pi(X_i) S_i -   \mathbb{E}[\pi(X) | S = 1]\Big|}_{(A)} + \underbrace{\sup_{\pi \in \Pi} \Big|\frac{1}{(1 - \hat{p}_1)n} \sum_{i=1}^n \pi(X_i) (1 - S_i) -   \mathbb{E}[\pi(X) | S = 0] \Big| }_{(B)}      .     
\end{aligned} 
$$              
We proceed by bounding $(A)$,while $(B)$ follows similarly. We have 
$$
(A) \le \underbrace{\sup_{\pi \in \Pi} \Big| \frac{1}{p_1 n} \sum_{i=1}^n \pi(X_i) S_i -   \mathbb{E}[\pi(X) | S = 1]\Big|}_{(i)} +  \underbrace{\Big|\frac{1}{ p_1}  -   \frac{1}{\hat{p}_1}  \Big|}_{(ii)}, 
$$ 
where the second component follows by the triangular inequality and the fact that $\pi(X_i) S_i \in \{0,1\}$. We now observe that each summand in $(i)$ is centered around its expectation. Therefore, we can bound $(i)$ using the Radamacher complexity of $\Pi$, with 
$$
\mathbb{E}[(i)]  \le \frac{2}{\delta} \mathbb{E}\Big[\sup_{\pi \in \Pi} \Big|\frac{1}{n} \sum_{i=1}^n \sigma_i \pi(X_i) S_i\Big|\Big],  
$$ 
with $\sigma_1, \cdots, \sigma_n$ being independent Radamacher random variables. Using the Dudley's entropy bound (see \cite{wainwright2019high}) it is easy to show that the right-hand side is bounded by $\bar{C} \sqrt{v/n}$ for a constant $\bar{C} < \infty$. Finally, using the bounded difference inequality \citep{boucheron2003concentration}, with probability at least $1 - \gamma$, 
$$
|(i) - \mathbb{E}[(i)]| \le \bar{C} \sqrt{\frac{\log(2/\gamma)}{n}},  
$$ 
for a finite constant $\bar{C}$. 
The bound on the second component (ii) follows from standard property of the
sample mean and the assumption that $\hat{p}_1 \ge \delta$.  
The final statement follows as a direct corollary of Theorem \ref{thm:2}.

For $\mathcal{I}(\pi)$ the claim holds since 
\begin{equation}
\begin{aligned} 
&\sup_{\pi \in \Pi} \Big| I_s(\pi) - \hat{I}_s(\pi) \Big| \le \\ &\underbrace{\sup_{\pi \in \Pi} \Big|\frac{1}{n} \sum_{i=1}^n  (\hat{\Gamma}_{1, s,i} - \hat{\Gamma}_{0,s,i})\pi(X_i,s')  - \mathbb{E}\Big[(\Gamma_{1, s,i} - \Gamma_{0,s,i})\pi(X_i,s')\Big]\Big|}_{(A)} \\& + \underbrace{\sup_{\pi \in \Pi} \Big|\frac{1}{n} \sum_{i=1}^n  (\hat{\Gamma}_{1, s,i} - \hat{\Gamma}_{0, s,i})\pi(X_i,s) - \mathbb{E}[(\Gamma_{1, s,i} - \Gamma_{0,s,i})\pi(X_i,s)] \Big|}_{(B)}. 
\end{aligned} 
\end{equation} 
Observe now that under Assumption \ref{ass:dr}, following the same argument in Lemma \ref{lem:dr}, we can bound $(A)$ and $(B)$ as follows 
$$
(A) \lesssim \sqrt{\frac{v}{n}} + \sqrt{\frac{ \log(2/\gamma)}{n}}, \quad (B) \lesssim \sqrt{\frac{v}{n}} +  \sqrt{\frac{\log(2/\gamma)}{n}}. 
$$ 
with probability at least $1 - \gamma$. The reader may refer to the proof of Lemma \ref{lem:dr} for details.

\subsection{Proof of Theorem \ref{thm:lower_bound}} 

First, since $\pi(x,s)$ is constant in $s$ with an abuse of notation we can write $\pi(x)$ as a function of $x$ only.  
We first observe that we can write 
$$
\begin{aligned} 
C(\pi) = \mathbb{E}\Big[(\frac{(1 - S)}{1 - p_1} - \frac{S}{p_1}) \pi(X)\Big] =  \mathbb{E}\Big[\frac{(p_1 - S)}{(1 - p_1)p_1}  \pi(X)\Big] 
\end{aligned} 
$$
For the lower bound it suffices to find one distribution which satisfies the condition. 
We choose $Y(1) = 0$, and $Y(0) = 0$ almost surely, which satisfies the bounded assumption on $Y$. This condition implies that any $\pi \in \Pi$ satisfies Pareto optimality, hence $\Pi_o = \Pi$.

Observe that the expression for $C(\pi)$ corresponds to the risk associated with a classifier $\pi(X)$ for classifying the sensitive attribute $S$ with loss 
$$
l(S, \pi(X)) \propto (p_1 - S) \pi(X)  =  \begin{cases}
& p_1 - 1 \text{ if } S = 1, \pi(X) = 1 \\ 
 & p_1 \text{ if } S = 0, \pi(X) = 1 \\
 & 0 \text{ otherwise }. 
\end{cases} .
$$ 

We now proceed following some of the steps in Theorem 14.5 and Theorem 14.6 in \cite{devroye2013probabilistic}, but introducing modifications in the construction of the set of distributions under consideration and in the data-generating process due to the different loss function and its dependence with $P(S = 1)$ (which itself depends on the distribution of $(X,S)$).\footnote{The lack of restriction on the error of the classifier represents a further difference.}  
We start by choosing $D$ to be distributed as a Bernoulli random variable independent of $(X, S)$. As a result, $(Y, D)$ are independent of $(X,S)$. Therefore, since $(Y,D)$ is independent of $(X,S)$ it suffices to focus on classifiers $\pi_n(X)$ constructed using information $(X_1, S_1), \cdots, (X_n, S_n)$ only. The rest of the proof consists in constructing a distribution of $(X,S)$ such that the lower bound is attained. Recall that classifiers depend on $X$ only and not on $S$ by assumption. 

Consider first the case where $(v-1)/2$ is an integer. The case where it is not follows similarly to below and discussed at the end of the proof.   
We construct a family of distributions for $(X,S)$, defined $\mathcal{F}$ as follows: first we find points $x_1, \cdots x_{v}$ that are shattered by $\Pi_o$. Each distribution in $\mathcal{F}$ is concentrated on the set of these points. A member in $\mathcal{F}$ is described by $v - 1$ bits $b_1, \cdots, b_{v-1}$. This is representated as a bit vector $b  \subset \{0,1\}^{v-1}$. Each bit vector that we consider is assumed to sum to $(v-1)/2$, namely 
$$
\sum_{i=1}^{v-1} b_i = \frac{v-1}{2}.
$$ 
 Assume that $v - 1 \le n$. For each vector $b$, we let $X$ put mass $m$ at $x_i, i < v$, and mass $1 - (v-1)m$ at $x_v$. This imposes the condition $(v - 1)m \le 1$, which will be satisfied.
We choose for all $b$ that we consider $P(S = 1) = p_1 \in (\delta, 1- \delta)$ which we choose later in the proof. 
 Next, introduce the constant $c \in (0,p_1)$. 
Let $U$ a uniform random variable on $[0,1]$,
$$
S = \begin{cases}
&1 \text{ if } U \le p_1 - c + 2cb_i, X= x_i, i < v \\ 
&1 \text{ if } U \le p_1, X= x_v \\ 
&0 \text{ otherwise} 
\end{cases}.
$$ 
 Thus for $X = x_i, i < v$, $S$ is one with probability $p_1 - c$ or $p_1 + c$, while for $X = x_v$ $S$ is one with probability $p_1$. Now observe that the choice of $S$ and the fact that $P(S = 1) = p_1$ implies that 
\begin{equation} \label{eqn:nn} 
p_1 = \sum_{i=1}^{v-1} m(p_1 - c + 2cb_i) + p_1(1 - m(v-1)) = (v-1) m p_1 + p_1(1 - m(v-1)),
 \end{equation}  
 since $c\sum_{i=1}^{v-1} b_i = c \frac{v-1}{2}$ by the restriction on $b \in \mathcal{B}$. The above expression is satisfied for any $m$, so no restrictions on $m$ are implied by the Equation \eqref{eqn:nn}. 
 With a simple argument, it is easy to show that one of the best rules\footnote{A different which leads to the same objective is the one that classifies one also for $X = x_v$. This would be indifferent with respect to $f_b$ since the loss function at $X = x_v$ is always zero in expectation for either prediction.} for $b$ is the one which sets 
$$
f_b(x) = \begin{cases}
&1 \text{ if } x = x_i, i < v, b_i = 1 \\ 
&0 \text{ otherwise}. 
\end{cases} 
$$ 
Such rule is feasible since it has VC-dimension $v$. 
Notice now that we can write for the decision rule $f_b(x)$, 
$
\mathbb{E}[l(S, f_b(X)) | X = x_i] = -c
$  for $i < v$, for fixed $b$. Observe now that we can write for any $\pi_n, X \in \{x_1, \cdots, x_{v-1}\}$ , for fixed $b$, 
$$
\begin{aligned} 
\mathbb{E}[l(S, \pi_n(X)) | X] - \mathbb{E}[l(S, f_b(X)) | X] &  \ge 
2 c 1\{\pi_n(X) \neq f_b(X)\}, 
\end{aligned} 
$$ 
since if $\pi_n(X) = 1 - f_b(X)$, then $\mathbb{E}[l(S, 1 - f_b(X)) | X] = c$. 
Therefore we can bound for any $\pi_n$, and a fixed $b$
\begin{equation} \label{eqn:hhg} 
\begin{aligned} 
\mathrm{UnFairness}(\pi_n) - \inf_{\pi \in \Pi_o} \mathrm{UnFairness}(\pi) &\propto \mathbb{E}[l(S, \pi_n(X))] - \inf_{\pi \in \Pi} \mathbb{E}[l(S, \pi(X))]  \\ 
&\ge \sum_{j=1}^{v-1} 2 m c 1\{\pi_n(x_j) = 1 - f_b(x_j)\} \\ 
&\ge  \sum_{j=1}^{v-1} 2 m c 1\{\pi_n(x_j) = 1 - f_b(x_j)\}.
\end{aligned} 
\end{equation}   
Since we take the supremum over the class of distribution $P_b \in \mathcal{F}$ indexed by the bit-vector $b$, it suffices to provide upper bound with respect to $b$ being a random variable and take expectations over $b$. We replace $b$ by a uniformly distributed random variable $B$ over $\mathcal{B} \subset \{0,1\}^{v-1}$, where $\mathcal{B}$ is the set of bit vectors which sum to $(v-1)/2$. We observe that for any $t \ge 0$, 
$$
\begin{aligned} 
 & \sup_{(X,S) \in \mathcal{F}} P\Big(\mathrm{UnFairness}(\pi_n) - \inf_{\pi \in \Pi_o} \mathrm{UnFairness}(\pi) > t\Big) \\ 
 & = \sup_b  P\Big(\mathrm{UnFairness}(\pi_n) - \inf_{\pi \in \Pi_o} \mathrm{UnFairness}(\pi) > t\Big) \\ 
 & \ge \mathbb{E}_b\Big[1\{\mathrm{UnFairness}(\pi_n) - \inf_{\pi \in \Pi_o} \mathrm{UnFairness}(\pi) > t\}\Big] \text{ (with random b)} 
\\ &\ge \mathbb{E}_b\Big[1\Big\{\sum_{j=1}^{v-1} 2 m  c 1\{\pi_n(x_j) = 1 - f_b(x_j)\} > t\Big\}\Big]
\end{aligned} 
$$ 
where the last inequality uses Equation \eqref{eqn:hhg} and the monotonicity of the indicator function. We can now write 
$$
\begin{aligned} 
& \mathbb{E}_b\Big[1\Big\{\sum_{j=1}^{v-1} 2 m   c 1\{\pi_n(x_j) = 1 - f_b(x_j)\} > t\Big\}\Big] \\ 
& = \frac{1}{|\mathcal{B}|} \sum_{(x_1, \cdots, x_n', s_1, \cdots, s_n) 
\in (\{x_1, \cdots, x_v\} \times \{0,1\})^2} \\ &\sum_{b \in \mathcal{B}}  1\Big\{\sum_{j=1}^{v-1} 2 m  c 1\{\pi_n(x_j) = 1 - f_b(x_j)\} > t\Big\} \prod_{j=1}^n p_b(x_j', s_j)
\end{aligned} 
$$ 
with $p_b(x_j', s_j)$ denoting the joint probability of $x_j', s_j$. For a fixed $b$, define $b^c = (1 - b_1, \cdots, 1 - b_{v-1})$. Observe that if $b \in \mathcal{B}$, then $b^c \in \mathcal{B}$ since we assumed that $(v-1)/2$ is an integer. Now observe that if 
\begin{equation} \label{eqn:jhgf}
\frac{t}{2 m c} \le (v-1)/2,  
\end{equation}  
then 
$$
1\Big\{\sum_{j=1}^{v-1} 2 m  c 1\{\pi_n(x_j) = 1 - f_b(x_j)\} > t\Big\} + 1\Big\{\sum_{j=1}^{v-1} 2 m  c 1\{\pi_n(x_j) = 1 - f_{b^c}(x_j)\} > t\Big\} \ge 1
$$ 
since it must be that either (or both) indicators are equal to one. Therefore for $t \Big/ 2 m c \le (v-1)/2$, the last expression in the lower bound above is bounded from below by 
$$
\begin{aligned}
\frac{1}{|\mathcal{B}|} \sum_{(x_1, \cdots, x_n', s_1, \cdots, s_n) 
\in (\{x_1, \cdots, x_v\} \times \{0,1\})^2}  \sum_{b \in \mathcal{B}} \frac{1}{2} \min\Big\{\prod_{j=1}^n p_b(x_j', s_j), \prod_{j=1}^n p_{b^c}(x_j', s_j)\Big\}.  
\end{aligned} 
$$
By LeCam's inequality, we have that the above expression is bounded from below by (see Page 244 in \citealt{devroye2013probabilistic})
$$
\frac{1}{4 |\mathcal{B}|} \sum_{b \in \mathcal{B}} \Big(\sum_{(x,s)} \sqrt{p_b(x,s)p_{b^c}(x,s)}\Big)^{2n}. 
$$ 
Observe that we have for $x = x_v$, 
$$
p_b(x,1) = p_{b^c}(x,1) = p_1(1 - m(v-1)), \quad  p_b(x,1) = p_{b^c}(x,1) = (1 - p_1)(1 - m(v-1)). 
$$ 
 For $x = x_i, i< v$, we have 
$$
p_b(x,s)p_{b^c}(x,s) = m^2(p_1^2 - c^2), \quad s \in \{0,1\}. 
$$ 
Therefore, we obtain 
$$
\begin{aligned} 
\sum_{(x,s)} \sqrt{p_b(x,s)p_{b^c}(x,s)} &=  (1 - m(v-1)) + 2 (v-1)m \sqrt{(p_1^2 -c^2)} \\ &= (1 - (v-1)m) + 2 (v-1)m \sqrt{(p_1^2 -c^2)}.
\end{aligned} 
$$ 
Hence we can write 
$$
\begin{aligned} 
\frac{1}{4 |\mathcal{B}|}  \sum_{b \in \mathcal{B}} \sum_{(x,s)} \sqrt{p_b(x,s)p_{b^c}(x,s)} &= \frac{1}{4} \Big\{ (1 - (v-1)m) + 2 (v-1)m \sqrt{(p_1^2 -c^2)}\Big\}. 
\end{aligned} 
$$ 
Define $F = m (v-1)(p_1 - c)$. Then we can write  
$$
\begin{aligned} 
\frac{1}{4 |\mathcal{B}|} \sum_{b \in \mathcal{B}} \Big(\sum_{(x,s)} \sqrt{p_b(x,s)p_{b^c}(x,s)}\Big)^{2n} &= \frac{1}{4} \Big\{ (1 - (v-1)m) + 2 (v-1)m \sqrt{(p_1^2 -c^2)}\Big\} \\ 
& = \frac{1}{4} \Big\{ 1 - \frac{F}{p_1 - c}\Big(1 - \sqrt{4 p_1^2 - 4 c^2}\Big)\Big\}^{2n}. 
\end{aligned} 
$$ 
We now choose $p_1 = 1/2$. We can now follow \cite{devroye2013probabilistic}, end of Page 244 and write
$$
\begin{aligned} 
\frac{1}{4} \Big\{ 1 - \frac{F}{p_1 - c}\Big(1 - \sqrt{4 p_1^2 - 4 c^2}\Big)\Big\}^{2n} & = \frac{1}{4} \Big\{ 1 - \frac{F}{p_1 - c}\Big(1 - \sqrt{1 - 4 c^2}\Big)\Big\}^{2n} \\ & \ge
\frac{1}{4} \Big\{ 1 - \frac{F}{p_1 - c} 4c^2\Big\}^{2n}
\\ &\ge  \frac{1}{4} \exp\Big(- \frac{16 n F c^2}{1 - 2c} \Big/ \Big(1 - \frac{8Fc^2}{1 - 2c}\Big)\Big), 
\end{aligned} 
$$ 
where we used $1  - x \ge e^{-x/(1 - x)}$. 

We now choose $c = \frac{t}{(v-1)m}$, which satisfies Equation \eqref{eqn:jhgf}, and where we need the condition that $0 < t \le \frac{(v-1)m}{2}$ which we check later in the proof.  We write 
$$
\frac{16 n F c^2}{1 - 2c} \Big/ \Big(1 - \frac{8Fc^2}{1 - 2c} \Big) =  
\frac{16 n F c^2}{1 - 2c - 8F c^2}. 
$$ 
Fix a constant $h \in (0,1)$ whose conditions will be discussed below together with the conditions for $t$.
Take $t, h$ such that $1 - 2c - 8F c^2 \ge h \in (0,1)$. 
Then it follows that (since $c = \frac{t}{(v-1)m}$)
$$
\frac{16 n F c^2}{1 - 2c - 8F c^2} \le \frac{16 n t^2 F}{(v-1)^2 m^2 h}. 
$$ 
Hence, the lower bound reads as follows: 
$$
\sup_{(X,S) \in \mathcal{F}} P\Big(\mathrm{UnFairness}(\pi_n) - \inf_{\pi \in \Pi_o} \mathrm{UnFairness}(\pi) > t\Big) \ge \frac{1}{4} \exp \Big(-\frac{16 n t^2 F}{(v-1)^2 m^2 h}\Big).
$$ 
Let $ \frac{1}{4} \exp \Big(-\frac{16 n t^2 F}{(v-1)^2 m^2 h}\Big) = \kappa$. By re-arranging the expression, we write with probability at least $\kappa$, for some distribution in $\mathcal{F}$, for all $\pi_n$, 
\begin{equation} \label{eqn:lower}  
\mathrm{UnFairness}(\pi_n) - \inf_{\pi \in \Pi_o} \mathrm{UnFairness}(\pi) \ge \sqrt{\frac{F (v-1)^2 m^2 \log(\frac{1}{4 \kappa})}{16 n h}}
\end{equation}  
where we chose $t = \sqrt{\frac{F (v-1)^2 m^2 \log(\frac{1}{4 \kappa})}{16 n h}}$.

Next, we check the condition for $t, h$, and characterize the constants $m,h, F$. Recall that the conditions are the following: 
$$
\begin{aligned} 
&0 <  t \le \frac{(v - 1) m}{2}, \quad 1 - 2 c - 8 F c^2 \ge h, \quad c = \frac{t}{(v - 1)m}, \quad F = m(v-1)(\frac{1}{2} - c), \quad  0 < m \le \frac{1}{v - 1}, \\
 t & = \sqrt{\frac{F (v-1)^2 m^2 \log(\frac{1}{4 \kappa})}{16 n h}}, \quad h \in (0,1),  
\end{aligned} 
$$ 
where the first condition on $t$ follows from  Equation \eqref{eqn:jhgf}. 
Take first $h = F/8$. Then the first condition on $t$ implies that $n \ge \log(1/4 \kappa)$. The second condition on $h$ (with $h = F/8$) is satisfied if the first inequality holds 
$$
1 - F/8 \ge c(2 + 4F) \ge  c(2 + 8Fc) 
$$ 
since $c \in (0,1/2)$. Now, observe that $F \le 1/2$, hence it suffices to show that $$
c \le \frac{1 - 1/16}{4} \Rightarrow \sqrt{\frac{  \log(\frac{1}{4 \kappa})}{2 n}} \le \frac{15}{64} \Rightarrow n \ge \bar{C} \log(1/4 \kappa),  
$$ 
for a finite constant $\bar{C}$. The proof completes since the remaining conditions can be satisfied for an arbitrary choice of $0 < m < 1/(v-1)$.

We are left to show that the claim holds if $(v-1)/2$ is not an integer. For this case we follow the same steps of the proof where we construct a set of distributions $\mathcal{F}$ which puts mass $m$ on $v - 2$ $x_i, i < v-1$ and mass $\frac{1 - (v-2)m}{2}$ on the remaining $x_{v-1}, x_v$. We construct a bit vector $b \in \mathcal{B} \subset \{0,1\}^{v-2}$ with $\sum_{i=1}^{v-2} b_i = \frac{v-2}{2}$ which must be equal to an integer since $\frac{v-1}{2}$ is not. We construct (since $v \ge 3$)  
$$
S = \begin{cases}
&1 \text{ if } U \le p_1 - c + 2cb_i, X= x_i, i < v - 1 \\ 
&1 \text{ if } U \le p_1, X= x_i, i \in \{v-1, v\} \\ 
&0 \text{ otherwise} 
\end{cases},
$$ 
while the remaining part of the proof follows similarly to above.

\subsection{Regret bounds for $|D(\pi)|$, and $|C(\pi)|$} \label{sec:absolute}

To obtain UnFairness bounds for unfairness being defined as either $D(\pi)$ or $C(\pi)$ \textit{in absolute value} it suffices to bound the following empirical processes 
$$
\sup_{\pi \in \Pi}\Big| |\hat{C}(\pi)| - |C(\pi)|\Big|, \quad 
\sup_{\pi \in \Pi}\Big| |\hat{D}(\pi)| - |D(\pi)|\Big|.  
$$ 
We bound the first on the left-hand side while the second follows similarly. We write by the reverse triangular inequality
$$
\sup_{\pi \in \Pi}\Big| |\hat{C}(\pi)| - |C(\pi)|\Big| \le 
\sup_{\pi \in \Pi}\Big| \hat{C}(\pi) - C(\pi)\Big|.
$$ 
The rest of the proof follows similarly to Theorem \ref{thm:parity}.

\subsection{Proofs in Section \ref{sec:complexity} } \label{sec:proof_complexity}

\subsubsection{Proof of Proposition \ref{prop:polynomial}}  For simplicity we assume that $\beta_0 = \beta_1 = \beta \in [0,1]^p$, while our reasoning directly extend to different $\beta_0, \beta_1$. To analyze the computational complexity of the algorithm, we first, need to compute the computational complexity of each operation needed to estimated $\bar{W}_{j,n}$. Note that each optimization problem to estimate $\bar{W}_{j,n}$ is a linear program with $p$ variables and constraint $\beta^{(j)} \in [0,1], 1 \le j \le p$. Therefore, using standard arguments \citep[][Theorems 8.2, 8.5]{papadimitriou1998combinatorial}, each program admits an exact solution in $\mathcal{O}(p^{\omega})$ running time, for a finite constant $\omega$. There are $\sqrt{n}$ many of such programs, with overall running time $\mathcal{O}(\sqrt{n} p^{\omega})$. Consider now the optimization program in Equation \eqref{eqn:opt1}. Suppose first that $g(x) = x$. Then we can write the program as follow: for each constraint (i.e., each $\alpha_j$ with corresponding $\bar{W}_{j,n}$) in (B), we construct one program where (B) must hold for a single $\alpha_j$ only, and where we drop (C), (E) and replace (D) with $\beta^{(j)} \in [0,1], 1 \le j \le p$. We have in total $\sqrt{n}$ many of such programs. These programs are linear programs with $p + 1$ constraints and $p$ variables. Similarly to what discussed above, each of this program can be solved with running time $\mathcal{O}(p^{\omega})$ for some finite $\omega$. Once we solve each of this program, the solution to Equation \eqref{eqn:opt1} is obtained by finding the smallest objective among the $\sqrt{n}$ many programs. The running time for finding the minimum from $\sqrt{n}$ many elements is $\mathcal{O}(\sqrt{n})$. Therefore, the overall complexity of the optimization is $\mathcal{O}(\sqrt{n} p^{\omega})$. Consider now the case where $g(x) = |x|$. In such a case, for each sub-problem which substitute (B) in Equation \eqref{eqn:opt1} with a single constraint for a given $\alpha_j$, we can write two sub-problems. The first, is the optimization under the constraint that $x \ge 0$ and the second is the optimization under the constraint that $x \le 0$, with objective function multiplied by $-1$, where $x$ denotes the argument of the function $g(\cdot)$ (i.e., $\sum_{i=1}^n \hat{F}_i \pi(X_i)$). Again, each subproblem is a linear program with computational complexity $\mathcal{O}(p^{\omega})$ which completes the proof.

\subsection{Proof of Proposition \ref{prop:early_termination}}

Note first that  $\bar{W}_{j,n}^{\delta} \le \bar{W}_{j,n}$ for all $j, \delta$. Therefore, we obtain that the constraint in (B) imposed when estimating $\hat{\pi}_{\lambda}^{\delta}$ is less restrictive than the constraint when solving Equation \eqref{eqn:opt1}. It follows that 
$$
\mathcal{V}_n(\hat{\pi}_{\lambda}^{\delta}) - 
\mathcal{V}_n(\hat{\pi}_{\lambda}) \le \delta.  
$$ 
We can then write 
$$
\small 
\begin{aligned} 
\mathrm{UnFairness}(\hat{\pi}_{\lambda}^{\delta}) - \inf_{\pi \in \Pi_{\textrm{\mbox{\tiny o}}}} \mathrm{UnFairness}(\pi) & =  
\underbrace{\mathrm{UnFairness}(\hat{\pi}_{\lambda}^{\delta}) - \mathcal{V}_n(\hat{\pi}_{\lambda}^{\delta})}_{(A)} + \underbrace{\mathcal{V}_n(\hat{\pi}_{\lambda}^{\delta}) - \mathcal{V}_n(\hat{\pi}_{\lambda})}_{(B)}  \\ & \quad + \underbrace{\mathcal{V}_n(\hat{\pi}_{\lambda})  - \mathrm{UnFairness}(\hat{\pi}_{\lambda})}_{(C)} \\ & \quad + \underbrace{\mathrm{UnFairness}(\hat{\pi}_{\lambda}) - \inf_{\pi \in \Pi_{\textrm{\mbox{\tiny o}}}} \mathrm{UnFairness}(\pi)}_{(D)}.
\end{aligned} 
$$ 
We can now bound 
$$
\begin{aligned} 
(A) + (C) & \le 2 \sup_{\pi \in \Pi} \Big|\mathrm{UnFairness}(\pi) - \mathcal{V}_n(\pi)\Big|, \quad (B) \le \delta. 
\end{aligned} 
$$ 
The rest of the proof follows similarly to the one of Theorem \ref{thm:between_groups}.

\section{Extensions and Mathematical Details} \label{sec:a2}

\subsection{Comparison under strong duality} \label{sec:a22} 
We now sketch the differences in the optimization problem with the one in Equation \eqref{eqn:pi_c} assuming strong duality for expositional convenience, and providing an intuition on the result in Corollary \ref{cor:prop} for $\tilde{\pi}$. Assuming strong-duality, the optimization problem of maximizing welfare under fairness constraint in Equation \eqref{eqn:pi_c} can be equivalently re-written as:  
\begin{equation} \label{eqn:dual} 
\tilde{\pi} \in \arg \min_{\pi \in \Pi} \mathrm{UnFairness}(\pi), \quad \text{ such that } p_1 W_1(\pi) + (1 - p_1) W_0(\pi) \ge \lambda(\kappa)
 \end{equation} 
 for some constant $ \lambda(\kappa) \le \bar{W}_{p_1}$ which depends on $\kappa$. We now constrast Equation \eqref{eqn:dual} to our proposed approach (Equation \eqref{eqn:main_eq}). Suppose first that $ \lambda(\kappa) = \bar{W}_{p_1}$, i.e., $\widetilde{\pi}$ \textit{ is } Pareto optimal. Then the constraint in Equation \eqref{eqn:dual} is \textit{stricter} than the constraint in Equation \eqref{eqn:main_eq}, since the latter case imposes that $\alpha W_1(\pi) + (1 - \alpha) W_0(\pi) \ge \bar{W}_\alpha$, for \textit{some} $\alpha$, instead of for a particular chosen weight (e.g., $p_1$). As a result, $\pi^\star$ leads to a lower level of UnFairness whenever $\widetilde{\pi}$ \textit{is} Pareto optimal, since $\pi^\star$ minimizes UnFairness under weaker constraints compared to $\widetilde{\pi}$. When instead $\widetilde{\pi}$ is \textit{not} Pareto optimal, i.e., $ \lambda(\kappa) < \bar{W}_{p_1}$, $\widetilde{\pi}$ is Pareto dominated by some other allocation $\widetilde{\widetilde{\pi}}$. However $\widetilde{\widetilde{\pi}}$ leads to a larger UnFairness than $\pi^\star$, while not Pareto dominating $\pi^\star$.  

The key intuition is the following: under strong duality, the dual of $\widetilde{\pi}$ corresponds to minimize UnFairness for \textit{one particular} weighted combination of welfare exceeding a certain threshold. In contrast, our decision problem imposes the constraint that \textit{some} weighted combination of welfares exceeding a certain threshold. This difference reflects the difference between the lexicographic preferences that we propose as opposed to an additive social planner's utility. It guarantees that whenever $\widetilde{\pi}$ is Pareto optimal, its fairness is dominated from the one under $\pi^\star$.

\subsection{Cross-fitting with UnFairness}  \label{sec:a23} 

In this section we discuss cross-fitting with fairness. Two alternative cross-fitting procedures are available to the researcher. The first one, consists in dividing the sample into $K$ folds and estimating the conditional mean $\hat{m}_{d,s}^{(-k(i))}(X_i)$ using observations for which $S = s$ only, after excluding the fold $k$ corresponding to unit $i$ (panel on the right in Figure \ref{fig:cross_fitting}). Formally, 
 let $ i \in \mathcal{I}_k \cap \mathcal{S}_1$ where $\mathcal I_k$ is the $k$-th fold of the data and $\mathcal{S}_1 = \left\{ i: S_i =s_1\right\}$.
Let $\hat m^{(-k(i))}_{d,s_2}$ be an estimator obtained using samples not in the fold $k$, $ \mathcal{I}_k^c \cap \mathcal{S}_1^c$ for which $\mathcal{S}_1^c = \left\{ i: S_i =s_2\right\}$;  for example by a random forest or linear regression of $Y_j$ onto $X_j$ for  $S_j =s_2$, and $j \notin \mathcal{I}_k$.  Such an approach does not impose parametric restrictions on the dependence of $m_{d,s}$ on the attribute $s$, at the expense of shrinking the effective sample size used for estimation. The second approach consists in further imposing additional parametric restrictions on the depends of $m_{d,s}$ on $s$ and using all observations in all folds except $k$ for estimating $\hat{m}_{d,s}^{(-k(i))}(X_i)$ (panel on the right in Figure \ref{fig:cross_fitting}).

 \definecolor{glaucous}{rgb}{0.38,0.51,0.71}
 
 \begin{figure}
 \centering
 \caption{Graphical representation of cross-fitting under two alternative model formulation. The light gray area is the training set, used to construct an estimator of $\hat m_{d, s = 1}$, whereas the darker gray area is an evaluation set,    area in which a prediction of  $\hat m_{d, s = 1}$ is computed.
 }  \label{fig:cross_fitting}
  \begin{tikzpicture}[scale = 0.8]
  \draw[semithick,gray] (0,0) rectangle (4,3);
  \draw[red,thick,dashed] (2,-0.25) -- (2,3.25) ;
  \node at (3,3.25) {$S = 1$};
    \node at (1,3.25) {$S = 0$};
      \draw[blue,thick,dashed] (-0.25,1) -- (3.25,1) ;
         \draw[blue,thick,dashed] (-0.25,2) -- (3.25,2) ;
        \node at (-0.25,1.5) {$\mathcal{I}_k$};
         \filldraw[fill=gray!25] (2,2) rectangle (4,3);       
          \filldraw[fill=gray!25] (2,0) rectangle (4,1);  
                \filldraw[fill=gray!75] (0,1) rectangle (4,2);   
            \draw[red,thick,dashed] (2,-0.25) -- (2,3.25) ;   
                  \draw[blue,thick,dashed] (-0.25,1) -- (4.25,1) ;
         \draw[blue,thick,dashed] (-0.25,2) -- (4.25,2) ;
                           
  \draw[semithick,gray] (6,0) rectangle (10,3);
  \draw[red,thick,dashed] (8,-0.25) -- (8,3.25) ;
  \node at (9,3.25) {$S = 1$};
    \node at (7,3.25) {$S = 0$};
      \draw[blue,thick,dashed] (5.75,1) -- (9.25,1) ;
         \draw[blue,thick,dashed] (5.75,2) -- (9.25,2) ;
        \node at (5.75,1.5) {$\mathcal{I}_k$};
         \filldraw[fill=gray!25] (8,2) rectangle (10,3);       
          \filldraw[fill=gray!25] (8,0) rectangle (10,1); 
           \filldraw[fill=gray!25] (6,2) rectangle (10,3);       
          \filldraw[fill=gray!25] (6,0) rectangle (10,1); 
                \filldraw[fill=gray!75] (6,1) rectangle (10,2);   
            \draw[red,thick,dashed] (8,-0.25) -- (8,3.25) ;   
                  \draw[blue,thick,dashed] (5.75,1) -- (10.25,1) ;
         \draw[blue,thick,dashed] (5.75,2) -- (10.25,2) ;
                            \end{tikzpicture}
  \end{figure}

\subsection{Linear or quadratic constraints for the policy function space representation} \label{sec:a24}

In this section we discuss mixed integer formualtions of probabilistic and deterministic decisions rules. 
Consider first a deterministic decision rule of the form 
$$
\Pi = \Big\{\pi_\beta(X, S) = 1\{X^\top \beta  + S \beta_0 > 0\}, \quad \beta \in \mathcal{B} \Big\}. 
$$  
Then we can write the constraint (A) in Equation \eqref{eqn:opt1} as \citep{KitagawaTetenov_EMCA2018}
$$
\frac{X_i^\top \beta + s \beta_0 }{|C_i|} < z_{s,i} \le \frac{X_i^\top \beta + s \beta_0}{|C_i|} + 1, \quad C_i > \text{sup}_{\beta \in \mathcal{B}} |X_i^\top \beta| + |\beta_0|, \quad  z_{s,i} \in \{0,1\}. 
$$

Consider now the following probabilistic decision 
\begin{equation} \label{eqn:probabilistic} 
\Pi = \Big\{\pi_\beta(X, S) = p_1 1\{X_i^\top \beta + S \beta_0  > 0\} + p_0 1\{X^\top \beta + S \beta_0 \le 0\}, p_1, p_0 \in [0,1], \beta \in \mathcal{B} \Big\}. 
\end{equation} 
Then we can represent each decision variable as follows 
$$
\begin{aligned}
& z_{s,i} = p_1 \xi_{s,i} + p_0 (1 - \xi_{s,i}) \\  
& \frac{X_i^\top \beta + s \beta_0 }{|C_i|} < \xi_{s,i} \le \frac{X_i^\top \beta + s \beta_0}{|C_i|} + 1, \quad C_i > \text{sup}_{\beta \in \mathcal{B}} |X_i^\top \beta| + |\beta_0|, \quad  \xi_{s,i} \in \{0,1\}. 
\end{aligned} 
$$
where we introduced the additional variables $\xi_{s,i}$. We use this probablistic rule in the empirical application.

One last type of function class of interest is a linear probability rule of the following form
$$
z_{s,i} = X_i^\top \beta + \beta_0 S_i, \quad z_{s,i} \in [0,1]
$$ 
which leads to fast computations due lack of integer variables in the program.

\subsection{Extension: Additional Notions of UnFairness} \label{sec:a25}

\subsubsection{Predictive Parity} 

Predictive parity has been discussed in \cite{kasy2020} among others. Here we consider its definition within the context of policy-targeting. Its notion requires additional assumption for its implementation, assuming \textit{deterministic} treatment assignments $\pi(X_i) \in \{0,1\}$ (i.e., $\mathcal{T} = \{0,1\}$). The notion reads as follows: 
$$
P_s(\pi) = \Big|\mathbb{E}\Big[Y(1) \Big| \pi(X) = 1, S = s\Big] -
\mathbb{E}\Big[Y(1) \Big| \pi(X) = 1\Big]\Big| . 
$$  
Larger values of $P_s(\pi)$ increase UnFairness. 
Using the definition of the conditional expectation, and using consistency of potential outcomes, the following lemma holds. 
\begin{lem} \label{lem:parity} Let $\mathcal{T} = \{0,1\}$. Then following holds. 
$$
P_s(\pi) = (1 - p_s) \Big| \frac{\mathbb{E}\Big[Y(1) 1\{S_i = s\}\pi(X)\Big]}{p_s \mathbb{P}(\pi(X) = 1 | S = s)} - \frac{\mathbb{E}\Big[Y(1) \pi(X) 1\{S = s'\}\Big]}{(1 - p_s) \mathbb{P}(\pi(X) = 1| S = s')} \Big| . 
$$  
\end{lem}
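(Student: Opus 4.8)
The plan is to reduce every quantity in the statement to a ratio of \emph{unconditional} expectations via the definition of conditional expectation, and then to expand the pooled conditional mean $\mathbb{E}[Y(1)\mid \pi(X)=1]$ by conditioning on the sensitive attribute. Since $\mathcal{T}=\{0,1\}$ the policy is deterministic, so $\{\pi(X)=1\}$ is a genuine event; assuming $\mathbb{P}(\pi(X)=1\mid S=s)>0$ for both values of $s$, all conditional expectations below are well defined.

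First I would rewrite the group-specific term. Using $\mathbb{P}(\pi(X)=1,S=s)=p_s\,\mathbb{P}(\pi(X)=1\mid S=s)$,
\[
\mathbb{E}\big[Y(1)\mid \pi(X)=1,S=s\big]=\frac{\mathbb{E}\big[Y(1)\,1\{S=s\}\,\pi(X)\big]}{p_s\,\mathbb{P}(\pi(X)=1\mid S=s)},
\]
which is exactly the first fraction in the statement; the identical identity for $s'$ produces the second fraction, since its denominator satisfies $(1-p_s)\,\mathbb{P}(\pi(X)=1\mid S=s')=\mathbb{P}(\pi(X)=1,S=s')$.

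Next I would expand the pooled term by the law of total expectation over $S\in\{s,s'\}$ conditional on $\{\pi(X)=1\}$, writing $\mathbb{E}[Y(1)\mid \pi(X)=1]$ as a convex combination of the two group-specific conditional means with weights $\mathbb{P}(S=s\mid \pi(X)=1)$ and $\mathbb{P}(S=s'\mid \pi(X)=1)$. Subtracting this from $\mathbb{E}[Y(1)\mid \pi(X)=1,S=s]$ and using that the two weights sum to one, the group-$s$ mean cancels and the difference collapses to $\mathbb{P}(S=s'\mid \pi(X)=1)$ times the difference of the two group-specific conditional means. Taking absolute values and substituting the ratio expressions of the previous step then produces the bracketed term in the stated form, with the $1\{S=s\}$ and $1\{S=s'\}$ numerators appearing exactly as written. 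Consistency of potential outcomes enters only when one later replaces the $Y(1)$ expressions by observable counterparts for estimation, and plays no role in the algebraic identity itself.

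The hard part — really the only delicate point — is bookkeeping the mixing weight that multiplies the bracketed difference. The tower-property expansion yields the \emph{conditional} mass $\mathbb{P}(S=s'\mid \pi(X)=1)=\mathbb{P}(\pi(X)=1,S=s')/\mathbb{P}(\pi(X)=1)$, whereas the prefactor displayed in the statement is the \emph{marginal} $(1-p_s)=\mathbb{P}(S=s')$. These coincide exactly when the treatment probability is balanced across groups, i.e. when $\mathbb{P}(\pi(X)=1\mid S=s')=\mathbb{P}(\pi(X)=1)$, but not in general. Accordingly, I would either carry the conditional weight $\mathbb{P}(S=s'\mid \pi(X)=1)$ explicitly in the prefactor, or state that the leading $(1-p_s)$ is to be read as this conditional probability; this is the step where care is required to match the claimed expression.
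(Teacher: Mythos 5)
Your approach is the same as the paper's: rewrite each group-specific conditional mean as a ratio of unconditional expectations, then expand the pooled mean $\mathbb{E}[Y(1)\mid \pi(X)=1]$ over the two groups so that the group-$s$ term cancels and only a weighted difference of the two group means survives. The paper's proof does exactly this in two displayed lines.

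The ``delicate point'' you flag, however, is not merely a matter of careful bookkeeping on your side --- it is a genuine error in the paper's own proof, and you have identified it precisely. The paper asserts
\[
\mathbb{E}\big[Y \mid \pi(X)=1\big]=p_s\,\mathbb{E}\big[Y \mid \pi(X)=1, S=s\big]+(1-p_s)\,\mathbb{E}\big[Y \mid \pi(X)=1, S=s'\big],
\]
i.e.\ the law of total expectation with \emph{marginal} weights $p_s$ and $1-p_s$. As you note, the correct weights are the \emph{conditional} probabilities $\mathbb{P}(S=s\mid \pi(X)=1)$ and $\mathbb{P}(S=s'\mid \pi(X)=1)$, so the paper's identity (and hence the prefactor $(1-p_s)$ in the lemma as displayed) holds only when $\mathbb{P}(\pi(X)=1\mid S)=\mathbb{P}(\pi(X)=1)$ almost surely, that is, when the policy treats both groups with equal probability --- which is exactly the degenerate case in which the unfairness measure is of least interest. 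Your derivation, which carries the conditional weight $\mathbb{P}(S=s'\mid \pi(X)=1)=\mathbb{P}(\pi(X)=1,S=s')/\mathbb{P}(\pi(X)=1)$ through to the prefactor, gives the identity that is actually true in general:
\[
P_s(\pi)=\mathbb{P}\big(S=s'\mid \pi(X)=1\big)\,
\Big|\,\mathbb{E}\big[Y(1)\mid \pi(X)=1,S=s\big]-\mathbb{E}\big[Y(1)\mid \pi(X)=1,S=s'\big]\Big|.
\]
So your proof is the corrected version of the paper's argument: same route, but with the tower-property step done right. The fix to the paper would be either to replace $(1-p_s)$ by the conditional mass in the statement of the lemma, or to add the balance condition explicitly as a hypothesis; downstream, the estimator in the predictive-parity section would need the corresponding reweighting.
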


\begin{proof}[Proof of Lemma \ref{lem:parity}] 
Using the definition of conditional expectation: 
\begin{equation} \label{eqn:kk} 
\begin{aligned} 
&\mathbb{E}\Big[Y \Big| \pi(X) = 1, S = s\Big] = \mathbb{E}\Big[ \frac{Y(1) 1\{S = s\} \pi(X)}{p_s P(\pi(X) = 1| S = s)}\Big]. 
\end{aligned} 
\end{equation}  
We also write 
$$
\begin{aligned} 
&\mathbb{E}\Big[Y \Big| \pi(X) = 1\Big] = p_s \mathbb{E}\Big[Y \Big| \pi(X) = 1, S = s\Big] + (1 - p_s)\mathbb{E}\Big[Y \Big| \pi(X) = 1, S = s'\Big]. 
\end{aligned} 
$$ 
Combining the expression with Equation \eqref{eqn:kk} completes the proof. 
\end{proof} 

Given two sensitive groups $\mathcal{S} = \{0,1\}$, the corresponding notion of UnFairness we consider takes the following form: 
\begin{equation} \label{eqn:unf1}
P(\pi) \propto \frac{P_1(\pi)}{1 - p_1} = \frac{P_{0}(\pi)}{p_1}. 
\end{equation}
We consider a double-robust estimator which takes the following form: 
\begin{equation} \label{eqn:v_hat1}
\small 
\begin{aligned} 
&\widehat{\mathcal{V}}_n(\pi) = \Big| \frac{\sum_{i=1}^n \pi(X_i) S_i \Big\{ \frac{ (Y_i - \hat{m}_1(X_i, S_i))D_i }{\hat{e}(X_i, S_i)} +  \hat{m}_1(X_i, S_i) \Big\} }{n p_1 \mathbb{P}(\pi(X_i) = 1|S_i = 1) } -  \frac{\sum_{i=1}^n (1 - S_i)  \pi(X_i) \Big\{ \frac{(Y_i - \hat{m}_1(X_i, S_i)) D_i }{\hat{e}(X_i, S_i)} + \hat{m}_1(X_i, S_i) \Big\} }{n (1 - p_1)\mathbb{P}(\pi(X_i) = 1|S_i = 0) } \Big|. 
\end{aligned} 
\end{equation}   
Observe that the estimator depends on the estimated conditional mean function and propensity score, whereas $p_s$ and $\mathbb{P}(\pi(X) = 1 | S = s)$ are assumed to be known. These two components can be obtained, for instance from census data, since $p_s$ and $\mathbb{P}(\pi(X) = 1 | S = s)$ only depend on the distribution of covariates and sensitive attributes. Whenever $\mathbb{P}(\pi(X_i) = 1| S_i = s)$ is replaced by its sampled analog $\mathbb{P}_n(\pi(X_i) = 1| S_i = s) = \frac{1}{n p_s} \sum_{i=1}^n \pi(X_i) 1\{S_i = s\}$, we require that $\mathbb{P}_n(\pi(X_i) = 1| S_i = s)$ is bounded away from zero almost surely.

\begin{thm}[Predictive parity] \label{thm:between_groups2} Let Assumptions \ref{ass:unconf}, \ref{ass:moment},\ref{ass:overlap}, \ref{ass:dr} hold. Let either $\mathrm{UnFairness}(\pi)$ be defined using the notion of Predictive (dis)-parity. Assume that $P(\pi(X,S) = 1 | S = 1), P(\pi(X,S) = 1 | S = 0) \in (\kappa, 1 - \kappa)$ for all $\pi \in \Pi$, $\kappa \in (0,1)$. Then for some constant $c_0 < \infty$, for any $\gamma \in (0,1), \lambda \ge \underline{b} \sqrt{\frac{v \log(2/\gamma)}{n}}$, for a constant $\underline{b} > 0$, independent of the sample size
with probability at least $1 - 2 \gamma$, 
$$
 \mathrm{UnFairness}(\hat{\pi}) - \inf_{\pi \in \Pi_{\textrm{\mbox{\tiny o}}}} \mathrm{UnFairness}(\pi) \le  c_0 \sqrt{\frac{  \log(2/\gamma)}{n}} + c_0 \sqrt{\frac{v}{n}},  
$$ 
for a finite constant $c_0 < \infty$. 
\end{thm}

 \begin{rem}[Mixed-integer linear representation of Predictive Parity] \label{thm:kj} Let $\mathbb{P}(\pi(X_i) | S_i = 1) = \frac{1}{p_s N} \sum_{i=1}^N \pi(X_i) S_i$ where $N$ denotes the number of individuals whose census-information (i.e., baseline covariates and sensitive attributes) are observed.
The optimization problem can be formulated as a mixed-integer \textit{fractional} linear program for $\pi(X)$ satisfying a linear representation. This follows after the linearization of the constraint (B), which can be achieved by introducing $2 N \times n$ many additional binary variables. Since fractional linear programs admit a mixed-integer linear program representations \citep{charnes1962programming}, the optimization problem can be solved as a mixed integer linear program. 
 \end{rem}

\begin{proof}[Proof of Theorem \ref{thm:between_groups2}]

We write 
$$
\begin{aligned} 
&\mathbb{E}\Big[\sup_{\pi \in \Pi} \Big|\frac{\widehat{P}_s(\pi)}{1 - p_s}  -  \frac{P_s(\pi)}{1 - p_s}\Big|\Big]\\ & \le \underbrace{\mathbb{E}\Big[\sup_{\pi \in \Pi} \Big| \frac{\sum_{i=1}^n \pi(X_i) S_i \Big\{ \frac{ (Y_i - \hat{m}_1(X_i, S_i))D_i }{\hat{e}(X_i, S_i)} +  \hat{m}_1(X_i, S_i) \Big\} }{n p_1 \mathbb{P}(\pi(X) = 1| S =1) } - \mathbb{E}\Big[Y | \pi(X) = 1, S = 1\Big] \Big|\Big]}_{(A)} \\
&+ \underbrace{\mathbb{E}\Big[\sup_{\pi \in \Pi} \Big| \frac{\sum_{i=1}^n \pi(X_i) (1 - S_i) \Big\{ \frac{ (Y_i - \hat{m}_1(X_i, S_i))D_i }{\hat{e}(X_i, S_i)} +  \hat{m}_1(X_i, S_i) \Big\} }{n \mathbb{P}(\pi(X) = 1| S = 0) (1 - p_1) } - \mathbb{E}\Big[Y | \pi(X) = 1, S = 0\Big] \Big|\Big]}_{(B)}. 
\end{aligned} 
$$  

We study $(A)$ while $(B)$ follows similarly. First, we write 
$$
\begin{aligned} 
(A) &\le  \underbrace{\mathbb{E}\Big[\sup_{\pi \in \Pi} \Big| \frac{\sum_{i=1}^n \pi(X_i) S_i \Big\{ \frac{ (Y_i - \hat{m}_1(X_i, S_i))D_i }{\hat{e}(X_i, S_i)} +  \hat{m}_1(X_i, S_i) - \frac{ (Y_i - m_1(X_i, S_i))D_i }{e(X_i, S_i)} -  m_1(X_i, S_i) \Big\} }{n p_1 \mathbb{P}(\pi(X) = 1| S =1) } \Big| \Big] }_{(I)} \\ 
& + \underbrace{\mathbb{E}\Big[\sup_{\pi \in \Pi} \Big| \frac{\sum_{i=1}^n \pi(X_i) S_i \Big\{ \frac{ (Y_i - m_1(X_i, S_i))D_i }{e(X_i, S_i)} +  m_1(X_i, S_i) \Big\} }{n p_1 \mathbb{P}(\pi(X) = 1| S =1) } - \mathbb{E}\Big[Y | \pi(X) = 1, S = 1\Big] \Big| \Big] }_{(II)}.
\end{aligned} 
$$ 
We study $(I)$ first. Define 
$$
\begin{aligned} 
V_n(\pi) &= \frac{1}{n p_1} \sum_{i=1}^n \pi(X_i) S_i \Big\{ \frac{ (Y_i - \hat{m}_1(X_i, S_i))D_i }{\hat{e}(X_i, S_i)} +  \hat{m}_1(X_i, S_i) - \frac{ (Y_i - m_1(X_i, S_i))D_i }{e(X_i, S_i)} -  m_1(X_i, S_i) \Big\}.
\end{aligned} 
$$ 

We have 
$$
(I) \le \frac{1}{\kappa} \mathbb{E}\Big[  \underbrace{\sup_{\pi \in \Pi} |V_n(\pi)|}_{(a)} \Big].
$$ 
We write
\begin{equation}
\begin{aligned} 
(a)
&\le \frac{1}{\delta} \underbrace{\sup_{\pi \in \Pi} \Big|\frac{1}{n} \sum_{i=1}^n S_i D_i (Y_i - m_{1,S_i}(X_i)) \Big(\frac{1}{ e(X_i, S_i)} - \frac{1}{ \hat{e}(X_i, S_i)}\Big) \pi(X_i, S_i)\Big|}_{(j)} \\ &+ \frac{1}{\delta} \underbrace{\sup_{\pi \in \Pi}\Big| \frac{1}{n} \sum_{i=1}^n 
\Big(\frac{D_i}{\hat{e}(X_i, S_i)}  - 1\Big) (m_{1,S_i}(X_i) - \hat{m}_{1,S_i}(X_i)) \pi(X_i, S_i) S_i \Big|}_{(jj)}. 
\end{aligned}  
\end{equation}

We study $(j)$ and $(jj)$ separately. We start from $(j)$. 
Recall, that by cross fitting $\hat{e}(X_i, S_i) = \hat{e}^{-k(i)}(X_i, S_i)$, where $k(i)$ is the fold containing unit $i$. Therefore, observe that given the $K$ folds for cross-fitting, we have  
\begin{equation} \label{eqn:summand} 
\begin{aligned} 
 &\Big|\frac{1}{n} \sum_{i=1}^n S_i D_i (Y_i - m_{1,S_i}(X_i)) \Big(\frac{1}{e(X_i, S_i)} - \frac{1}{ \hat{e}(X_i, S_i)}\Big) \pi(X_i, S_i)\Big| \\
 &\le  \sum_{k \in \{1, ..., K\}} \Big|\frac{1}{n} \sum_{i \in \mathcal{I}_k} S_i D_i (Y_i - m_{1,S_i}(X_i)) \Big(\frac{1}{ e(X_i, S_i)} - \frac{1}{ \hat{e}^{(-k(i))}(X_i, S_i)}\Big) \pi(X_i, S_i)\Big|. 
 \end{aligned} 
\end{equation} 
In addition, we have that 
\begin{equation}
\begin{aligned} 
&\mathbb{E}\Big[\sum_{i \in \mathcal{I}_k} S_i D_i (Y_i - m_{1,S_i}(X_i)) \Big(\frac{1}{e(X_i, S_i)} - \frac{1}{\hat{e}^{(-k(i))}(X_i, S_i)}\Big) \pi(X_i, S_i) \Big] \\ 
&= \mathbb{E}\Big[\mathbb{E}\Big[\sum_{i \in \mathcal{I}_k} S_i D_i (Y_i - m_{1,S_i}(X_i)) \Big(\frac{1}{ e(X_i, S_i)} - \frac{1}{ \hat{e}^{(-k(i))}(X_i, S_i)}\Big) \pi(X_i, S_i) \Big|  \hat{e}^{(-k(i))} \Big] \Big] = 0,  
\end{aligned} 
\end{equation} 
by cross-fitting. 
By Assumption \ref{ass:dr}, we know that 
\begin{equation}
\sup_{x \in \mathcal{X} ,s \in \mathcal{S}} \Big|\frac{1}{e(x, s)} - \frac{1}{ \hat{e}^{(-k(i))}(x, s)}\Big| \le 2/\delta^2
\end{equation} 
and therefore each summand in Equation \eqref{eqn:summand} is bounded by a finite constant $2/\delta^2$. 
We now obtain, using the symmetrization argument \citep{van1996weak}, and the Dudley's entropy integral \citep{wainwright2019high}
\begin{equation}
\mathbb{E}\Big[\sup_{\pi \in \Pi} |\frac{1}{n} \sum_{i \in \mathcal{I}_k} S_i D_i (Y_i - m_{1,S_i}(X_i)) \Big(\frac{1}{ e(X_i, S_i)} - \frac{1}{\hat{e}^{(-k(i))}(X_i, S_i)}\Big) \pi(X_i, S_i)| \Big| \hat{e}^{(-k(i))}\Big] \lesssim \frac{M}{\delta^2} \sqrt{v/n}.
\end{equation}

We now consider the term $(jj)$. Observe that we can write 
\begin{equation}
\begin{aligned} 
(jj) \le & \underbrace{\sup_{\pi \in \Pi}\Big| \frac{1}{n} \sum_{i=1}^n 
\Big(\frac{D_i }{\hat{e}(X_i, S_i)}  - \frac{D_i }{e(X_i, S_i)}\Big) (m_{1,S_i}(X_i) - \hat{m}_{1,S_i}(X_i)) S_i \pi(X_i, S_i)\Big|}_{(v)} \\ &+ \underbrace{\sup_{\pi \in \Pi}\Big| \frac{1}{n} \sum_{i=1}^n 
\Big(\frac{D_i }{ e(X_i, S_i)}  - 1\Big) (m_{1,S_i}(X_i) - \hat{m}_{1,S_i}(X_i)) \pi(X_i, S_i) S_i \Big|}_{(vv)}. 
\end{aligned} 
\end{equation} 
We consider each term seperately. Consider $(vv)$ first. Using the cross-fitting argument we obtain 
\begin{equation} 
\begin{aligned} 
&\sup_{\pi \in \Pi}\Big| \frac{1}{n} \sum_{i=1}^n 
\Big(\frac{D_i }{e(X_i, S_i)}  - 1\Big) (m_{1,S_i}(X_i) - \hat{m}_{1,S_i}(X_i)) \pi(X_i, S_i) S_i \Big| \\ &\le \sum_{k \in \{1, ..., K\}} \sup_{\pi \in \Pi}\Big| \frac{1}{n} \sum_{i \in \mathcal{I}_k} 
\Big(\frac{D_i }{p_s e(X_i, S_i)}  - 1\Big) (m_{1,S_i}(X_i) - \hat{m}_{1,S_i}^{(-k(i))}(X_i)) \pi(X_i, S_i) S_i\Big|. 
\end{aligned} 
\end{equation} 
Observe now that 
\begin{equation}
\mathbb{E}\Big[\Big(\frac{D_i}{e(X_i, S_i)}  - 1\Big) (m_{1,S_i}(X_i) - \hat{m}_{1,S_i}^{(-k(i))}(X_i)) \pi(X_i, S_i) S_i\Big| \hat{m}_{1,S_i}^{(-k(i))} \Big] = 0.  
\end{equation} 
Therefore, following the same argument discussed before, 
\begin{equation}
\mathbb{E}\Big[\sup_{\pi \in \Pi}\Big| \frac{1}{n} \sum_{i \in \mathcal{I}_k} 
\Big(\frac{D_i }{e(X_i, S_i)}  - 1\Big) (m_{1,S_i}(X_i) - \hat{m}_{1,S_i}^{(-k(i))}(X_i)) \pi(X_i, S_i) S_i \Big|\Big] \lesssim  \frac{M}{\delta^2} \sqrt{\frac{v}{n}} . 
\end{equation} 

We are now left to bound $(v)$. We obtain that 
\begin{equation}
(v) \le \sqrt{\frac{1}{n} \sum_{i=1}^n \Big(\frac{1}{ \hat{e}(X_i,S_i)} - \frac{1}{ e(X_i,S_i)}\Big)^2} \sqrt{\frac{1}{n} \sum_{i=1}^n (m_{1,S_i}(X_i) - \hat{m}_{1,S_i}(X_i))^2}. 
\end{equation}  
Using Jensen inequality and Assumption \ref{ass:dr} $\mathbb{E}[(v)] \lesssim n^{-1/2}$.

We now move to bound the expectation of $(II)$. First, observe that by Lemma \ref{lem:parity}, and standard properties of the double-robust estimator, we have that 
$$
\mathbb{E}\Big[\frac{\frac{1}{p_1 n} \sum_{i=1}^n \pi(X_i) S_i \Big\{ \frac{ (Y_i - m_1(X_i, S_i))D_i }{e(X_i, S_i)} +  m_1(X_i, S_i) \Big\} }{\mathbb{P}(\pi(X) = 1| S = s) }\Big] = \mathbb{E}\Big[Y(1) | \pi(X) = 1, S = 1\Big]. 
$$ 

Using the symmetrization argument (see \cite{van1996weak}), we have 
$$
(II) \le 2 \mathbb{E}\Big[\sup_{\pi \in \Pi} \Big| \frac{\frac{1}{p_1 n} \sum_{i=1}^n \pi(X_i) \sigma_i S_i \Big\{ \frac{ (Y_i - m_1(X_i, S_i))D_i }{e(X_i, S_i)} +  m_1(X_i, S_i) \Big\} }{\mathbb{P}(\pi(X) = 1| S = s) } \Big| \Big], 
$$ 
where $\{\sigma_i\}$ are $i.i.d.$ exogenous Radamacher random variables. Using the assumption that $P(\pi(X) = 1|S = s) \in (\kappa, 1 - \kappa)$, we write 
$$
\begin{aligned} 
&\mathbb{E}\Big[\sup_{\pi \in \Pi} \Big| \frac{\frac{1}{p_1 n} \sum_{i=1}^n \pi(X_i) \sigma_i S_i \Big\{ \frac{ (Y_i - m_1(X_i, S_i))D_i }{e(X_i, S_i)} +  m_1(X_i, S_i) \Big\} }{\mathbb{P}(\pi(X) = 1| S = s) } \Big| \Big] \\ &\le \frac{1}{\kappa} \mathbb{E}\Big[\sup_{\pi \in \Pi} \Big| \frac{1}{p_1 n} \sum_{i=1}^n \pi(X_i) \sigma_i S_i \Big\{ \frac{ (Y_i - m_1(X_i, S_i))D_i }{e(X_i, S_i)} +  m_1(X_i, S_i) \Big\}  \Big| \Big]. 
\end{aligned} 
$$  
We now proceed using a standard argument. Using the fact that each summand in the above expression are uniformly bounded, and $\Pi$ has finite VC-dimension, using the Dudley's entropy integral bound, it directly follows that 
$$
\mathbb{E}\Big[\sup_{\pi \in \Pi} \Big| \frac{1}{p_1 n} \sum_{i=1}^n \pi(X_i) \sigma_i S_i \Big\{ \frac{ (Y_i - m_1(X_i, S_i))D_i }{e(X_i, S_i)} +  m_1(X_i, S_i) \Big\}  \Big| \Big] \lesssim \sqrt{\frac{v}{n}}
$$  
which concludes the proof. 

\end{proof} 

\subsubsection{Counterfactual envy-freeness} \label{sec:counterfactual_app}

In this section we discuss estimation and guarantees for the counterfactual notion of fairness in Section \ref{sec:counterfactual}. 

We estimate $\mathcal{A}(\cdot)$ as: 
\begin{equation} \label{eqn:opt2}
\small 
\begin{aligned} 
\mathcal{A}_{n}(s,s'; \pi) = &\frac{1}{n \hat{p}_s} \sum_{i: S_i =s }  \Big\{\hat{m}_{1,s'}(X_i) \pi(X_i, s) +  \hat{m}_{0,s'}(X_i) (1 - \pi(X_i, s)) \Big\} \\ &- \frac{1}{n} \sum_{i=1}^n \Big\{ \hat{\Gamma}_{1, s,i}\pi(X_i,s) - \hat{\Gamma}_{0,s,i}(1 - \pi(X_i,s)) \Big\}. 
\end{aligned} 
\end{equation}  Whenever we aim not to discriminate in either direction, we take the sum of the effects $\mathcal{A}(s_1,s_2; \pi)$ and $\mathcal{A}(s_2,s_1; \pi)$,\footnote{Such an approach builds on the notion of ``social envy'' discussed in \cite{feldman1974fairness}.} 
and define counterfactual envy-freeness and its estimator as  
\begin{equation} \label{eqn:envy_free} 
\small 
\begin{aligned} 
 \mathrm{E}(\pi) = \mathcal{A}(1,0;\pi) + \mathcal{A}(0,1;\pi), \quad \hat{\mathrm{E}}(\pi) = \mathcal{A}_{n}(1,0; \pi) + \mathcal{A}_{n}(0,1; \pi). 
\end{aligned}  
 \end{equation}  

 \begin{ass} \label{ass:consistency} Assume that for some $\zeta > 0$, 
$
\mathbb{E}\Big[ \Big( \hat{m}_{d,s_1}(X_i(s_2)) - m_{d,s_1}(X_i(s_2))\Big)^2 \Big] = \mathcal{O}(n^{-2\zeta}), \forall s_1,s_2 \in \{0,1\}, d \in \{0,1\}. 
$
\end{ass} 
Assumption \ref{ass:consistency} states that the estimator of the conditional mean function for each sensitive attribute and treatment status $s, d \in \{0,1\}$, must converge to the true conditional mean function in mean-squared error at some arbitrary rate $2\zeta > 0$. Here, we require convergence in $l_2$ for a given sensitive attribute conditional on the \textit{opposite} sensitive attribute, due to the particular notion of fairness considered.\footnote{Namely, to estimate fairness, we need to extrapolate relative to the \textit{opposite} group.} 
 Examples include (i) linear regression models, of the form $m_{d,s}(x) = x\beta_s$, with $\beta_s$ being potentially high-dimensional, and bounded covariates; (ii) local polynomial estimators \citep{fan1996local}.

\begin{thm} \label{thm:parity}  Let Assumptions \ref{ass:moment}-\ref{ass:dr}, \ref{ass:unconfounded2} and \ref{ass:consistency} hold. Let $\mathrm{UnFairness}(\cdot) = \mathrm{E}(\cdot)$ and $\hat{\mathcal{V}}_n(\cdot) = \hat{E}(\cdot)$. Then for some constants $0 < \underline{b}, c_0 < \infty$ independent of the sample size, for any $\gamma \in (0,1),  \lambda \ge \underline{b} (\sqrt{v} + \sqrt{\log(2/\gamma)} + 1), N = \sqrt{n}$, 
with probability at least $1 - 2 \gamma$,  
$$
\mathrm{UnFairness}(\hat{\pi}) - \inf_{\pi \in \Pi_{\textrm{\mbox{\tiny o}}}} \mathrm{UnFairness}(\pi) \le c_0 \sqrt{\frac{v }{ n^{2\zeta}} } + c_0 \sqrt{\frac{\log(2/\gamma)}{n}}.   
$$ 
\end{thm}

The proof is in Appendix \ref{sec:proof_envy_freeness}. 
A corollary of Theorem \ref{thm:parity} is that under the parametric rate of convergence of the conditional mean function, the regret bound scales at rate $n^{-1/2}$. Interestingly, the convergence rate is of order slower than $n^{-1/2}$ for non-parametric estimators compared to the notions of UnFairness discussed in Section \ref{sec:4}. The slower convergence rate is because counterfactual envy-freeness requires estimating the conditional mean function on the population with attribute $S = s_1$ while averaging over the covariates' distribution with the opposite attribute, therefore requiring extrapolation.  
This result showcases the \textit{trade-off} in the choice of a counterfactual notion of unfairness relative to predictive ones.

\subsubsection{Envy-freeness UnFairness: Proofs} \label{sec:proof_envy_freeness}

\begin{lem} \label{lem:helper1} Under Assumption\ref{ass:moment}, \ref{ass:overlap}, \ref{ass:dr}, \ref{ass:unconfounded2}, \ref{ass:consistency}, the following holds:  with probability at least $1 - \gamma$ ,  
\begin{equation}
\sup_{\pi \in \Pi} \Big| \mathcal{A}(s,s';\pi) - \mathcal{A}_{n}(s,s';\pi) \Big| \le \frac{c M}{\delta^2} \sqrt{\frac{\log(2/\gamma)}{n}}  + \frac{c}{\delta} n^{-\eta} + \sqrt{\frac{v}{n}}
\end{equation} 
for a universal constant $c < \infty$. 
\end{lem} 
\vspace{5 mm}
\begin{proof}[Proof of Lemma \ref{lem:helper1}] We consider the case where $s' \neq s$, whereas $s' = s$ follows trivially. 
Observe that we can write 
\begin{equation}
\small 
\begin{aligned} 
&\sup_{\pi \in \Pi} \Big| \mathcal{A}(s,s';\pi) - \mathcal{A}_{n}(s,s';\pi) \Big| \le \\ &\underbrace{\sup_{\pi \in \Pi} \Big|\mathbb{E}_{X(s)}\Big[V_{\pi(X(s),s)}(X(s), s')\Big] - \frac{1}{n} \sum_{i=1}^n \Big(\frac{1\{S_i = s\}}{\hat{p}_s} \hat{m}_{1,s'}(X_i) \pi(X_i,s) + \frac{1\{S_i = s\}}{\hat{p}_s}  \hat{m}_{0,s'}(X_i)(1 -  \pi(X_i,s)) \Big) \Big|}_{(A)} \\ 
&+  \underbrace{\sup_{\pi \in \Pi} |\hat{W}_{s'}(\pi) - W_{s'}(\pi)|}_{(B)}. 
\end{aligned} 
\end{equation} 
The term (B) is bounded as discussed in Lemma \ref{lem:dr}. Therefore, we are only left to discuss bounds on (A). To derive bounds in such a scenario, we first observe that we can write 
$$
\small 
\begin{aligned} 
&\sup_{\pi \in \Pi} \Big|\mathbb{E}_{X(s)}\Big[V_{\pi(X(s),s)}(X(s), s')\Big] - \frac{1}{n} \sum_{i=1}^n \Big(\frac{1\{S_i = s\}}{\hat{p}_s} \hat{m}_{1,s'}(X_i) \pi(X_i,s) + \frac{1\{S_i = s\}}{\hat{p}_s}  \hat{m}_{0,s'}(X_i)(1 -  \pi(X_i,s)) \Big) \Big| \\&\le \underbrace{\sup_{\pi \in \Pi} \Big|\frac{1}{n} \sum_{i=1}^n \frac{1\{S_i = s\}}{p_s} m_{1,s'}(X_i) \pi(X_i,s) - \mathbb{E}\Big[\frac{1\{S_i = s\}}{p_s} m_{1,s'}(X_i) \pi(X_i,s)\Big] \Big|}_{(I)}  \\&+\underbrace{\sup_{\pi \in \Pi} \Big|\frac{1}{n} \sum_{i=1}^n \frac{1\{S_i = s\}}{p_s} m_{0,s'}(X_i) (1 - \pi(X_i,s)) - \mathbb{E}\Big[\frac{1\{S_i = s\}}{p_s} m_{0,s'}(X_i) (1 - \pi(X_i,s))\Big] \Big|}_{(II)} \\&+\underbrace{\sup_{\pi \in \Pi} \Big|\frac{1}{n} \sum_{i=1}^n \Big(\frac{1\{S_i = s\}}{\hat{p}_s} \hat{m}_{1,s'}(X_i) - \frac{1\{S_i = s\}}{p_s} m_{1,s'}(X_i)\Big) \pi(X_i,s)\Big|}_{(III)} \\ &+ \underbrace{\sup_{\pi \in \Pi} \Big|\frac{1}{n} \sum_{i=1}^n \Big(\frac{1\{S_i = s\}}{\hat{p}_s} \hat{m}_{0,s'}(X_i) - \frac{1\{S_i = s\}}{p_s} m_{0,s'}(X_i)\Big) (1 - \pi(X_i,s))\Big|}_{(IV)}.
\end{aligned} 
$$
We discuss (I) and (III), whereas (II) and (IV) follow similarly. Observe first that by Assumption \ref{ass:moment} and the bounded difference inequality, with probability $1  -\gamma$, 
$$
\begin{aligned} 
&\sup_{\pi \in \Pi} \Big|\frac{1}{n} \sum_{i=1}^n \frac{1\{S_i = s\}}{p_s} m_{1,s'}(X_i) \pi(X_i,s) - \mathbb{E}\Big[\frac{1\{S_i = s\}}{p_s} m_{1,s'}(X_i) \pi(X_i,s)\Big] \Big|\\ 
&\le \mathbb{E}\Big[\sup_{\pi \in \Pi} \Big|\frac{1}{n} \sum_{i=1}^n \frac{1\{S_i = s\}}{p_s} m_{1,s'}(X_i) \pi(X_i,s) - \mathbb{E}\Big[\frac{1\{S_i = s\}}{p_s} m_{1,s'}(X_i) \pi(X_i,s)\Big] \Big|\Big] + \bar{C} \frac{M}{\delta} \sqrt{\log(2/\gamma)/n}
\end{aligned} 
$$
for a constant $\bar{C} < \infty$. Under Assumption  \ref{ass:unconfounded2} each summand is centered around zero. Using the symmetrization argument \citep{van1996weak}, we have 
$$
\begin{aligned} 
&\mathbb{E}\Big[\sup_{\pi \in \Pi} \Big|\frac{1}{n} \sum_{i=1}^n \frac{1\{S_i = s\}}{p_s} m_{1,s'}(X_i) \pi(X_i,s) - \mathbb{E}\Big[\frac{1\{S_i = s\}}{p_s} m_{1,s'}(X_i) \pi(X_i,s)\Big] \Big|\Big] \le \\ 
&2\mathbb{E}\Big[\sup_{\pi \in \Pi} \Big|\frac{1}{n} \sum_{i=1}^n \sigma_i \frac{1\{S_i = s\}}{p_s} m_{1,s'}(X_i) \pi(X_i,s)\Big|\Big]
\end{aligned} 
$$
where $\sigma_i$ are $i.i.d.$ Radamacher random variables. Since $m_{1,s}$ is uniformly bounded and similarly $p_s$ is bounded, and by Assumption \ref{ass:moment}, we obtain by the properties of the Dudley's entropy integral \citep{wainwright2019high}, 
$$
\mathbb{E}\Big[\sup_{\pi \in \Pi} \Big|\frac{1}{n} \sum_{i=1}^n \sigma_i \frac{1\{S_i = s\}}{p_s} m_{1,s'}(X_i) \pi(X_i,s)\Big|\Big] \le \bar{C}\frac{M}{\delta} \sqrt{v/n}
$$
for a universal constant $\bar{C} < \infty$. We now move to bound (III). 
Using the triangular inequality and Holder's inequality, we obtain 
\begin{equation}
\begin{aligned}  
(III) \le \frac{1}{n} \sum_{i=1}^n \frac{1\{S_i = s\}}{p_s} \Big| m_{1,s'}(X_i) - \hat{m}_{1,s'}(X_i)\Big| 
\end{aligned} 
\end{equation} 
The above bound is deterministic and it does not depend on $\pi$. 
Observe now that by consistency of potential outcomes and covariates
\begin{equation} \label{eqn:khg}
\begin{aligned} 
 &\frac{1}{n} \sum_{i=1}^n \frac{1\{S_i = s\}}{p_s} \Big| m_{1,s'}(X_i) - \hat{m}_{1,s'}(X_i)\Big|   \\ =&\frac{1}{n} \sum_{i=1}^n \frac{1\{S_i = s\}}{p_s} \Big| m_{1,s'}(X_i(s)) - \hat{m}_{1,s'}(X_i(s))\Big| \le \frac{1}{n \delta} \sum_{i=1}^n \Big| m_{1,s'}(X_i(s)) - \hat{m}_{1,s'}(X_i(s))\Big|. 
 \end{aligned} 
\end{equation} 
We now separate the contribution of each of the $K$ folds using in the cross-fitting algorithm. Namely, we define  
\begin{equation} 
\frac{1}{n \delta} \sum_{i=1}^n \Big| m_{1,s'}(X_i(s)) - \hat{m}_{1,s'}(X_i(s))\Big|\le \sum_{k \in \{1, ..., K\}} \frac{1}{n \delta} \sum_{i \in \mathcal{I}_k} \Big| m_{1,s'}(X_i(s)) - \hat{m}_{1,s'}^{(-k(i))}(X_i(s))\Big| 
\end{equation} 
where $\mathcal{I}_k$ denotes the set of indexes in fold $k$, and $\hat{m}_{1,s'}^{(-k(i))}$ denotes the estimator obtained from all folds except $k$. Next, we bound the following term using Liaponuv inequality: 
\begin{equation}
\begin{aligned}  
&\frac{1}{n} \sum_{i \in \mathcal{I}_k}  \mathbb{E}\Big[|m_{1,s'}(X_i(s)) - \hat{m}_{1,s'}(X_i(s))| \Big] \lesssim \sqrt{\mathbb{E}\Big[|m_{1,s'}(X_i(s)) - \hat{m}_{1,s'}(X_i(s))|^2 \Big]} \le c n^{-\eta}.
\end{aligned}   
\end{equation} 
The last inequality follows by Assumption \ref{ass:consistency}, for a universal constant $c < \infty$. Finally, we discuss exponential concentration of the empirical counterpart. By boundeness of $\hat{m}$ in Assumption \ref{ass:dr}, we have  
\begin{equation} 
\sup_{x \in \mathcal{X}} \big| m_{d,s'}(x) - \hat{m}_{d,s'}(x) \Big| \le 2M. 
\end{equation}
By the bounded difference inequality, with probability at least $1 - \gamma$, 
\begin{equation}
\frac{1}{n} \sum_{i \in \mathcal{I}_k} \Big|m_{1,s'}(X_i(s)) - \hat{m}_{1,s'}(X_i(s)) \Big| \le \mathbb{E}\Big[\Big|m_{1,s'}(X_i(s)) - \hat{m}_{1,s'}(X_i(s)) \Big|\Big] + 4 M \sqrt{\log(2/\gamma)/n}.   
\end{equation}  
Combining the above bounds, the proof completes. 
\end{proof} 

\begin{cor} Theorem \ref{thm:parity} holds.  
\end{cor}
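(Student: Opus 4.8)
The plan is to recognize Theorem \ref{thm:parity} as a direct instance of the general regret bound in Theorem \ref{thm:2}. Once the high-level uniform-consistency condition of Assumption \ref{ass:unique2} is verified for $\mathrm{UnFairness} = \mathrm{E}$ and $\hat{\mathcal{V}}_n = \hat{\mathrm{E}}$, the claimed inequality follows by specializing the rate $\eta$ and the complexity constant $\mathcal{K}(\Pi,\gamma)$. Concretely, I would first reproduce the regret decomposition from the proof of Theorem \ref{thm:2}: for $\lambda \ge \underline{b}(\sqrt{v}+\sqrt{\log(1/\gamma)}+1)$ and $N=\sqrt{n}$, Theorem \ref{thm:1b} guarantees that, on an event of probability at least $1-\gamma$, $\Pi_{\mbox{\tiny o}} \subseteq \widehat{\Pi}_{\mbox{\tiny o}}(\lambda)$; on that event the regret is bounded by $2\sup_{\pi \in \Pi}|\mathrm{E}(\pi) - \hat{\mathrm{E}}(\pi)|$, using that $\widehat{\Pi}_{\mbox{\tiny o}}(\lambda) \subseteq \Pi$ and that $\hat{\pi}_\lambda \in \widehat{\Pi}_{\mbox{\tiny o}}(\lambda)$. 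Thus everything reduces to a uniform deviation bound for the envy-freeness criterion.

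Next I would control $\sup_{\pi \in \Pi}|\mathrm{E}(\pi) - \hat{\mathrm{E}}(\pi)|$ using the definitions $\mathrm{E}(\pi) = \mathcal{A}(1,0;\pi) + \mathcal{A}(0,1;\pi)$ and $\hat{\mathrm{E}}(\pi) = \mathcal{A}_n(1,0;\pi) + \mathcal{A}_n(0,1;\pi)$ from \eqref{eqn:envy_free}. The triangle inequality gives
$$
\sup_{\pi \in \Pi}\big|\mathrm{E}(\pi) - \hat{\mathrm{E}}(\pi)\big| \le \sup_{\pi \in \Pi}\big|\mathcal{A}(1,0;\pi) - \mathcal{A}_n(1,0;\pi)\big| + \sup_{\pi \in \Pi}\big|\mathcal{A}(0,1;\pi) - \mathcal{A}_n(0,1;\pi)\big|.
$$
I would then apply Lemma \ref{lem:helper1} to each summand, with $(s,s')=(1,0)$ and $(s,s')=(0,1)$ respectively, each at confidence level $\gamma/2$, and union-bound the two failure events. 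Each application contributes a Dudley/Rademacher term of order $\sqrt{v/n}$, the slow cross-group extrapolation term of order $n^{-\zeta}$ coming from Assumption \ref{ass:consistency}, and a bounded-difference fluctuation term of order $\sqrt{\log(1/\gamma)/n}$. Using $\zeta \le 1/2$ (the mean-squared error of $\hat m$ cannot decay faster than the parametric rate) so that $\sqrt{v/n} \le \sqrt{v}\,n^{-\zeta} = \sqrt{v/n^{2\zeta}}$, and $n^{-\zeta}\le \sqrt{v}\,n^{-\zeta}$ since $v\ge 1$, I obtain with probability at least $1-\gamma$ a bound of the form $c_0\sqrt{v/n^{2\zeta}} + c_0\sqrt{\log(2/\gamma)/n}$; that is, Assumption \ref{ass:unique2} holds with $\eta = \zeta$ and $\mathcal{K}(\Pi,\gamma) \asymp \sqrt{v}+\sqrt{\log(1/\gamma)}$. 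A final union bound with the frontier-containment event of Theorem \ref{thm:1b} yields the stated probability $1-2\gamma$, and substituting into the Theorem \ref{thm:2} decomposition gives exactly the right-hand side of the theorem.

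The substantive content is entirely in Lemma \ref{lem:helper1}, so the corollary itself is a short assembly; the main obstacle is the one already resolved there. That obstacle is that in $\mathcal{A}_n(s,s';\pi)$ the conditional-mean estimator $\hat m_{d,s'}$ is evaluated on, and averaged over, the covariate distribution of the \emph{opposite} group $S=s$. This cross-group extrapolation destroys the usual double-robust orthogonality that produces a product-of-errors cancellation; in its place one can only invoke the $l_2$ (equivalently uniform) consistency of $\hat m_{d,s'}$ at rate $n^{-\zeta}$, which is why the leading term degrades from $1/\sqrt n$ to $n^{-\zeta}$. I would make sure the boundedness of $\hat m$ from Assumption \ref{ass:dr} is used to justify the bounded-difference concentration of the plug-in average, and that the cross-fitting folds are treated by splitting each empirical average over folds as in Lemma \ref{lem:helper1}, so that the nuisances are independent of the fold on which they are evaluated. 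Under the parametric rate $\zeta = 1/2$ the bound collapses to $1/\sqrt{n}$, recovering the rate of the distributional notions of Section \ref{sec:examples} and confirming the advertised trade-off of the counterfactual criterion.
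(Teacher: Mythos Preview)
Your proposal is correct and follows essentially the same route as the paper: combine the general regret decomposition of Theorem \ref{thm:2} (which in turn rests on the frontier-containment result Theorem \ref{thm:1b}) with the uniform deviation bound for $\mathcal{A}(s,s';\pi)$ supplied by Lemma \ref{lem:helper1}, applied to both orderings $(s,s')$ and union-bounded. The paper's one-line proof cites exactly this pair of results (the label references there appear to be typos for Theorem \ref{thm:2} and Lemma \ref{lem:helper1}), and your expansion---including the observation that $\zeta\le 1/2$ is needed to absorb the $\sqrt{v/n}$ Dudley term into $\sqrt{v/n^{2\zeta}}$---fills in precisely what that one-liner is meant to convey.
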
 

\begin{proof} This follows from Theorem \ref{thm:parity} and Lemma \ref{lem:helper2}.  
\end{proof} 

\subsection{Multi-action policies} \label{sec:multi_actions}

In this subsection, we discuss how our results extend to multi-action policies. Using Theorem 1 in \cite{zhou2018offline}, assuming a bounded entropy integral with respect to $\Pi$ (Assumption 3 in \cite{zhou2018offline}), and assumptions in Theorem \ref{thm:1b}, it is easy to show that with probability at least $1 - \gamma$ 
$$
\Big|W_d(\pi) - \hat{W}_d(\pi)\Big| = \mathcal{O}\Big(\frac{\kappa(\Pi)}{\sqrt{n}}\Big) + o(1/\sqrt{n}) + \mathcal{O}\Big(\sqrt{\frac{\log(2/\gamma)}{n}}\Big)
$$ 
where $\kappa(\Pi) = \int_0^1 \sqrt{\log(\mathcal{N}(\Pi, \epsilon^2))} d\epsilon$ and $\mathcal{N}(\Pi, \epsilon)$ is the covering number for the function class $\Pi$.\footnote{The reader may refer to \cite{wainwright2019high} and Definition 4 in \cite{zhou2018offline} for more discussion.} Here the first two terms follow directly from the bound on the Rademacher complexity in \cite{zhou2018offline} and standard symmetrization arguments \citep{van1996weak}, while the last term follows from the bounded difference inequality \citep{wainwright2019high} and leverages bounded estimated conditional mean and propensity score.\footnote{Bounded estimated nuisance functions is not required if we interpret our results as asymptotic in the spirit of \cite{athey2017efficient}.}  Given such concentration result, it is easy to show that the rest of our proofs of Theorems \ref{lem:cons_dr}, \ref{thm:1b}, \ref{thm:2}, do not require binary actions, and follow without additional modifications for a finite number of actions.

 \section{Numerical Studies and Empirical Application: Further Results and Details} \label{sec:a3}

 \subsection{Empirical Application} \label{sec:a3a}

 \paragraph{Estimation details}  We control for confounding of the treatment assignment by estimating the probability of treatment using a penalized logistic regression, where we condition on the non-Caucasian attribute, gender, the average score, years to graduation, whether the individual had previously had entrepreneurship activities, the startup region (which a dummy since only two regions are considered), the degree (either engineer or business) and the school rank. We estimate the outcome using a penalized logistic regression, after conditioning on the above covariates, and any interaction term between gender, treatment assignment, and a vector of covariates, which include years to graduation, prior entrepreneurship, startup region, and the school rank. We estimate treatment effects using a doubly robust estimator. We use cross-fitting with five folds in our estimation. 
 
 \paragraph{Additional results for probabilistic treatment assignments} We also consider in our analysis the class of \textit{probabilistic} assignment rules, which assign treatments with a probability decision as in Equation \eqref{eqn:probabilistic}.   Results are collected in Figure \ref{fig:front_prob}, where we observe that the set of probabilistic decision Pareto dominates the determinitic ones up-to a small optimization error.

\begin{figure}
\centering 
\includegraphics[scale=0.4]{./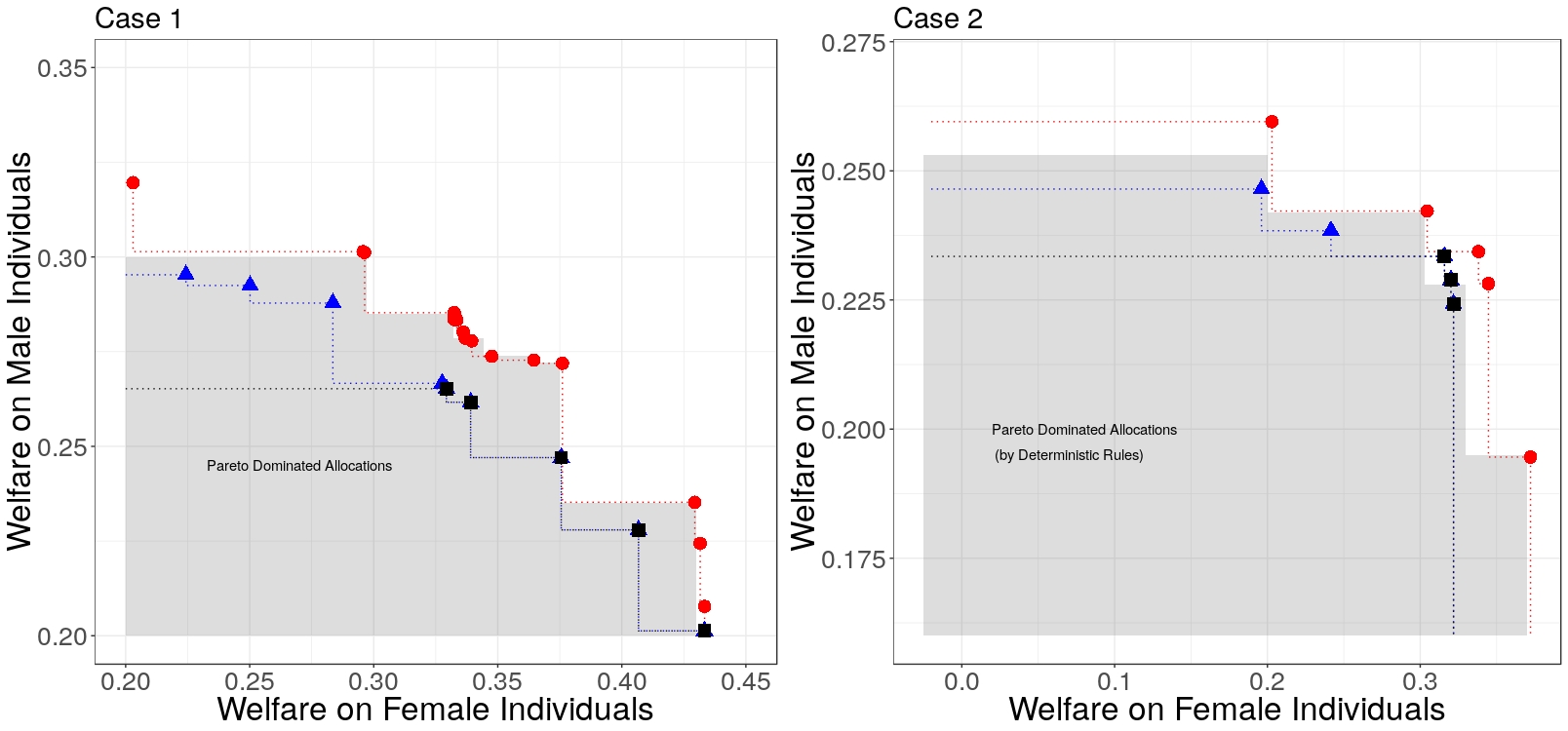} 
\caption{Empirical application. (Discretized) Pareto frontier under \textit{probabilistic} linear policy rule estimated through MIQP. Dots denote Pareto optimal allocations. Red dots (circle) correspond to $\Pi_1$, blue dots (triangle) to $\Pi_2$ and black dots (square) to $\Pi_3$. The gray area denotes the set of allocations dominated by a \textit{deterministic} decision rule. 
} \label{fig:front_prob}
\end{figure} 
 
 \subsection{Numerical Studies} \label{sec:a3b}

In this subsection, we include additional results for the numerical studies. In Figure \ref{fig:stopping_time} we report the computational time for different number of covariates. The figure shows that the linear probability rule and optimal tree present better scalability than the maximum score, while the four methods can still be feasibly implemented for $n = 600$. Figure \ref{fig:p_3} presents results for the different function classes for $p =3$ (instead of $p = 4$) covariates. 
Figure \ref{fig:impact} reports welfare comparisons for the disparate impact method for female and male participants. Table \ref{tab:mse2} reports comparisons for $n = 400$. In the table, we observe that a smaller sample size tends to decrease the performance of each method, as expected. We still observe that the proposed method leads to the largest fairness across all designs. While for $n = 600$, the proposed method is never Pareto dominated, here also the method is never Pareto dominated with a single exception occurring for the maximum score, where we observe a slight dominance for welfare but not fairness which might occur with small sample sizes. In all remaining cases, the proposed method leads to strictly larger welfare for the female students with respect to all competitors.

   \begin{figure}
\centering 
\includegraphics[scale=0.6]{./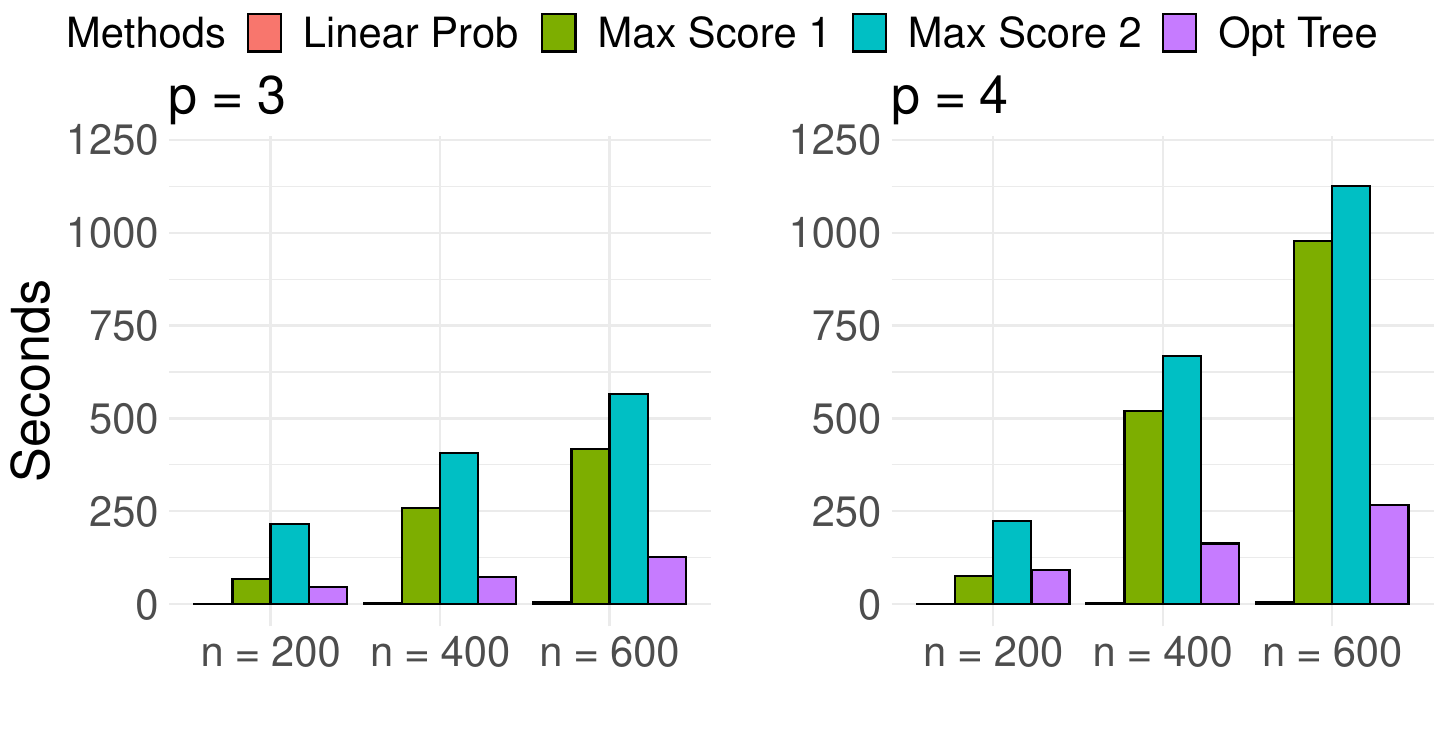}
\caption{Computational time in seconds for different number of covariates $p$ and sample sizes $n$. Here, Linear Prob is a linear probability rule estimated via linear programming, maximum score is estimated with mixed-integer linear program and optimal tree via exhaustive search. The maximum score algorithm presents two different stopping times (Max Score 1 and Max Score 2).} \label{fig:stopping_time}
\end{figure}

   \begin{figure}
\centering 
\includegraphics[scale=0.6]{./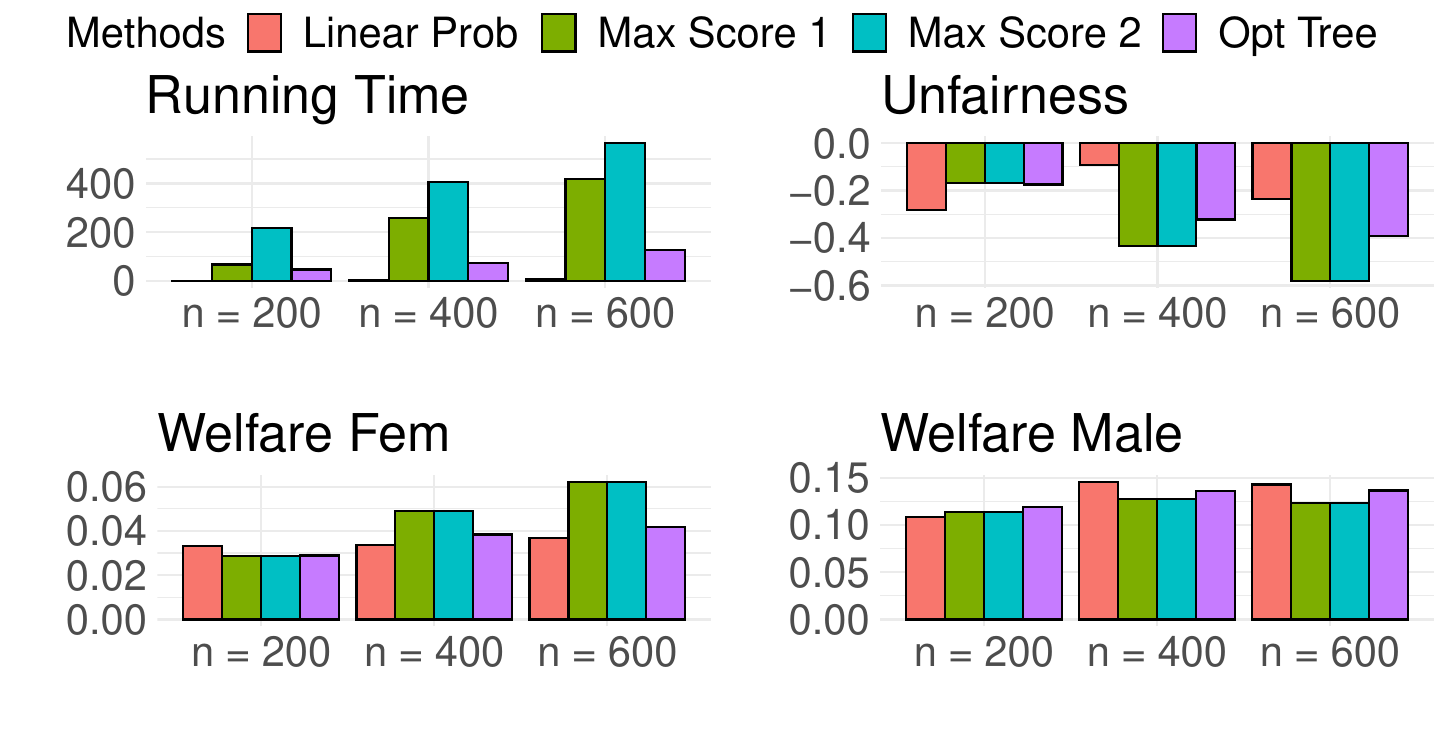}
\caption{$p = 3$. Running time, unfairness and welfare  as a function of the sample size and covariates. Here, Linear Prob is a linear probability rule estimated via linear programming, maximum score is estimated with mixed-integer linear program and optimal tree via exhaustive search. The maximum score algorithm presents two different stopping times (Max Score 1 and Max Score 2).} \label{fig:p_3}
\end{figure}

\begin{figure}
\centering
\includegraphics[scale=0.5]{./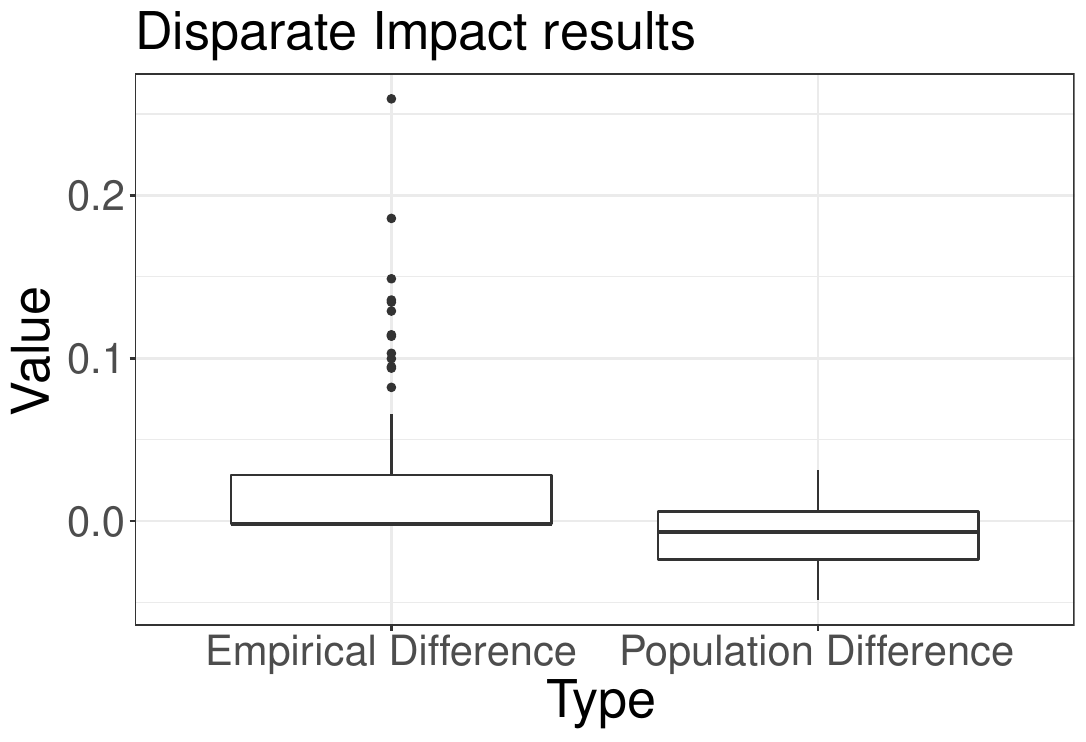}
\caption{Difference between females and males' welfare for the method Disparate impact in Table \ref{tab:mse2}, with $\kappa = 1$. The left-hand side panel shows the empirical difference (showing that the constraint is attained) and the right-hand side panel the population counterpart.} 
\label{fig:impact} 
\end{figure}

\begin{table*}[!ht]\centering
\caption{Statistical disparity (Unfairness), welfare of male ($W_0$) and female ($W_1$) participants of the proposed method (Fair Targeting) and of the alternative procedures in \textit{percentage points}. Weighted average maximizes a weighted average of females and males' welfare with weight $\alpha = 1/2$; Utilitarian average uses instead $\alpha = \mathbb{E}_n[S]$; Constrained Max maximizes welfare under fairness constrain and Disparate Impact maximizes welfare under constraints on disparate welfare impact between the two groups. $n = 400, p = 4$. The constraint is $\kappa = 10$ for the methods in the fourth and fifth row and $\kappa = 1$ for the last two rows. }    \label{tab:mse3} 
\begin{tabular}{@{}lrrrcrrrcrrr@{}}\toprule
& \multicolumn{3}{c}{Linear Rule} & \ & \multicolumn{3}{c}{Maximum Score} & \ & \multicolumn{3}{c}{Tree} \\
\cmidrule{2-4} \cmidrule{6-8}  \cmidrule{10-12}  
&\small UnFair&\small $W_0$ &\small $W_1$ && \small UnFair&\small  $W_0$  & \small $W_1$ && \small UnFair & \small $W_0$  & \small $W_1$  \\
 \Xhline{.8pt} 
 \cline{2-4}  \cline{6-8}  \cline{10-12} 
 \rowcolor{glaucous!60}  \small Fair Targeting & $-19.7$ & $12.2$ & $6.1$&& $-46.4$ & $12.7$ & $5.3$&&$-32.7$&12.7&6.0\\
\small Weighted Average  & $20.7$ & $15$ & $4.4$ && $7.5$ & $15.3$ & $4.3$&& $-1.9$ & $14.5$ & $5.5$  \\
\small Utilitarian Average  & $7.5$ & $15.3$ & $4.3$ && $7.5$ & $15.3$ & $4.3$&& $7.5$ & $15.3$ & $4.3$  \\
\small Constrained Max & $-8.5$ & $14.6$ & $4.6$&&$-18.7$ & $14$ & $5$ &&$-16.8$ & $14.5$ & $5$\\
\small Disparate Impact & $6.3$ & $12.8$ & $4.6$&&$-0.2$ & $14.4$ & $5.5$ &&$2.9$ & $14.1$ & $5$\\
\small Constrained Max2 & $-12.3$ & $14.2$ & $4.7$&&$-21.7$ & $13.7$ & $4.9$ &&$-19.3$ & $14.5$ & $5$\\
\small Disparate Impact2 & $-1.6$ & $11.2$ & $4.8$&&$-6.2$ & $14.2$ & $5.4$ &&$-2.2$ & $13.9$ & $5.2$\\
\bottomrule
\end{tabular}
\end{table*}

 \bibliography{my_bib2}
\bibliographystyle{chicago}

\end{document}